\documentclass[lettersize,journal]{IEEEtran}

\usepackage{amsmath,amssymb,mathtools}
\usepackage{bm}
\usepackage{bbold}
\usepackage{enumitem}
\usepackage{booktabs}
\usepackage{hyperref}
\usepackage{cleveref}
\usepackage{algorithm}
\usepackage{algorithmicx}
\usepackage{algpseudocode}
\usepackage{amsthm}
\usepackage[table]{xcolor}

\newtheorem{remark}{Remark}
\newtheorem{theorem}{Theorem}
\newtheorem{corollary}{Corollary}
\newtheorem{proposition}{Proposition}

\begin{document}

\title{Joint Optimization of Qubit Leasing and Quantum Circuit Distribution}

\author{Anoushka Dey, Gaurav S. Kasbekar,~\IEEEmembership{Senior Member,~IEEE}
\thanks{The authors are with the Department of Electrical Engineering, Indian Institute of Technology (IIT) Bombay, Mumbai, 400076, India. Their email addresses are anoushkadey12@gmail.com and gskasbekar@ee.iitb.ac.in.}}
\maketitle

\begin{abstract}
We consider an agent, who would like to execute a given quantum circuit using resources leased from a set of quantum computers (QCs) connected by a quantum network. For this purpose, the agent needs to make the following four key decisions: (i) how many qubits to lease from each  QC, (ii) at which QCs to store different circuit qubits in different time slots, (iii) at which QC to execute each gate in the circuit, and (iv) when to teleport qubits between different pairs of QCs. We refer to this problem facing the agent as the joint qubit leasing and quantum circuit distribution (JQLQCD) problem, and provide a comprehensive integer linear programming (ILP) formulation for it.  We show that the JQLQCD problem is NP-complete. Next, we identify several special cases in which the problem can be optimally solved in closed form or via polynomial-time algorithms. Also, we propose a greedy algorithm with local search refinement to solve large instances of the general JQLQCD problem. Finally, using extensive numerical computations, we show that the proposed greedy algorithm achieves solutions within $15 \, \%$ of simulated annealing on general problem instances, while being $8-40\times$ faster, and that it attains the optimal solution in two of the special cases for which polynomial-time optimal algorithms are available.
\end{abstract}

\begin{IEEEkeywords}
Quantum Computers, Qubits, Quantum Circuit Distribution, Qubit Leasing, Integer Linear Program
\end{IEEEkeywords}

\section{Introduction}

Quantum computing promises exponential speedups for certain computational problems, but current quantum computers (QCs) face severe limitations in qubit count and coherence time \cite{caleffi2024distributed}. Distributing quantum computations across multiple QCs connected by a quantum network offers a path to overcoming the constraints of a single device and scaling up operations \cite{barral2025review,larasati2025towards,Caleffi2022DistributedQC}.
However, this distribution introduces significant challenges in resource allocation, qubit placement, and communication management \cite{caleffi2024distributed}.

The problem of mapping quantum circuits to hardware has been extensively studied for single QCs \cite{Siraichi2018QubitAF, Zulehner2018EfficientMP, Murali2019NoiseAdaptiveCF, 
Tannu2019NotAT, Murali2020SoftwareMF, Wille2023MQTQMAP, Peham2023OptimalSubarch,Wille2016OptimalQCM,guo2024smt,Yan2024QuantumCircuitSurvey,ruan2025powermove,letras2025towards,zhu2025quantum,yang2025qubit,stade2025routing,quetschlich2025mqt,molavi2026generating,swierkowska2024achieving,ghlib2026scalable}, where the primary challenge is mapping logical qubits to physical qubits subject to connectivity constraints and gate fidelity considerations.
As quantum devices grow and quantum networks emerge, \emph{distributed quantum computing} 
has gained attention as a promising approach to overcoming the limitations of 
individual devices~\cite{caleffi2024distributed,barral2025review,larasati2025towards,Caleffi2022DistributedQC}. Various aspects of distributed quantum computing, including compilation \cite{Ferrari2021CompilerSO,ferrari2023modular,liu2025ecdqc,promponas2025compiler}, qubit allocation and circuit optimization \cite{mao2023qubit,sunkel2025time,sunkel2025evolutionary}, distributed quantum computing architectures \cite{Caleffi2022DistributedQC, Cuomo2020TowardsAD}, and circuit partitioning strategies \cite{g2021efficient,sundaram2022distribution,Andres-Martinez2019AutomatedDO, Daei2020OptimizedQC,pastor2024circuit,burt2024generalised,burt2026multilevel,wu2026efficient,kaur2025optimized,burt2025entanglement,yang2026distributing,yang2025efficient} have been studied in prior work. Also, quantum communication, entanglement distribution, resource allocation in quantum networks, and quantum cloud computing have been investigated in several papers \cite{Bennett1993TeleportingAU, Briegel1998QuantumRP, gu2024fendi, halder2024optimal, Inesta2023EntanglementDistribution, gu2025cost,fan2025distribution,fan2025optimized, Cicconetti2023ResourceAllocation, LaRose2019OverviewPQ,nguyen2024quantum}.

As the number and scale of commercial providers of resources such as qubits and gates for distributed quantum computing are expected to increase rapidly in the near future \cite{caleffi2024distributed}, an agent who would like to execute a given quantum circuit in a distributed manner must be able to efficiently determine the numbers of resources to lease out from different providers, in addition to solving the traditional problem of partitioning the circuit across the available QCs. Also, the   numbers of resources to lease from different QCs and the partitioning of the circuit must be \emph{jointly optimized} to effectively achieve various objectives, including minimization of the leasing cost, communication cost, and makespan (circuit completion time). Although the problem of quantum circuit partitioning has been extensively studied in prior work \cite{g2021efficient,sundaram2022distribution,Andres-Martinez2019AutomatedDO, Daei2020OptimizedQC,pastor2024circuit,burt2024generalised,burt2026multilevel,wu2026efficient,kaur2025optimized,burt2025entanglement,yang2026distributing,yang2025efficient}, to the best of our knowledge, joint optimization of resource leasing and quantum circuit distribution has not been addressed. This is the space in which we contribute in this paper.              


We consider an agent, who would like to execute a quantum circuit with a given set of logical qubits. Also, there is a set of QCs connected by a quantum network. Each QC from this set offers resources for lease, subject to its storage capacity, which  represents the maximum number of qubits that can be stored at the QC, and its execution capacity, which  represents the maximum number of qubits that can be actively processed during a time slot. The agent needs to execute its quantum circuit using resources leased from different QCs. For this purpose, the agent makes the following four key decisions: (i) how many qubits to lease from each  QC, (ii) at which QCs to store different circuit qubits in different time slots, (iii) which QC executes each gate in the circuit, and (iv) when to teleport qubits between different pairs of QCs. We refer to this problem facing the agent as the \emph{joint qubit leasing and quantum circuit distribution (JQLQCD) problem}. The goal is to minimize the total system cost subject to capacity and execution constraints. This total cost balances multiple components: the leasing costs for storage and execution capacity at different QCs, the gate execution costs, which vary across different QCs, the communication costs for qubit teleportation, and the circuit makespan. A challenge is that different QCs may have heterogeneous capabilities, costs, and connectivity. E.g., some QCs may offer low leasing costs, but limited gate support or costly qubit teleportations, while others may have higher leasing costs, but provide faster execution or better connectivity to other QCs. Some additional challenges are to ensure that circuit precedence constraints and QC capacity limits are satisfied  and to determine the temporal scheduling of gate executions to minimize the overall makespan, while keeping costs low.
We present an integer linear programming (ILP) formulation of the JQLQCD problem that explicitly models leasing costs, gate execution costs, and  qubit relocation via teleportation, which uses pre-shared entanglement \cite{Bennett1993TeleportingAU}.   

The main contributions of this paper are as follows:
\begin{itemize}
    \item We provide a comprehensive ILP formulation for the JQLQCD problem, which features distributed quantum circuit execution for an agent, with explicit modeling of teleportation with distance-dependent costs. Unlike prior work in which QCs autonomously negotiate or operate under federated control \cite{Andres-Martinez2019AutomatedDO}, our model features passive QCs that provide resources to a single decision-making agent. This architecture reflects emerging quantum cloud platforms where users rent resources from multiple providers~\cite{LaRose2019OverviewPQ}.
    \item We show that the JQLQCD problem is NP-complete by proving that the multiprocessor scheduling problem with precedence constraints  (denoted $P|\text{prec}|C_{\max}$ in standard scheduling notation), which is known to be NP-complete \cite{Ullman1975, GareyJohnson1979, Lenstra1977}, is polynomial-time reducible to it. 
    \item Although the general JQLQCD problem is NP-complete, several special cases admit efficient solutions. We identify several special cases in which the problem can be solved optimally in closed form or via polynomial-time algorithms, including those in which (A) there is an unlimited capacity QC in a heterogeneous network,  (B) homogeneous QCs with zero teleportation cost, (C) chain topology with sequential gates, (D)  independent subcircuits with partitioned resources, (E) infinite resources with makespan minimization only, and (F) a tree-structured circuit with an arbitrary QC network. For case (A), we derive necessary and sufficient conditions for centralized execution on a single QC with unlimited resources to be optimal, whereas for cases (B) to (F), we provide polynomial-time algorithms for optimal solution of the JQLQCD problem.  
    \item We propose a greedy heuristic algorithm with local search refinement to solve large instances of the general JQLQCD problem.
   \item Using extensive numerical computations, we demonstrate
    that our proposed greedy algorithm achieves solutions within $15 \, \%$ of
    simulated annealing \cite{kleinberg2006algorithm} on general problem
    instances, while being $8-40\times$ faster. For the special cases with
    known optimal solutions, the greedy algorithm attains the optimal solution
    for cases (A) and (C), and is within $8-25 \, \%$ of optimal for case (B).
    These results validate the practical utility of the proposed greedy
    algorithm for real-time distributed quantum computing scenarios where fast
    decision-making is critical.
\end{itemize}

The rest of this paper is organized as follows. Section \ref{sec:related} provides a  review of related work. Section~\ref{sec:system-model} presents the system model and problem formulation. Section~\ref{sec:complexity} proves the NP-completeness of the general problem. Section~\ref{sec:special-cases} analyzes polynomial-time solvable special cases and Section \ref{SC:greedy:algorithm} presents the greedy algorithm. Section~\ref{sec:num} provides our numerical results. Section \ref{sec:conc} concludes and provides some directions for future work.

\section{Related Work}
\label{sec:related}

\subsection{Single-Device Quantum Circuit Mapping}

The problem of mapping quantum circuits to physical hardware has been extensively studied for single QCs; surveys on this topic are \cite{Yan2024QuantumCircuitSurvey,zhu2025quantum}. Early work focused on qubit allocation and routing subject to connectivity constraints~\cite{Siraichi2018QubitAF, Zulehner2018EfficientMP}. These approaches typically use SWAP gate insertion to route interactions between non-adjacent qubits in the device topology. More recent work has incorporated noise-aware compilation~\cite{Murali2019NoiseAdaptiveCF, 
Tannu2019NotAT} and crosstalk mitigation~\cite{Murali2020SoftwareMF}. Recent advances 
in efficient mapping techniques~\cite{Wille2023MQTQMAP, Peham2023OptimalSubarch} have  improved the performance for single-device scenarios.

Several optimization formulations have been proposed, including SAT-based 
approaches~\cite{Wille2016OptimalQCM}, SMT solvers~\cite{guo2024smt}, 
and heuristic methods based on A* search~\cite{Zulehner2018EfficientMP}. Recent work on optimal subarchitectures~\cite{Peham2023OptimalSubarch} and efficient mapping tools~\cite{Wille2023MQTQMAP} has advanced single-device  mapping. In \cite{swierkowska2024achieving}, a quantum circuit compiler based on a multi-objective heuristic optimization approach was proposed to achieve Pareto-optimality in the compilation. The work \cite{ghlib2026scalable} proposed a scalable multi-objective genetic algorithm for quantum circuit optimization tailored to Noisy Intermediate-Scale Quantum (NISQ) devices. In \cite{ruan2025powermove}, PowerMove, an efficient compiler for neutral atom QCs with zoned architecture, which leverages qubit movement capabilities, was proposed. The work \cite{letras2025towards} presented the design of a multi-target Multi-Level Intermediate Representation (MLIR)-based quantum compiler, which supports advanced optimizations. A routing-aware placement method  for zoned neutral atom-based quantum computing architectures was proposed in \cite{stade2025routing}. In \cite{yang2025qubit}, a unified qubit mapping and routing framework applicable to diverse quantum  instruction set architectures, was proposed. A scheme for automatically generating qubit mapping and routing compilers for evaluating quantum circuits on arbitrary quantum processors was proposed in \cite{molavi2026generating}.  A framework, called the MQT Predictor, which allows one to automatically select a suitable quantum device for a particular quantum circuit and application and provides an optimized compiler for the selected device, was proposed in \cite{quetschlich2025mqt}.

However, these techniques are limited to single-device scenarios and do not address the resource allocation and communication challenges that arise in distributed quantum computing.

\subsection{Distributed Quantum Computing}

Distributed quantum computing has emerged as a promising approach to overcome the limitations of single QCs; recent surveys on this topic are \cite{caleffi2024distributed,barral2025review,larasati2025towards}. In \cite{Ferrari2021CompilerSO,ferrari2023modular,liu2025ecdqc,promponas2025compiler}, compilers for distributed quantum computing were proposed. Qubit allocation and circuit optimization for distributed quantum computing were studied in \cite{mao2023qubit,sunkel2025time,sunkel2025evolutionary}. 
The work \cite{Caleffi2022DistributedQC, Cuomo2020TowardsAD} has explored practical aspects of distributed quantum computing architectures. 

Several quantum circuit partitioning strategies have been proposed \cite{g2021efficient,sundaram2022distribution,kaur2025optimized,burt2025entanglement,yang2026distributing}. In \cite{Andres-Martinez2019AutomatedDO}, automated methods were developed for distributing quantum circuits across multiple devices, with a focus on minimizing the number of required entangled pairs. In \cite{Daei2020OptimizedQC}, optimization techniques for partitioning circuits were proposed taking into account both communication costs and device capabilities. In \cite{pastor2024circuit}, an algorithm for circuit partitioning was proposed based on deep reinforcement learning. In \cite{burt2024generalised}, a graph-based formulation was proposed for partitioning quantum circuits, which allows the joint optimization of gate and state teleportation costs. A framework for partitioning quantum circuits, which uses multilevel techniques that coarsen hypergraphs and partition at multiple levels of granularity, was proposed in \cite{burt2026multilevel}. In \cite{wu2026efficient}, a time-aware heuristic based on beam search was proposed to solve the quantum circuit partitioning problem, which is designed to minimize communication overhead without incurring prohibitive computational time. The work \cite{yang2025efficient} studied the problem of efficiently distributing multiple quantum circuits across a shared quantum network under decoherence and network constraints. 

However, none of these papers addresses the problem of jointly leasing out qubits and distributing quantum circuits, which is studied in our paper.  

\subsection{Quantum Communication and Networks, Entanglement Distribution, and Quantum Cloud Computing}
Resource allocation in quantum networks has been studied from various perspectives. Early theoretical work established the foundations of quantum communication \cite{Bennett1993TeleportingAU} and entanglement distribution \cite{Briegel1998QuantumRP}.  High-fidelity entanglement distribution and purification in quantum networks were investigated in \cite{gu2024fendi, halder2024optimal, Inesta2023EntanglementDistribution, gu2025cost,fan2025distribution,fan2025optimized}.  The work \cite{Cicconetti2023ResourceAllocation} specifically addressed 
resource allocation in quantum networks for distributed quantum computing. Prior work on quantum cloud computing has explored scenarios in which users rent resources from multiple providers \cite{LaRose2019OverviewPQ,nguyen2024quantum}.
However, none of these papers considers the problem of jointly leasing out qubits and quantum circuit distribution that we address in this paper.

\subsection{Complexity}
The complexity of quantum circuit optimization problems has been studied in various contexts. Qubit routing on constrained topologies is known to be NP-hard~\cite{Botea2018ComplexityOQ}. Makespan minimization for quantum circuits with precedence constraints relates to the classical multiprocessor scheduling problem $P|\text{prec}|C_{\max}$, which is strongly NP-hard~\cite{Ullman1975NPCompleteS, GrahamLawlerLenstraRinnooyKan1979OptimizationAA}. However, the complexity of the JQLQCD problem that we study in this paper has not been analyzed in prior work.

\section{System Model and Problem Formulation}
\label{sec:system-model}

\begin{figure*}[t]
\centering
\includegraphics[width=\textwidth]{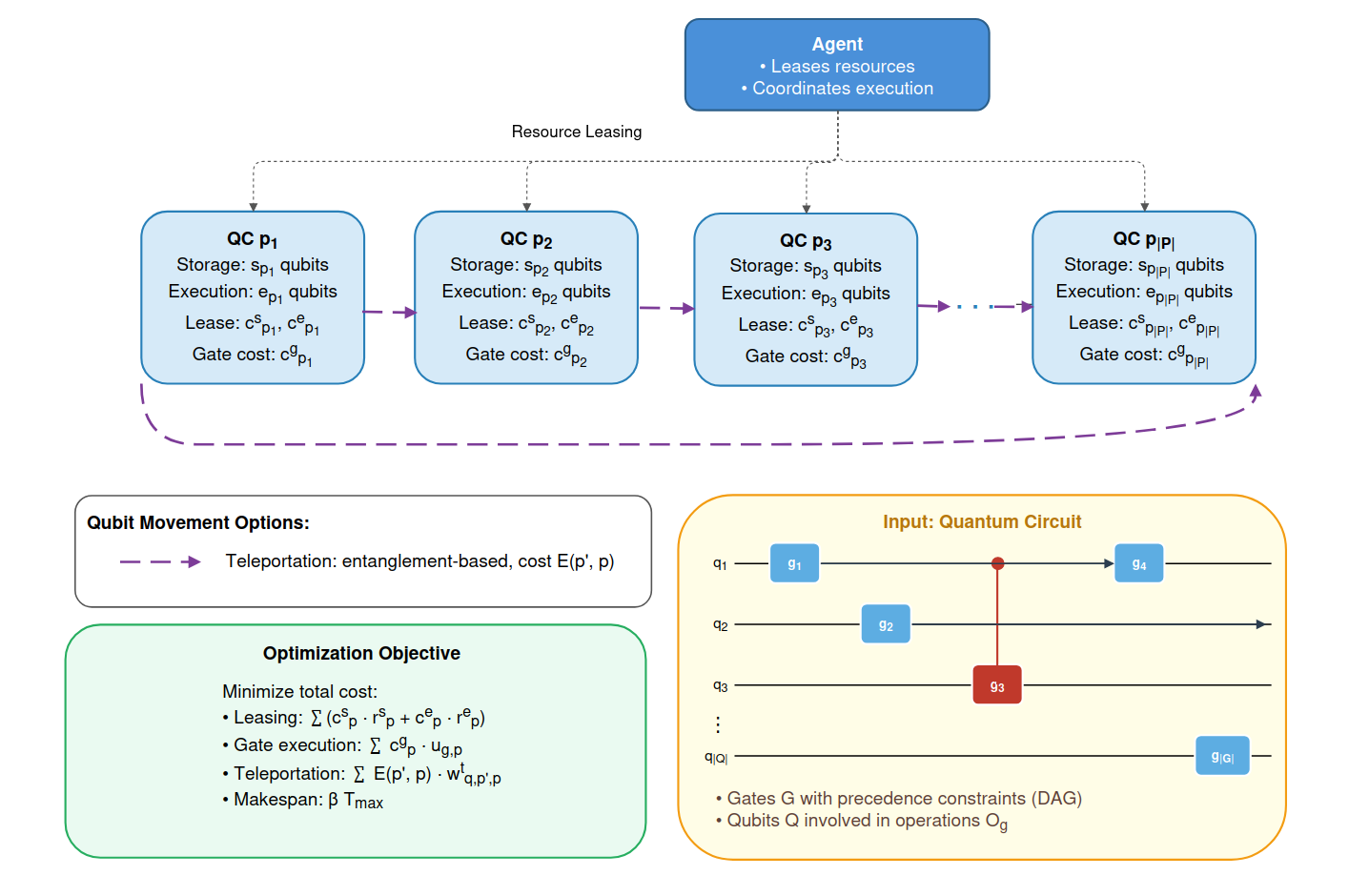}
\caption{The figure illustrates the system model. The agent leases resources 
from multiple QCs (QC $p_1$ to QC $p_{|P|}$) connected by a quantum network. 
Each QC $p$ has qubit storage capacity $s_p$ and qubit execution capacity 
$e_p$, and charges costs $c^s_p$ and $c^e_p$ for leasing storage and 
execution qubits, respectively. The execution cost of gate $g$ at QC $p$ is 
$c^g_{p}$. The figure also shows the quantum circuit that the agent would 
like to run using the leased resources.}
\label{fig:system_model}
\end{figure*}
\subsection{System Model}
Consider an agent, shown at the top of Fig. \ref{fig:system_model}, who would like to run a quantum circuit with a set of logical qubits $Q$ and a set of gates $G$. For each gate $g$, $O_g$ denotes the set of operand qubits required to execute that gate. Time is divided into discrete slots and $T$ denotes the set of slots. Also, there is a set, $P$, of QCs, which are connected by a quantum network. Each QC $p \in P$ is characterized by two capacity parameters: storage capacity $s_p$, which limits the number of qubits that can be stored simultaneously at $p$, and execution capacity $e_p$, which limits the number of qubits that can be actively processed at $p$ during any time slot. The agent needs to execute its quantum circuit using resources leased from the QCs in $P$. For this purpose, the agent needs to make four key decisions: (i) how many qubits to lease from each  QC, (ii) at which QCs to store different circuit qubits in different time slots, (iii)  which QC will execute each gate in the circuit, and (iv) when qubits will be teleported between different pairs of QCs. 

As illustrated in Fig. \ref{fig:system_model}, the available QCs, $p_1,  \ldots, p_{|P|}$, are passive resource providers connected via quantum and classical channels. The agent coordinates all resource allocation and circuit execution decisions. The bottom-right panel shows the input quantum circuit represented as a directed acyclic graph (DAG) with gates $g_1, \ldots, g_{|G|}$ that must satisfy the precedence constraints indicated by the arrows.

Table \ref{table:notation_params} describes the notation used throughout this paper. The per-qubit leasing cost  for storage (respectively, execution capacity) at QC $p \in P$ is denoted by $c^s_p$ (respectively, $c^e_p$). The  execution cost of gate $g \in G$ at QC $p \in P$ is denoted by $c^g_{p}$. Note that our model considers the general case where the different QCs in $P$ may be heterogeneous, possibly based on different storage and computation technologies, and may have different  gate fidelities, execution latencies,  etc. 

Qubit relocation using \emph{teleportation} is considered (see Fig. \ref{fig:system_model}). Teleportation transfers a quantum state using pre-shared entanglement and classical communication~\cite{Bennett1993TeleportingAU}. No physical qubit moves, but fresh entangled pairs must be consumed. The effective cost includes the number of Bell pairs that must be distributed \cite{Inesta2023EntanglementDistribution}, fidelity loss with distance, and classical communication latency.
Let \(E(p',p)\) be the cost of teleporting a qubit from  QC \(p'\) to QC \(p\); it includes the entanglement cost of sharing a Bell pair between  QCs \(p'\) and \(p\). This would typically scale with the physical distance and the  number of network hops. 
The parameter $\beta$ is the weight for the makespan (circuit completion time) term in the objective function (see \eqref{eq:obj}) and allows us to achieve different trade-offs between the makespan and the other costs. Gate availability is captured by the indicator $A_{p,g}$, which equals 1 if gate $g$ can execute at QC $p$ and 0 otherwise. This models heterogeneous QC capabilities where not all gates can be executed on all QCs. Each gate $g$ has an associated duration $\tau_g$ representing its execution time.

\begin{table}[!ht]
\caption{The table describes the notation used in this paper. }
\label{table:notation_params}
\centering
\small
\begin{tabular}{|l|l|}
\hline
\rowcolor[HTML]{EFEFEF}
\textbf{Symbol} & \textbf{Description} \\ \hline
$Q$ & Set of logical qubits (index $q$) \\ \hline
$P$ & Set of quantum computers (index $p$) \\ \hline
$T$ & Set of discrete time slots (index $t$) \\ \hline
$G$ & Set of gates in the circuit (index $g$) \\ \hline
$O_g$ & Set of operand qubits for gate $g$ \\ \hline
$s_p$ & Storage capacity at QC $p$ \\ \hline
$e_p$ & Execution capacity at QC $p$ \\ \hline
$c^s_p$ & Per-qubit leasing cost for storage at QC $p$ \\ \hline
$c^e_p$ & Per-qubit leasing cost for execution capacity at QC $p$ \\ \hline
$c^g_{p}$ &  Cost of execution of gate $g$ at QC $p$ \\ \hline
$E(p', p)$ & Cost of teleportation of a qubit from $p'$ to $p$ \\ \hline
$\beta$ & Makespan weight \\ \hline
$A_{p,g}$ & 1 if gate $g$ can execute at QC $p$ else 0 \\ \hline
$\tau_g$ & Duration of gate $g$  \\ \hline
$r^s_p$ & Number of storage qubits leased from $p$ \\ \hline
$r^e_p$ & Number of execution qubits leased from $p$ \\ \hline
$x_{q,p,t}$ & 1 if qubit $q$ is located at QC $p$ at time $t$ else 0\\ \hline
$z_{q,p,t}$ & 1 if qubit $q$ uses execution at $p$ at $t$ else 0\\ \hline
$w^t_{q,p',p}$ & 1 if qubit $q$ is teleported from $p'$ to $p$ at $t$ else 0\\ \hline
$u_{g,p}$ & 1 if gate $g$ executes at QC $p$ else 0\\ \hline
$t_g$ & Start time of gate $g$ \\ \hline
$T_{\max}$ & Makespan (circuit completion time) \\ \hline
\end{tabular}
\end{table}

We now describe the decision variables that appear in our optimization problem (see Table \ref{table:notation_params}). These variables fall into four categories: resource allocation, qubit placement, qubit relocation, and gate scheduling.

\textbf{Resource Allocation Variables}: Let the integer variable $r^s_p$ (respectively, $r^e_p$) denote the number of storage (respectively,  execution) qubits that the agent leases from QC $p \in P$. 

\textbf{Qubit Placement Variables}: These track qubit locations and resource usage over time. The binary variable $x_{q,p,t}$ indicates whether qubit $q$ is located at QC $p$ in time slot $t$, while $z_{q,p,t}$ indicates whether qubit $q$ actively uses execution capacity at QC $p$ in time slot $t$. The distinction between these variables allows us to model scenarios where a qubit may be stored at a QC without actively consuming execution resources.

\textbf{Qubit Relocation Variables}: These model qubit relocation via teleportation. The binary variable $w^t_{q,p',p}$ equals $1$ if qubit $q$ is teleported from $p'$ to $p$ in time slot $t$ using pre-shared entanglement. The cost of teleporting qubit \(q\) from \(p'\) to \(p\) at time \(t\) is modeled as
\[
\text{TeleportCost}(q,p',p,t) = E(p',p) \cdot w_{q,p',p}^{t}.
\]

\textbf{Gate Scheduling Variables}: They determine when and where gates execute. The binary variable $u_{g,p}$ indicates whether gate $g$ is assigned to QC $p$ for execution. The integer variable $t_g$ specifies the start time of gate $g$, and $T_{\max}$ represents the overall makespan (completion time) of the circuit execution.

\subsection{Problem Formulation}
\label{SSC:problem:formulation}
In this section, we formulate the JQLQCD problem, which is the focus of this paper. The objective of the agent is as follows.
\begin{align}
\min& 
  \underbrace{\sum_{p\in P} 
    \big( c^{s}_p\, r^{s}_{p} + c^{e}_p\, r^{e}_{p}\big)}_{\text{Leasing cost}}  + \underbrace{\sum_{g\in G}\sum_{p\in P} 
   c^{g}_{p}\, u_{g,p}}_{\text{Gate execution cost}} \notag\\
& + \underbrace{ 
   \sum_{q \in Q} \sum_{p' \in P}  \sum_{p \in P}\sum_{t \in T} E(p',p)\,w_{q,p',p}^{t}}_{\text{Teleportation cost}} + \beta\,T_{\max}. \label{eq:obj}
\end{align}
The first two terms correspond to the leasing and gate execution costs. The third term corresponds to the teleportation cost. The final term corresponds to the makespan.

The constraints of the problem, which are explained below, are as follows. 
\begin{align}
r^{s}_{p} &\in \{0,1,\ldots,s_p\}, \forall p, \label{EQ:resource:cap:constraint:1} \\
 r^{e}_{p} &\in \{0,1,\ldots, e_p\}, \forall p, \label{EQ:resource:cap:constraint:2} \\
\gamma_{q,p',p}^{t} &\in \{0,1\}, \forall q,p',p,t, \text{ where } p' \neq p, \label{EQ:binary:decision:variable:constraint:1} \\
w_{q,p',p}^{t} &\in \{0,1\}, \forall q,p',p,t, \text{ where } p' \neq p, \label{EQ:binary:decision:variable:constraint:2} \\
x_{q,p,t} &\in \{0,1\}, \forall q,p,t, \label{EQ:binary:decision:variable:constraint:3}\\
z_{q,p,t} &\in \{0,1\}, \forall q,p,t, \label{EQ:binary:decision:variable:constraint:4}\\
u_{g,p} &\in \{0,1\}, \forall g,p, \label{EQ:binary:decision:variable:constraint:5} \\
\sum_{q \in Q} x_{q,p,t} &\le r^{s}_{p}, \forall p,t, \label{EQ:resource:usage:constraint1} \\
\sum_{q \in Q} z_{q,p,t} &\le r^{e}_{p}, \forall p,t. \label{EQ:resource:usage:constraint2} \\
\sum_{p\in P} x_{q,p,t}&=1, \forall q,t, \label{EQ:qubit:location:uniqueness:constraint} \\
z_{q,p,t} &\le x_{q,p,t},  \forall q,p,t, \label{EQ:execution:requires:presence} \\
\sum_{p\in P} u_{g,p} &= 1, \forall g, \label{EQ:gate:assignment:constraint:1} \\
u_{g,p} &\le A_{p,g}, \forall g,p, \label{EQ:gate:assignment:constraint:2} \\
u_{g,p} &\le x_{q,p,t_g},  \forall q\in O_g,\, p,g. \label{EQ:operands:present} \\
t_{g_2} &\ge t_{g_1} + \tau_{g_1}, \forall g_1, g_2 \text{ such that } g_1 \prec g_2, \label{EQ:precedence:makespan:constraint:1} \\
T_{\max} &\ge t_g + \tau_{g}, \forall g, \label{EQ:precedence:makespan:constraint:2} \\
x_{q,p,t} - x_{q,p,t-1} &\le \max_{p' \in P,\, p' \neq p}  w_{q,p',p}^{t}, \forall q,p',p,t. \label{EQ:movement:consistency:constraint:2}
\end{align}

\textbf{Resource Capacity Constraints (\eqref{EQ:resource:cap:constraint:1} and \eqref{EQ:resource:cap:constraint:2}):}
Constraints \eqref{EQ:resource:cap:constraint:1} and \eqref{EQ:resource:cap:constraint:2} ensure that the numbers of storage and execution qubits leased from each QC $p$ must be non-negative and bounded by that QC's maximum storage and execution capacities $s_p$ and $e_p$, respectively.

\textbf{Binary Decision Variables  (\eqref{EQ:binary:decision:variable:constraint:1}, \eqref{EQ:binary:decision:variable:constraint:2}, \eqref{EQ:binary:decision:variable:constraint:3}, \eqref{EQ:binary:decision:variable:constraint:4}, and \eqref{EQ:binary:decision:variable:constraint:5}):}
Constraints \eqref{EQ:binary:decision:variable:constraint:1}, \eqref{EQ:binary:decision:variable:constraint:2}, \eqref{EQ:binary:decision:variable:constraint:3}, \eqref{EQ:binary:decision:variable:constraint:4}, and \eqref{EQ:binary:decision:variable:constraint:5} enforce the binary nature of the decision variables $\gamma^t_{q,p',p}$,  $w^t_{q,p',p}$, $x_{q,p,t}$, $z_{q,p,t}$, and $u_{g,p}$, respectively.

\textbf{Resource Usage Constraints (\eqref{EQ:resource:usage:constraint1} and \eqref{EQ:resource:usage:constraint2}):}
Constraints \eqref{EQ:resource:usage:constraint1} and \eqref{EQ:resource:usage:constraint2} enforce that in any time slot, the number of qubits stored at a QC cannot exceed the leased storage capacity, and the number of qubits being processed cannot exceed the leased execution capacity, respectively.

\textbf{Qubit Location Uniqueness (\eqref{EQ:qubit:location:uniqueness:constraint}):}
Each qubit $q$ must be located at exactly one QC at any given time $t$. 

\textbf{Execution Requires Presence (\eqref{EQ:execution:requires:presence}):}
If a qubit $q$ uses execution capacity at a QC $p$, it must be located there. 

\textbf{Gate Assignment Constraints (\eqref{EQ:gate:assignment:constraint:1} and \eqref{EQ:gate:assignment:constraint:2}):}
Constraint \eqref{EQ:gate:assignment:constraint:1} ensures that each gate $g$ is assigned to exactly one QC $p$. Constraint \eqref{EQ:gate:assignment:constraint:2} ensures that gates are only assigned to QCs 
that are capable of executing them.

\textbf{Operands Present (\eqref{EQ:operands:present}):}
When a gate $g$ executes at a QC $p$, all qubits $q$ in its operand set $O_g$ must be present at that QC at the gate's start time $t_g$. 

\textbf{Precedence and Makespan Constraints (\eqref{EQ:precedence:makespan:constraint:1} and \eqref{EQ:precedence:makespan:constraint:2}):}
Constraint \eqref{EQ:precedence:makespan:constraint:1} enforces circuit precedence relationships: If gate $g_1$ precedes $g_2$ (denoted by $g_1 \prec g_2$), then $g_2$ cannot start until $g_1$ completes. Constraint \eqref{EQ:precedence:makespan:constraint:2} ensures that the makespan $T_{\max}$ is the completion time of the last gate.

\textbf{Relocation Consistency Constraints (\eqref{EQ:movement:consistency:constraint:2}):}
Constraint \eqref{EQ:movement:consistency:constraint:2} enforces that if a qubit's location changes between time slots, a  teleportation operation must have occurred.

Collectively, the above constraints ensure a physically feasible circuit execution process in which resource capacity constraints are satisfied, gates execute in the correct order, and all operations follow the principles of quantum mechanics.

\begin{remark}
We make the following assumptions about the qubit relocation times. The general formulation assumes that qubit teleportation takes some non-negative amount of time, denoted by $\tau_{\text{teleport}}(p', p)$,  to complete. Specifically, we assume that these teleportation times are at most the width of a single time slot, i.e.,
\begin{equation}
\tau_{\text{teleport}}(p', p) \in [0, 1], \quad \forall\, p', p \in P,\ p' \neq p,
\end{equation}
so that any qubit teleportation completes within the same time slot in which it begins. This ensures consistency with the discrete time-slot formulation, in which the relocation consistency constraint \eqref{EQ:movement:consistency:constraint:2} implicitly captures the effect of teleportation on qubit location across consecutive slots. Some of the special cases in Section \ref{sec:special-cases} (specifically, those in Section \ref{subsec:homogeneous-zero-cost}) assume instantaneous teleportation ($\tau_{\text{teleport}}(p', p) = 0, \, \forall p', p$) to enable closed-form analysis. This assumption is explicitly stated when applicable and is discussed in detail in Section \ref{subsec:homogeneous-zero-cost}. For the general problem, $\tau_{\text{teleport}}(p', p)$ are problem-dependent parameters determined by the entanglement distribution rates and classical communication latencies for teleportation.
\end{remark}

\section{Computational Complexity}
\label{sec:complexity}

\subsection{NP-Completeness}
\label{SSC:NP:completenss:of:JQLQCD}
\begin{theorem}\label{thm:np-complete}
The JQLQCD problem is NP-complete.
\end{theorem}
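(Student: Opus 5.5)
We work with the decision version of the JQLQCD problem: given an instance and a threshold $K$, is there a feasible assignment of the decision variables whose objective value \eqref{eq:obj} is at most $K$? The proof has two parts: membership in NP, and a polynomial-time reduction from $P|\text{prec}|C_{\max}$, whose decision version is NP-complete \cite{Ullman1975, GareyJohnson1979, Lenstra1977} even with unit processing times.

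\textbf{Membership in NP.} The certificate is a full assignment of $r^s_p, r^e_p, x_{q,p,t}, z_{q,p,t}, w^t_{q,p',p}, u_{g,p}, t_g, T_{\max}$. Without loss of generality $|T|$ is part of the input, or can be capped at $\sum_g \tau_g$ (unary durations suffice for the hardness part). Under this assumption the certificate has polynomially many entries, each of polynomial bit length. Verifying it means checking constraints \eqref{EQ:resource:cap:constraint:1}--\eqref{EQ:movement:consistency:constraint:2} one by one and evaluating \eqref{eq:obj}. This takes polynomial time.

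\textbf{Hardness.} Take an instance of $P|\text{prec}|C_{\max}$ with unit jobs $J$, precedence DAG $\prec$, $m$ identical machines, and deadline $D$.
\begin{enumerate}[label=(\roman*)]
\item Create one gate $g_j$ for each job $j$, with $\tau_{g_j}=1$ and the same precedence relation.
\item Give each gate its own private operand qubit $q_j$, so $O_{g_j}=\{q_j\}$ and no two gates share a qubit.
\item Use a single QC $p$ with $s_p=|J|$ and $A_{p,g}=1$ for all $g$, set all costs $c^s_p=c^e_p=c^g_p=E=0$, take $\beta=1$, and let $T=\{0,\dots,D\}$.
\end{enumerate}
The makespan term is then the whole objective.

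The intended correspondence is that at most $m$ gates may run concurrently. This is where the execution capacity must bite, so set $e_p=m$ and couple execution usage to gate activity. The obstacle is that the formulation as stated never forces $z_{q,p,t}=1$ while a gate on $q$ runs; taken literally, $e_p$ would not constrain gate concurrency at all. I would address this in one of two ways:
\begin{itemize}
\item Read the model's intended semantics, namely that $z_{q,p,t}=1$ for $q\in O_g$ whenever $t_g\le t<t_g+\tau_g$ (the paper's description of $z$ as ``actively processed''), and state this linking explicitly.
\item Alternatively, encode the bound without $z$: use $m$ QCs, each with storage $1$ and zero costs, with teleportation forbidden or made prohibitively expensive, so that each QC executes at most one gate at a time.
\end{itemize}
The second route still needs a per-QC ``one gate at a time'' restriction, so I would commit to the first reading and add a sentence justifying it.

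With this in place, the two directions of the equivalence are as follows.
\begin{itemize}
\item \emph{Forward.} A schedule with makespan at most $D$ gives $t_{g_j}$ equal to the start time of job $j$. At most $m$ qubits are active in each slot, so $r^e_p=m$ is feasible. All qubits sit permanently at $p$ with no teleports, giving cost $T_{\max}\le D$.
\item \emph{Backward.} A JQLQCD solution with cost at most $D$ yields start times that respect $\prec$, with at most $e_p=m$ active operand qubits per slot, hence at most $m$ gates per slot. This is a valid $m$-machine schedule of length at most $D$.
\end{itemize}
The reduction is clearly polynomial in size. Combined with membership in NP, this proves Theorem~\ref{thm:np-complete}.
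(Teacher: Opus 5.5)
Your reduction takes a different route from the paper's. The paper keeps general processing times $\pi_i$ and builds $m$ QCs, one per processor, with $s_p=e_p=\infty$ and zero costs. In the sufficiency direction it assigns task $T_i$ to the processor $j$ with $u_{g_i,p_j}=1$. You instead use a single QC and let the execution capacity $e_p=m$ carry the concurrency bound, which forces you to state that $z$ is tied to gate activity.

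The obstacle you flag is genuine, and it breaks the paper's version of the argument. Nothing in \eqref{EQ:resource:cap:constraint:1}--\eqref{EQ:movement:consistency:constraint:2} stops two gates on the same QC from overlapping in time. With $e_p=\infty$, even your linking constraint would impose no bound, so the paper's instance is solved by the critical path and its sufficiency step fails. Concretely, take $m=1$, two independent unit tasks and $K=1$: the scheduling instance is a no-instance, yet $t_{g_1}=t_{g_2}=0$ on the single QC is feasible with $T_{\max}=1$.

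Under your linking constraint, both of your directions are correct. Private operand qubits make the number of active qubits equal to the number of running gates. For unit jobs, a per-slot bound of $m$ is exactly an $m$-machine schedule. For general $\pi_i$ the same construction also works, because intervals with at most $m$ overlapping at any time can be assigned to $m$ machines.

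What your argument does not give is hardness of the formulation exactly as written. Read literally, $z\equiv 0$ with $r^e_p=0$ is always feasible, and your instance, like the paper's, is then solvable in polynomial time. If you want the theorem for \eqref{EQ:resource:cap:constraint:1}--\eqref{EQ:movement:consistency:constraint:2} verbatim, let a constraint that binds literally do the work, namely storage.

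For example, reduce from 3-PARTITION, with sizes $a_1,\dots,a_{3m}$ and $\sum_i a_i=mB$. For each $i$, create one gate $g_i$ with $\tau_{g_i}=1$ acting on $a_i$ fresh qubits, with no precedences. Use $m$ QCs with $s_p=B$ (any $e_p$) and $A_{p,g}=1$, all costs zero, $\beta=1$, and threshold $1$. Then $T_{\max}\le 1$ forces every $t_{g_i}=0$. By \eqref{EQ:qubit:location:uniqueness:constraint}, \eqref{EQ:operands:present} and \eqref{EQ:resource:usage:constraint1}, in slot $0$ each item's qubits must sit together at a single QC holding at most $B$ qubits. This is possible if and only if the items admit a 3-partition.

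Since 3-PARTITION is strongly NP-complete, the constructed instance has polynomial size. This gives (strong) NP-hardness with no reinterpretation of $z$, and your membership argument can stay unchanged.
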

\begin{IEEEproof}
A candidate solution to the JQLQCD problem specifies $r^s_p, r^e_p$,  $x_{q,p,t}$, $z_{q,p,t}$,  $u_{g,p}$, $t_g$, $\gamma^t_{q,p',p}$, and $w^t_{q,p',p}$ for all $q \in Q, p, p' \in P, t \in T$, and $g \in G$. Given a candidate solution, checking whether it satisfies the capacity constraints $\sum_q x_{q,p,t} \leq r^s_p$ and $\sum_q z_{q,p,t} \leq r^e_p$ for all $p, t$ requires $O(|Q| \cdot |P| \cdot T)$ time.
Checking the qubit uniqueness constraints $\sum_p x_{q,p,t} = 1$ for all $q, t$ can be done in $O(|Q| \cdot |P| \cdot T)$ time.
The gate assignment constraints $\sum_p u_{g,p} = 1$ for all $g$  can be checked in  $O(|G| \cdot |P|)$ time. The precedence constraints $t_{g_2} \geq t_{g_1} + \tau_{g_1}$ for all $g_1, g_2$  such that  $g_1 \prec g_2$ can be checked in $O(|E|)$ time, where $|E|$ is the number of precedence edges. The makespan constraint $T_{\max} \geq t_g + \tau_g$ for all $g$ can be checked in $O(|G|)$ time.  
The relocation consistency constraints \eqref{EQ:movement:consistency:constraint:2})  can be checked in $O(|Q| \cdot |P|^2 \cdot T)$ time. Finally, the 
total cost given in \eqref{eq:obj} can be computed in $O(|P| \cdot |G| + |Q| \cdot |P|^2 \cdot T)$ time. The total verification time is $O(|Q| \cdot |P|^2 \cdot T + |G| \cdot |P| + |E|)$, which is polynomial in the input size. Hence, the problem is in class NP \cite{kleinberg2006algorithm}. 

Now, we show that  the multiprocessor scheduling problem with precedence constraints, denoted $P|\text{prec}|C_{\max}$ in standard scheduling notation, which is known to be NP-complete \cite{Ullman1975, GareyJohnson1979, Lenstra1977}, is polynomial-time reducible to the (decision version of the) JQLQCD problem. Consider the following instance of the $P|\text{prec}|C_{\max}$ problem:  We are given a set of tasks $\mathcal{T} = \{T_1, T_2, \ldots, T_n\}$ with processing time $\pi_i$ for task $T_i$, a set of $m$ identical parallel processors, a set of precedence constraints $T_i \prec T_j$ forming a DAG, and a target makespan bound $K$, where the makespan is defined as the completion time of the last task. The problem is to decide whether there exists a schedule with makespan at most $K$. This problem was shown to be strongly NP-hard in \cite{Ullman1975} and it remains NP-hard even when all tasks have unit processing time ($\pi_i = 1$ for all $i$) and the precedence graph is restricted to certain structures (e.g., trees with bounded degree, series-parallel graphs).

From the above instance of the $P|\text{prec}|C_{\max}$ problem, we construct an instance of the JQLQCD problem as follows.  The quantum circuit is created by generating a gate $g_i$ with duration $\tau_{g_i} = \pi_i$ for each task $T_i$. We introduce $n$ qubits $Q = \{q_1, q_2, \ldots, q_n\}$, where each gate $g_i$ operates exclusively on its corresponding qubit $q_i$ as a single-qubit gate. The precedence constraints are mapped directly: If $T_i \prec T_j$ in the scheduling instance, then we enforce $g_i \prec g_j$ in the circuit. We configure the quantum circuit created as explained above with the following parameters. We create $|P| = m$ QCs-- one for each processor. Each QC has infinite capacity, so we set $s_p = e_p = \infty$ for all $p \in P$. All costs are set to zero:
\begin{align}
c^s_p &= c^e_p = 0, \quad \forall p \in P, \\
c^g_{p} &= 0, \quad \forall p \in P, g \in G, \\
E(p', p) &= 0, \quad \forall p, p' \in P.
\end{align}
The availability of gates is unrestricted, with $A_{p,g} = 1$ for all $p \in P, g \in G$, i.e., all gates can run on all QCs. Finally, we set the makespan weight $\beta = 1$, ensuring that the objective function in \eqref{eq:obj} equals the makespan $T_{\max}$.  

Note that this reduction can be performed in polynomial-time as it involves creating $n$ gates and $n$ qubits, which requires $O(n)$ time, copying precedence constraints, which requires $O(|E|)$ time, creating $m$ QCs, which requires $O(m)$ time, and setting cost matrices to zero, which requires $O(m^2+m \cdot n)$ time, yielding an overall complexity of $O(m^2 + mn + |E|)$. 

In the above instance of the JQLQCD problem, we ask: Does there exist a solution with makespan $T_{\max} \leq K$? We will now prove that the instance of $P|\text{prec}|C_{\max}$ has a schedule with makespan at most $K$ if and only if the instance of the JQLQCD problem has a solution with makespan $T_{\max} \leq K$.

To prove necessity, suppose the instance of $P|\text{prec}|C_{\max}$ has a schedule $\mathcal{S}$ with makespan at most $K$. We map this schedule to  a solution to the JQLQCD problem as follows. For each task $T_i$ scheduled on processor $j$ with a start time of $t$ in $\mathcal{S}$, we assign gate $g_i$ to QC $p_j$ by setting $u_{g_i, p_j} = 1$, set the gate start time $t_{g_i} = t$, and place qubit $q_i$ at QC $p_j$ for all relevant time slots by setting $x_{q_i, p_j, t'} = 1$ for $t' \in [t, t + \pi_i]$. No qubit teleportations are needed, so $w^t_{q,p',p} = 0$ for all $q, p, p', t$. The resulting makespan is $T_{\max} = \max_{g \in G} (t_g + \tau_{g})$. Since $\mathcal{S}$ satisfies all precedence constraints and has makespan at most $K$, the constructed solution to the JQLQCD problem also satisfies all precedence constraints (see \eqref{EQ:precedence:makespan:constraint:1}) and has objective function value $T_{\max} \leq K$ (see \eqref{EQ:precedence:makespan:constraint:2} and \eqref{eq:obj}). All capacity constraints are trivially satisfied because $s_p = e_p = \infty$. Thus, the JQLQCD problem has a feasible solution with makespan $T_{\max} \leq K$, which shows necessity.

To prove sufficiency, suppose the JQLQCD problem instance has a feasible solution with makespan $T_{\max} \leq K$. We construct a schedule $\mathcal{S}$ for the $P|\text{prec}|C_{\max}$ problem as follows. For each gate $g_i$ assigned to QC $p_j$ (i.e., $u_{g_i, p_j} = 1$) with start time $t_{g_i}$, we schedule task $T_i$ on processor $j$ with a start time of $t_{g_i}$. This schedule $\mathcal{S}$ is feasible because the solution to the  JQLQCD problem instance satisfies the gate assignment constraint (see \eqref{EQ:gate:assignment:constraint:1}), ensuring that each gate is assigned to exactly one QC, and thus each task is assigned to exactly one processor; also, the precedence constraints (see \eqref{EQ:precedence:makespan:constraint:1})  guarantee that if $g_i \prec g_j$, then $t_{g_j} \geq t_{g_i} + \tau_{g_i}$, which translates to $T_i \prec T_j$ being satisfied in $\mathcal{S}$. The makespan constraint (see \eqref{EQ:precedence:makespan:constraint:2}) ensures that $T_{\max} \geq t_g + \tau_g$ for all $g$, so the makespan of $\mathcal{S}$ is $T_{\max} \leq K$. Therefore, $\mathcal{S}$ is a valid schedule with makespan at most $K$, which shows sufficiency. The result follows. 
\end{IEEEproof}

In fact, in the above proof of Theorem \ref{thm:np-complete}, we have shown the following stronger result. 
\begin{corollary}
\label{CR:NP:completeness:special:case}
The JQLQCD problem remains NP-complete even when
\begin{enumerate}[label=(\alph*)]
\item all QCs have infinite capacity ($s_p = e_p = \infty$),
\item all costs except makespan are zero ($c^s_p = c^e_p = c^g_{p} = E(p',p) = 0$ for all $q, p, p'$),
\item all gates can execute on all QCs ($A_{p,g} = 1$ for all $p, g$), and
\item the circuit consists only of single-qubit gates.
\end{enumerate}
\end{corollary}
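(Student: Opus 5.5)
The plan is to observe that the reduction in the proof of Theorem~\ref{thm:np-complete} already lands inside the restricted class described in (a)--(d). Membership in NP is inherited directly: the restricted problem is a subproblem of the general JQLQCD problem, so the polynomial-time verifier from the first part of that proof works unchanged. The infinite capacities in (a) cause no difficulty. With $s_p=e_p=\infty$, the capacity checks are vacuous. The leased amounts $r^s_p,r^e_p$ may, without loss of generality, be taken to be at most $|Q|$, so certificates remain polynomially bounded.

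For hardness, I will restate the construction from that proof and check each restriction in turn against it.

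\begin{enumerate}[label=(\alph*)]
\item The construction sets $s_p=e_p=\infty$ for every QC, so (a) holds.
\item The construction sets $c^s_p=c^e_p=c^g_p=E(p',p)=0$ and $\beta=1$, so (b) holds.
\item The construction sets $A_{p,g}=1$ for all $p,g$, so (c) holds.
\item Each gate $g_i$ acts only on qubit $q_i$, so $|O_{g_i}|=1$ and (d) holds.
\end{enumerate}

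Hence the image of the reduction from $P|\text{prec}|C_{\max}$ lies entirely within the special class. The equivalence ``schedule with makespan $\le K$ $\iff$ JQLQCD solution with $T_{\max}\le K$'' was established there without using any feature outside (a)--(d). It therefore transfers verbatim, and NP-hardness of the restricted problem follows.

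The main point needing care is the necessity direction of that equivalence. The JQLQCD solution must specify $x_{q,p,t}$ for all $t\in T$, not only during $[t,t+\pi_i]$, so that the location uniqueness constraint~\eqref{EQ:qubit:location:uniqueness:constraint} holds. I will fix this by placing each $q_i$ at its assigned QC $p_j$ for every slot. Then no teleportation is required, and the relocation constraint~\eqref{EQ:movement:consistency:constraint:2} holds trivially. The execution variables are set by $z_{q,p,t}\le x_{q,p,t}$.

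With unit-length slots and integer $\pi_i$, the schedule times map directly to slots. Because capacities are infinite, co-locating many qubits on one QC is harmless. Processor exclusivity in the scheduling instance is the one place where the correspondence needs care. Under infinite execution capacity, a QC can run overlapping gates, so the sufficiency direction implicitly uses the bound on $m$ processors. I will make this explicit as the crux: either read the number of QCs as limiting concurrent gate executions, or set $e_p=1$ per QC. In the latter case restriction (a) must be read as applying to storage only, and I expect this to be the step requiring the most attention.
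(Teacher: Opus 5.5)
Your strategy is the paper's own: its proof of this corollary is only the remark that the reduction for Theorem~\ref{thm:np-complete} already satisfies (a)--(d). Your checklist matches that, and your repair of the necessity direction (keeping $q_i$ at $p_j$ in every slot) is fine. The trouble is the point you defer to the end. It is not a step needing care but a failure of the sufficiency direction, and neither you nor the paper closes it. Nothing in the ILP prevents a QC from executing several gates in the same slot. Capacities are infinite, and in any case no constraint forces $z_{q,p,t}=1$ while a gate on $q$ runs at $p$, so \eqref{EQ:resource:usage:constraint2} never restricts gate concurrency.

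Concretely, take $m=1$, two independent unit tasks, and $K=1$. This is a NO instance of $P|\text{prec}|C_{\max}$. Yet the constructed JQLQCD instance admits $t_{g_1}=t_{g_2}=0$ on the single QC with $T_{\max}=1$. More generally, under (a)--(d) every instance is solved optimally by keeping all qubits at one QC in every slot and starting each gate at its earliest start time $L(g)$ from the longest-path recursion of Algorithm~\ref{alg:critical-path}. Precedence alone forces $T_{\max}\ge\max_g(L(g)+\tau_g)$ in any feasible solution, and every other cost term is zero. So the restricted problem is in \textsf{P}, and the corollary as stated cannot be proved unless \textsf{P}~$=$~\textsf{NP}.

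Neither of your proposed repairs rescues the stated result. Reading the number of QCs as a bound on concurrent gate executions adds a constraint that is absent from the formulation in Section~\ref{SSC:problem:formulation}, so it proves a claim about a different problem. Setting $e_p=1$ contradicts (a). By itself it still enforces nothing, since $z$ may be zero everywhere. You would additionally need a linking constraint such as $z_{q,p,t}\ge u_{g,p}$ for all $q\in O_g$ and $t_g\le t<t_g+\tau_g$. With that constraint and $e_p=1$ (storage may stay unbounded), each QC runs at most one gate of the constructed instance per slot, since each $q_i$ carries only $g_i$. The reduction then becomes exact in both directions, giving NP-hardness under (b)--(d) with unlimited storage only. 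That is a correct statement, but not the corollary as written.
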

\begin{IEEEproof}
This result follows directly from the reduction used to prove Theorem \ref{thm:np-complete}, which uses exactly these restricted conditions.
\end{IEEEproof}

Recall that a problem is said to be strongly NP-complete if it remains NP-complete even when all numerical parameters are bounded by a polynomial in the input size \cite{Ullman1975, Lenstra1977}.

\begin{corollary} 
\label{CR:strong:NP:hard}
The JQLQCD problem is strongly NP-complete. 
\end{corollary}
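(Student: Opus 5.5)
We plan to derive Corollary \ref{CR:strong:NP:hard} directly from the reduction in the proof of Theorem \ref{thm:np-complete}. Membership in NP was already established there, so what is needed is strong NP-hardness: the decision version of the JQLQCD problem must stay NP-hard when every numerical parameter is bounded by a polynomial in the input size. The natural source problem is the unit-time special case of $P|\text{prec}|C_{\max}$, in which $\pi_i = 1$ for all $i$. This case is NP-complete \cite{Ullman1975}, and it has no large numbers at all: every processing time is $1$, and the target $K$ can be taken to satisfy $K \le n$, since a schedule that runs the tasks one at a time in topological order already has makespan $n$. Hence this source problem is itself strongly NP-complete.

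Next we check that the reduction keeps all numbers polynomially bounded. In the constructed instance, the gate durations are $\tau_{g_i} = \pi_i = 1$. All costs $c^s_p, c^e_p, c^g_p, E(p',p)$ are $0$, every $A_{p,g}$ equals $0$ or $1$, $\beta = 1$, and the makespan bound is $K \le n$.

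The one step needing care is the capacities $s_p = e_p = \infty$, which are not numbers bounded by a polynomial. We will replace them with $s_p = e_p = n$. There are only $n$ qubits, and constraint \eqref{EQ:qubit:location:uniqueness:constraint} puts each qubit at exactly one QC per slot. So at most $n$ qubits can be stored or executing at any QC, and the capacity constraints \eqref{EQ:resource:usage:constraint1} and \eqref{EQ:resource:usage:constraint2} bind exactly as they would with infinite capacity. The time horizon can be taken as $T = \{0,1,\ldots,n\}$, which is also polynomial in size.

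With these choices, the necessity and sufficiency arguments from the proof of Theorem \ref{thm:np-complete} carry over word for word. The instance therefore has a schedule of makespan at most $K$ if and only if the constructed JQLQCD instance has a solution with $T_{\max} \le K$.

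The main obstacle is exactly this bookkeeping: we must confirm that no quantity in the constructed instance, including infinite capacities and the length of the time horizon, hides a large number. Once the finite bounds $s_p = e_p = n$ and $|T| = n+1$ are in place, every numerical parameter is at most $\max(n,1)$. This gives a pseudo-polynomial (in fact polynomial) transformation from a strongly NP-complete problem, so the JQLQCD problem is strongly NP-complete.
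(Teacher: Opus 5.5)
Your argument has the same structure as the paper's proof. Both reuse the construction from Theorem~\ref{thm:np-complete} and only check that its numbers are polynomially bounded: the paper caps capacities at $M=|Q|\cdot|G|$, you cap them at $n$. Your explicit unit-time source and bounded horizon are tidier.

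\textbf{The gap.} It is the step you treat as routine, that necessity and sufficiency ``carry over word for word.'' Sufficiency fails, in your version and in the paper's. No constraint of the ILP stops several gates from starting in the same slot at the same QC:
\begin{itemize}
\item \eqref{EQ:operands:present} only requires each operand to be present;
\item with $s_p=n$ (or $\infty$), all qubits fit at one QC;
\item $z_{q,p,t}$ is never tied to gate execution, so $e_p$ restricts nothing.
\end{itemize}
The schedule you read off from a JQLQCD solution can therefore run many tasks at once on one processor.

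For instance, three independent unit tasks with $m=1$ and $K=1$ form a no-instance. Yet the constructed instance has the feasible solution $x_{q_i,p_1,t}=1$ for all $i,t$, $t_{g_1}=t_{g_2}=t_{g_3}=0$, with $T_{\max}=1\le K$. In fact every instance this construction produces is solved by placing all gates on one QC at their earliest start times. Its optimum is then the critical-path length, computable in $O(|G|+|E|)$ time, so the map cannot be a reduction unless $\textsf{P}=\textsf{NP}$. The same defect undermines Theorem~\ref{thm:np-complete} and Corollary~\ref{CR:NP:completeness:special:case}. The real obstacle is encoding processor exclusivity, not bounding the numbers.

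\textbf{A repair.} Your unit-time source makes this straightforward if storage capacity plays the role of the processor, since \eqref{EQ:operands:present} does couple storage to gate execution.
\begin{itemize}
\item Give each of the $m$ processor QCs $s_p=e_p=1$.
\item Add a parking QC $p^\ast$ with $s_{p^\ast}=e_{p^\ast}=n$ and $A_{p^\ast,g}=0$ for all $g$.
\item Keep all costs, including every $E(p',p)$, equal to zero, and set $\beta=1$.
\end{itemize}
For sufficiency: two gates started in the same slot at the same processor QC would put their two distinct operands there. That violates \eqref{EQ:resource:usage:constraint1}, because $r^s_p\le 1$. With unit durations and integer start times, this is exactly the rule that a processor runs one task at a time.

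For necessity: keep $q_i$ at $p^\ast$ except in its start slot $t_i$. Teleport it at zero cost to its processor's QC for that slot and back in the next.

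Every number in the instance is then at most $n$, and $n+1$ slots suffice. This gives the polynomial transformation from a number-free NP-complete problem that strong NP-completeness requires.
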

\begin{IEEEproof}
The $P|\text{prec}|C_{\max}$ problem is known to be strongly NP-complete \cite{Ullman1975, Lenstra1977}. We show that the reduction used to prove Theorem \ref{thm:np-complete} can be carried out with all numerical parameters polynomially bounded in the input size. Recall that in the reduction, the gate durations are set to $\tau_{g_i} = \pi_i$, which are polynomially bounded whenever the processing times $\pi_i$ are. All costs are set to zero, which is trivially polynomially bounded. The storage and execution capacities, which were set to $\infty$ in the proof of Theorem~\ref{thm:np-complete}, can be replaced by $M = |Q| \cdot |G|$, since no feasible solution ever needs to store more than $|Q|$ qubits at any QC or process more than $|Q|$ qubits in any time slot; thus, setting $s_p = e_p = M$ for all $p \in P$ is functionally equivalent to unlimited capacity within any feasible solution, while keeping all parameters polynomially bounded in the input size. Hence, the reduction is a valid polynomial-time reduction from a strongly NP-complete problem in which all numerical parameters are polynomially bounded, and the result follows.
\end{IEEEproof}
Strong NP-completeness implies that no pseudo-polynomial-time algorithm for finding the optimal solution of the JQLQCD problem exists (unless $\textsf{P} = \textsf{NP}$), ruling out dynamic programming (DP) approaches that work for polynomially bounded numerical parameter values.

\subsection{Parameterized Complexity}
While the general JQLQCD problem is NP-complete, we now show that it is tractable for some specific parameter ranges. Natural parameters to consider include the circuit width $w_c = \max_{g \in G} |O_g|$ (maximum number of qubits in any gate's operand set), the circuit depth $d$ (length of the longest path of dependent gates from input to output, i.e., critical path), the number of QCs $|P|$, the tree-width $w_t$ of the circuit dependency graph \cite{Robertson1986, Bodlaender1998}, and the maximum QC capacity $\kappa = \max_p \min(s_p, e_p)$.

\begin{theorem}
The JQLQCD problem is fixed-parameter tractable (FPT) \cite{Downey2013, Cygan2015} when parameterized by the number of QCs $|P|$ and the maximum QC capacity $\kappa$, assuming that the circuit has bounded tree-width $w_t$, $|P|$, and $\kappa$ \cite{Robertson1986, Bodlaender1998}.
\end{theorem}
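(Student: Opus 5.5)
We prove the theorem by dynamic programming over a tree decomposition of the circuit dependency graph. The DP states record only the \emph{local} configuration of the qubits touched by the gates in the current bag. The parameters $|P|$, $\kappa$ and $w_t$ together bound the number of such configurations.

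\textbf{Step 1: preprocessing and bounding the leases.} First we compute a tree decomposition of the dependency DAG (viewed as an undirected graph, with gate vertices and edges joining gates that share an operand or a precedence edge). We use Bodlaender's linear-time algorithm for fixed $w_t$ and convert the result to a nice tree decomposition with $O(|G|)$ nodes. Next we observe that in any feasible solution $r^s_p \le \kappa$ or $r^e_p\le\kappa$ can be assumed at the binding coordinate. So the lease vector $(r^s_p,r^e_p)_{p\in P}$ takes at most $(\kappa+1)^{2|P|}$ relevant values. We guess it in an outer loop, which adds only the additive leasing cost. Moreover, since every QC stores at most $\kappa$ qubits, a feasible instance has $|Q|\le |P|\kappa$. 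The whole qubit-placement vector at any time is therefore a map from at most $|P|\kappa$ qubits to $P$, giving at most $|P|^{|P|\kappa}$ configurations, a function of the parameters alone.

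\textbf{Step 2: the DP state.} For each bag $X$ (of size at most $w_t+1$) the state consists of three parts:
\begin{enumerate}[label=(\roman*)]
\item the QC $u_g$ assigned to each gate $g\in X$;
\item the placement configuration of all qubits at the start time of each gate in $X$;
\item the \emph{relative} ordering and offsets of the start times $t_g$, $g\in X$.
\end{enumerate}
The offsets are truncated to differences of at most $\sum_{g\in X}\tau_g$ plus the number of teleport slots. This is justified because precedence constraints \eqref{EQ:precedence:makespan:constraint:1} only couple gates sharing an edge, and such edges lie within a common bag. The value stored is the minimum partial cost over the subtree: gate costs $c^g_p$, teleportation costs (the cheapest relocation between two configurations is a min-cost assignment computable in $f(|P|,\kappa)$ time), and the longest-path completion time so far. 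Introduce, forget and join nodes are handled in the standard way. At an introduce node we check \eqref{EQ:gate:assignment:constraint:2}, \eqref{EQ:operands:present} and the capacity constraints \eqref{EQ:resource:usage:constraint1}--\eqref{EQ:resource:usage:constraint2} against the guessed leases. At a join node we require agreement on the shared bag. The number of states per bag is $f(|P|,\kappa,w_t)$, so the total running time is $f(|P|,\kappa,w_t)\cdot\mathrm{poly}(|G|,|T|)$.

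\textbf{Main obstacle.} The hard step is making the makespan term $\beta T_{\max}$ and the temporal coupling compatible with a tree decomposition. Absolute start times are not bounded by the parameters, and makespan is a global max over paths, not a sum over bags. We would first try to handle this by storing, per state, the earliest feasible completion time of each bag gate (a Pareto set over the cost/time trade-off). This is kept small by the observation that with bounded capacity at most $|P|\kappa$ qubits, hence boundedly many gates, are active per slot, so only boundedly many gates can be "in flight" simultaneously. If the Pareto frontier cannot be bounded, we fall back to parameterizing additionally by the depth $d$. This bounds $T_{\max}$ and allows guessing the makespan in the outer loop, after which the DP only minimizes the separable costs.
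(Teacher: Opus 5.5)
Your proposal does not prove the theorem as stated, for two concrete reasons.

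\textbf{Step~1 rests on a false inequality.} Since $\kappa=\max_p\min(s_p,e_p)$, only the \emph{smaller} of $s_p,e_p$ is bounded by $\kappa$ at each QC. You note this yourself (``$r^s_p\le\kappa$ or $r^e_p\le\kappa$''), but then conclude that every QC stores at most $\kappa$ qubits, which does not follow. An instance in which every QC has $e_p=1$ and $s_p\ge|Q|$ has $\kappa=1$, yet is feasible for arbitrarily many qubits. Moreover, with the constraints as written nothing forces $z_{q,p,t}=1$ for the operands of an executing gate, so small execution capacities restrict nothing. Hence $|Q|\le|P|\kappa$ fails, with three consequences:
\begin{itemize}
\item the other lease coordinate at each QC ranges over up to $|Q|+1$ relevant values, so your outer loop needs up to $(|Q|+1)^{|P|}$ guesses, an XP rather than FPT bound;
\item component (ii) of your state, the placement of \emph{all} qubits, takes $|P|^{|Q|}$ values;
\item your claim that only boundedly many gates are in flight per slot loses its justification.
\end{itemize}

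\textbf{The temporal coupling is left unresolved, as you acknowledge.} Constraints \eqref{EQ:resource:usage:constraint1}, \eqref{EQ:qubit:location:uniqueness:constraint} and \eqref{EQ:movement:consistency:constraint:2} act slot by slot on one global trajectory $x_{q,p,t}$. Gates running in the same or consecutive slots must see mutually consistent placements, and teleportations are charged along the global time order. Such gates need not be adjacent in the dependency graph, hence need not share a bag, and relative offsets inside a bag cannot synchronize them. Your bag argument covers only the precedence constraints \eqref{EQ:precedence:makespan:constraint:1}. The makespan is then left to an unproved bound on a Pareto frontier, and the fallback of adding the depth $d$ as a parameter proves a different theorem.

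\textbf{Comparison with the paper.} The paper's proof follows the same tree-decomposition route but never tracks a global configuration. Its state consists of the QC assignment of the at most $w_t+1$ qubits in a bag, per-QC usage counters bounded by $\kappa$ (whence its $O(|P|^{w_t}\kappa^{|P|w_t})$ state count), the gate assignments, and earliest start times. The leases are read off as peak usage, and $T_{\max}=\max_g(t_g+\tau_g)$ is taken at the root. Dropping your Step~1 and component (ii) would therefore align you with it. Note, though, that the paper's sketch neither bounds the stored start times nor shows that the per-slot constraints and the lease maxima decompose across bags. The obstacle you isolated is precisely the step it treats informally, so it does not supply the missing idea.
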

\begin{IEEEproof}
With bounded tree-width $w_t$ and bounded $|P|$ and $\kappa$, we construct a tree decomposition of the circuit's dependency graph and apply DP over it. Each node in the tree decomposition contains a \emph{bag}-- a subset of at most $w_t + 1$ qubits that must be considered together.

\textbf{State encoding.} For each bag, the DP state records: (i) the assignment of each qubit in the bag to a QC, i.e., a mapping from qubits to $P$, which encodes $x_{q,p,t}$; (ii) the current execution capacity usage at each QC, which encodes $z_{q,p,t}$ and allows us to determine $r^e_p = \max_t \sum_q z_{q,p,t}$; (iii) the gate assignment $u_{g,p}$ for each gate $g$ whose operand qubits are all present in the current bag; and (iv) the earliest feasible start time $t_g$ for each such gate, respecting the precedence constraints \eqref{EQ:precedence:makespan:constraint:1} within the bag. The leasing variables $r^s_p$ and $r^e_p$ are not independent decisions but are derived from the qubit placement and execution assignments as $r^s_p = \max_t \sum_q x_{q,p,t}$ and $r^e_p = \max_t \sum_q z_{q,p,t}$, so they are determined once the placement is fixed.

\textbf{Transitions.} When two child bags are merged at a parent bag, the DP combines their partial solutions by: checking that shared qubits have consistent QC assignments across the two children; computing qubit teleporation costs for qubits that change QC between bags; and propagating the earliest feasible start times for gates in the parent bag respecting inter-bag precedence constraints \eqref{EQ:precedence:makespan:constraint:1}.

\textbf{Optimality.} The DP computes, for each bag and each state, the minimum partial cost (leasing cost, gate execution cost, and teleportation cost) over all feasible assignments of qubits and gates within that subtree of the decomposition. Since the tree decomposition ensures that all interactions between qubits separated across bags are mediated through shared qubits appearing in the bag boundary, the optimal global solution is obtained by combining optimal partial solutions bottom-up, with the root bag yielding the globally optimal assignment and the makespan $T_{\max}$ computed as $\max_{g \in G}(t_g + \tau_g)$ per constraint \eqref{EQ:precedence:makespan:constraint:2}.

\textbf{Complexity.} The state space per bag is $O(|P|^{w_t} \cdot \kappa^{|P| \cdot w_t})$, tracking the QC assignment of each qubit in the bag and the capacity usage at each QC. With $O(|G|)$ bags in the tree decomposition and $O(|P|^{2w_t})$ transitions per bag, the total complexity is $O(|G| \cdot |P|^{3w_t} \cdot \kappa^{|P| \cdot w_t})$, which is polynomial in $|G|$ for fixed $|P|$, $w_t$, and $\kappa$, establishing FPT tractability. The result follows.
\end{IEEEproof}

For typical quantum circuits, $w_t \leq 2$ for nearest-neighbor architectures with limited qubit connectivity, $w_c \leq 2$ for circuits with only single and two-qubit gates, and $|P|$ is typically small (e.g., 2-10 distributed QCs). These observations suggest that practical instances may be more tractable than the worst-case complexity suggests, motivating the development of efficient algorithms for special cases with restricted parameter ranges. In Section \ref{sec:special-cases}, we propose polynomial-time algorithms for several special cases.

\section{Polynomial-Time Optimal Algorithms for Special Cases}
\label{sec:special-cases}
Although Theorem \ref{thm:np-complete} shows that the general JQLQCD problem is NP-hard, in this section, we identify several special cases in which the problem can be solved optimally in closed-form or via polynomial-time algorithms.

\subsection{Case 1: Unlimited Capacity QC in a Heterogeneous Network}
\label{subsec:single-qc}

\textbf{Problem Setting:} The quantum network consists of one QC \(p_0 \in P\) with unlimited capacity (\(s_{p_0} = e_{p_0} = \infty\)) and a set of additional QCs \(P' = P \setminus \{p_0\}\) with finite capacities \(s_p, e_p < \infty\) for all \(p \in P'\). The QCs are connected in an arbitrary network topology with heterogeneous teleportation costs \(E(p',p)\), leasing costs \(c^{s}_p, c^{e}_p\), and gate execution costs \(c^{g}_{p}\). We refer to the QCs in $P'$ as ``limited QCs'' since their capacities are finite.

\textbf{Base Solution:} When all qubits of the given quantum circuit are statically assigned to \(p_0\), no teleportation is required. The problem reduces to gate scheduling with precedence constraints, whose optimal solution can be found by topological sorting of the circuit DAG to determine valid gate orderings \cite{Cormen2009, Knuth1997} and applying a list scheduling algorithm \cite{Graham1969, Coffman1976} to minimize the makespan $T_{\max}$.
The complexity of this algorithm is \(O(|G| + |E|)\), where \(E\) denotes the set of precedence edges in the circuit. The achieved cost (value of the objective function in \eqref{eq:obj}) is  
\begin{equation}
C^{*}_{\text{centralized}} = c^{s}_{p_0} \cdot |Q| + c^{e}_{p_0} \cdot |Q| + \sum_{g \in G} c^{g}_{p_0} + \beta \cdot T_{\max}^{\text{opt}},
\end{equation}
where \(T_{\max}^{\text{opt}}\) is the critical path length of the circuit DAG \cite{Cormen2009}. 

Despite the availability of other QCs, viz., those in \(P'\), the above centralized solution (statically allocating all qubits to \(p_0\)) may still be optimal. In Section \ref{SSSC:sufficient:conditions:for:optimality:of:centralized:solution}, we present multiple sets of sufficient conditions under which this occurs.

\begin{remark}
\label{rem:qubit-location}
We make the following assumptions about qubit locations. Before the start of 
circuit execution, all qubits are located at $p_0$. After the circuit execution 
completes, all qubits must be returned to $p_0$. These assumptions are natural 
in the cloud computing setting where $p_0$ is the agent's home QC: qubits 
originate at $p_0$, may be temporarily teleported to QCs in $P'$ for cheaper 
execution, and must be returned to $p_0$ upon completion. Under these 
assumptions, any distributed solution that teleports qubits to QCs in $P'$ incurs 
round-trip teleportation costs. Note that a distributed solution may still keep some 
qubits entirely at $p_0$ throughout execution (incurring no teleportation cost for 
those qubits), while teleporting other qubits to QCs in $P'$ for some gates and 
returning them to $p_0$ afterward.
\end{remark}

\subsubsection{Sufficient Conditions for Optimality of Centralized Solution}
\label{SSSC:sufficient:conditions:for:optimality:of:centralized:solution}
The centralized solution is optimal if \emph{any} of the conditions S1 to S4 given below holds.

\noindent\textbf{Condition S1 (Dominant Teleportation Cost):}
The minimum teleportation cost to any QC in $P'$ exceeds the total potential savings 
from both lower gate execution costs and lower leasing costs at QCs in $P'$ 
compared to $p_0$:
\begin{equation}
\label{eq:cond-s1}
\min_{p \in P'} E(p_0, p) > \Delta C_{\text{gate}} + \Delta C_{\text{lease}},
\end{equation}
where
\begin{align}
\Delta C_{\text{gate}} &= \frac{1}{|Q|} \sum_{g \in G} \max_{p \in P'} \big(c^{g}_{p_0} - c^{g}_{p}\big)^{+}, \notag \\
\Delta C_{\text{lease}} &= \max_{p \in P'} \big[(c^{s}_{p_0} + c^{e}_{p_0}) - (c^{s}_p + c^{e}_p)\big]^{+}, \notag
\end{align}
and $x^{+} = \max(0, x)$.

Intuitively, $\Delta C_{\text{gate}}$ captures the average per-qubit savings in gate 
execution costs from running gates at cheaper QCs in $P'$ instead of $p_0$, and 
$\Delta C_{\text{lease}}$ captures the maximum per-qubit savings in leasing costs 
from storing and processing qubits at a cheaper QC in $P'$ instead of $p_0$. 
Condition \eqref{eq:cond-s1} ensures that even when both sources of savings are 
combined, the teleportation cost required to exploit them exceeds the total benefit, 
making the centralized solution at $p_0$ optimal.

\noindent\textbf{Condition S2 (Prohibitive Teleportation Cost):}
All teleportation costs from \(p_0\) to any other QC are sufficiently high:
\begin{equation}
\label{eq:cond-s2}
\min_{p \in P'}  E(p_0, p)  > \Delta C_{\text{gate}} + \Delta C_{\text{lease}},
\end{equation}
where
\begin{align}
\Delta C_{\text{gate}} &= \max_{g \in G} \big(c^{g}_{p_0} - \min_{p \in P'} c^{g}_{p}\big)^{+}, \notag \\
\Delta C_{\text{lease}} &= \max_{p \in P'} \big[(c^{s}_{p_0} + c^{e}_{p_0}) - (c^{s}_p + c^{e}_p)\big]^{+}, \notag
\end{align}
\textbf{AND} the centralized solution is no worse than executing the circuit using only the QCs in \(P'\):
\begin{equation}
\label{eq:cond-s2-isolated}
C^{*}_{\text{centralized}} \leq C^{*}_{\text{isolated}},
\end{equation}
where \(C^{*}_{\text{isolated}}\) is a lower bound on the cost of executing the circuit using only QCs in \(P'\) (i.e., without \(p_0\)):
\begin{equation}
\label{eq:isolated-lower-bound}
C^{*}_{\text{isolated}} \leq \sum_{p \in P'} (c^s_p + c^e_p) \cdot \min\left(|Q|, \min(s_p, e_p)\right) + \sum_{g \in G} \min_{p \in P'} c^g_{p}.
\end{equation}
Intuitively, the first part, \eqref{eq:cond-s2}, ensures that hybrid solutions (using both \(p_0\) and QC(s) in \(P'\)) are suboptimal by enforcing that teleporting qubits between \(p_0\) and any limited QC costs more than the potential savings in gate execution costs and leasing costs combined. The second part, \eqref{eq:cond-s2-isolated}, ensures that centralized execution at \(p_0\) is also better than circuit execution without using \(p_0\) at all. Together, the two parts guarantee global optimality of the centralized solution.

\noindent\textbf{Condition S3 (Capacity Bottleneck):}
The total network capacity excluding \(p_0\) is insufficient to store all qubits:
\begin{equation}
\label{eq:cond-s4a}
\sum_{p \in P'} \min(s_p, e_p) < |Q|,
\end{equation}
and the maximum total savings achievable by teleporting any subset of qubits to QCs 
in $P'$-- from both lower leasing costs and lower gate execution costs-- do 
not exceed the minimum teleportation cost incurred by doing so:
\begin{align}
\label{eq:cond-s4c}
&\max_{p \in P'}\big[(c^{s}_{p_0} + c^{e}_{p_0}) - (c^{s}_p + c^{e}_p)\big]^{+} 
\cdot \sum_{p \in P'} \min(s_p, e_p) \notag \\
&+ \sum_{g \in G} \max_{p \in P'} \big(c^{g}_{p_0} - c^{g}_{p}\big)^{+} \notag \\
&\leq \sum_{p \in P'} \min(s_p, e_p) \cdot 
\min_{p \in P'} E(p_0, p),
\end{align}
where the first term on the LHS is the maximum leasing cost 
savings from teleporting up to $\sum_{p \in P'} \min(s_p, e_p)$ qubits to QCs in 
$P'$, the second term on the LHS is the maximum gate execution cost savings from 
running gates at cheaper QCs in $P'$, and the RHS is the minimum total teleportation 
cost of sending those qubits to $P'$.

Intuitively, since $P'$ lacks sufficient aggregate capacity to store all $|Q|$ 
qubits, at least $k = |Q| - \sum_{p \in P'} \min(s_p, e_p)$ qubits must reside 
at $p_0$. However, the remaining qubits that \emph{could} fit in $P'$ need not 
be teleported there: they may simply be processed entirely at $p_0$ without any 
teleportation. Conditions \eqref{eq:cond-s4a} and \eqref{eq:cond-s4c} together ensure 
that even if we were to teleport as many qubits as possible to $P'$ to exploit lower 
leasing and gate execution costs, the combined savings from doing so would not 
offset the teleportation costs incurred. Hence, all qubits are best kept at $p_0$, 
making centralized execution optimal. Here, \(\min(s_p, e_p)\) represents the 
effective usable capacity of QC \(p\).
\noindent\textbf{Condition S4 (Communication Bottleneck):}
The following two conditions together guarantee that centralized execution at 
$p_0$ is globally optimal. First, the cost of centralized execution at $p_0$ 
is no worse than any solution that assigns some qubits entirely to QCs in $P'$ 
with no teleportation between $p_0$ and $P'$:
\begin{align}
\label{eq:cond-s5}
&c^{s}_{p_0} + c^{e}_{p_0} + \frac{1}{|Q|} \sum_{g \in G} c^{g}_{p_0} \notag \\
&\leq \min_{p \in P'} \left(c^{s}_p + c^{e}_p\right) 
+ \frac{1}{|Q|} \sum_{g \in G} \min_{p \in P'} c^{g}_{p},
\end{align}
where the LHS is the per-qubit cost of centralized execution at $p_0$ 
(leasing plus average gate execution cost), and the RHS is the minimum 
per-qubit leasing cost over all QCs in $P'$ plus the average per-gate 
minimum execution cost over all QCs in $P'$ (noting that different gates 
may be executed at different QCs in $P'$ in the optimal solution).

Second, the minimum teleportation cost between $p_0$ and any QC in $P'$ exceeds 
the maximum per-qubit savings achievable by teleporting a qubit from $p_0$ to 
some QC in $P'$:
\begin{align}
\label{eq:cond-s5-movement}
&\min_{p \in P'} E(p_0,p) > \notag \\
& \max_{p \in P'} 
\Big[\big(c^{s}_{p_0} + c^{e}_{p_0}\big) - \big(c^{s}_p + c^{e}_p\big)\Big]^{+} 
+ \frac{1}{|Q|} \sum_{g \in G} \max_{p \in P'} 
\big(c^{g}_{p_0} - c^{g}_{p}\big)^{+},
\end{align}
where the first term on the RHS is the maximum per-qubit 
leasing cost saving from teleporting a qubit to the cheapest QC in $P'$, and the 
second term is the average per-gate execution cost saving from running each 
gate at its cheapest QC in $P'$.

Intuitively, there are two types of potentially cheaper solutions to rule out: 
(i) \emph{Hybrid solutions with no teleportation}, in which some qubits are 
processed entirely at $p_0$ and others entirely at QCs in $P'$ with no 
inter-QC teleportation. Condition \eqref{eq:cond-s5} rules these out by ensuring 
that the per-qubit-and-gate cost at $p_0$ is no worse than the best achievable 
by splitting execution across $P'$, accounting for the fact that different gates 
may be run at different QCs in $P'$. (ii) \emph{Hybrid solutions with teleporation}, 
in which qubits are teleported between $p_0$ and QCs in $P'$ to exploit lower costs 
there. Condition \eqref{eq:cond-s5-movement} rules these out by ensuring that 
the teleportation cost exceeds the combined leasing and gate execution cost savings 
from such teleportation. Together, \eqref{eq:cond-s5} and 
\eqref{eq:cond-s5-movement} guarantee that centralized execution at $p_0$ is 
globally optimal.

\subsubsection{Necessary Conditions for Optimality of Distributed Solution}

We now present multiple conditions, N1 to N3, each of which is necessary for a 
distributed solution to outperform pure centralization at $p_0$. These conditions 
are stated under the qubit location assumptions of Remark~\ref{rem:qubit-location}.

\noindent\textbf{Condition N1 (Cost Advantage From Partial Distribution):}
There exists at least one QC $p \in P'$ and gate $g \in G$ such that the total 
cost of executing $g$ at $p$-- including the leasing cost at $p$, the 
round-trip teleportation cost from $p_0$ to $p$ and back, and the gate execution 
cost at $p$-- is strictly less than the total cost of executing $g$ at $p_0$ 
including its leasing cost:
\begin{equation}
\label{eq:cond-n1}
c^{g}_{p} + (c^{s}_p + c^{e}_p) + 2 \cdot E(p_0,p) < 
c^{g}_{p_0} + (c^{s}_{p_0} + c^{e}_{p_0}).
\end{equation}
Intuitively, for any distributed solution to be beneficial, there must exist at 
least one gate $g$ and QC $p \in P'$ such that executing $g$ at $p$ (including 
the leasing cost at $p$ and paying for the round-trip teleportation of the operand qubits from $p_0$ 
to $p$ and back) is cheaper than executing $g$ at $p_0$ (including its leasing 
cost). The round-trip teleportation is necessary because, by 
Remark~\ref{rem:qubit-location}, all qubits start and end at $p_0$.

\noindent\textbf{Condition N2 (Sufficient Savings From Distribution):}
There exists a subset of gates $G' \subseteq G$ and an assignment of these gates 
to QCs in $P'$ such that the total savings from executing $G'$ at QCs in $P'$ 
instead of $p_0$ exceed the total costs incurred by doing so:
\begin{align}
\label{eq:cond-n2}
&\sum_{g \in G'} \big(c^{g}_{p_0} - \min_{p \in P'} c^{g}_{p}\big) 
+ \sum_{q \in Q(G')} (c^{s}_{p_0} + c^{e}_{p_0}) \notag \\
&> \sum_{q \in Q(G')} 2 \cdot \min_{p \in P'} E(p_0, p) \notag \\
& \quad + \sum_{p \in P'} (c^s_p + c^e_p) \cdot |Q_p|,
\end{align}
where $Q(G') = \bigcup_{g \in G'} O_g$ is the set of qubits involved in the 
gates in $G'$, and $Q_p \subseteq Q(G')$ denotes the qubits assigned to QC $p$ 
with $\bigcup_{p \in P'} Q_p = Q(G')$.

Intuitively, the LHS is the total savings from running gates $G'$ at cheaper 
QCs in $P'$ instead of $p_0$, plus the leasing cost savings at $p_0$ for the 
qubits in $Q(G')$ that are teleported away from $p_0$. The RHS is the total 
round-trip teleportation cost for the qubits in $Q(G')$ (since by 
Remark~\ref{rem:qubit-location} they must return to $p_0$) plus the leasing 
costs at the QCs in $P'$ where those qubits are processed. For a distributed 
solution to be beneficial, the savings must exceed these costs.

\noindent\textbf{Condition N3 (Network Capacity Requirement):}
The network must have sufficient aggregate capacity to store and execute at 
least one gate:
\begin{equation}
\label{eq:cond-n3}
\sum_{p \in P'} \min(s_p, e_p) \geq \min_{g \in G} |O_g|.
\end{equation}
Intuitively, the network must be able to store and process at least the operand 
qubits of the gate with the fewest operands in the circuit. Without this minimal 
capacity, no distributed execution is possible regardless of cost considerations.

\subsubsection{Necessary and Sufficient Condition}
For the restricted case where all gate execution costs are identical (\(c^{g}_{p} = c^g\) for all \(p,g\)), we now provide a necessary and sufficient condition for 
the centralized solution (static allocation of all qubits to \(p_0\)) to be optimal.

\begin{theorem}
\label{thm:ns-homogeneous}
Suppose \(c^{g}_{p} = c^g\) for all \(p,g\), and that the following condition holds:
\begin{equation}
\label{eq:p0-used}
c^{s}_{p_0} + c^{e}_{p_0} \leq \min_{p \in P'} \left(c^{s}_p + c^{e}_p\right),
\end{equation}
which ensures that \(p_0\) is the cheapest QC for qubit storage and execution, so that the optimal solution necessarily stores at least one qubit at \(p_0\). Under this condition, the centralized solution at \(p_0\) is optimal if and only if
\begin{equation}
\label{eq:ns-homogeneous}
\min_{p \in P'} \Big[(c^{s}_p + c^{e}_p) - (c^{s}_{p_0} + c^{e}_{p_0}) + 2 E(p_0,p)\Big] \geq 0.
\end{equation}
\end{theorem}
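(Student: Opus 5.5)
We would prove the two directions separately. We work under the assumptions of Remark~\ref{rem:qubit-location}: every qubit starts and ends at $p_0$, so any qubit that is ever located at some $p \in P'$ incurs a teleportation cost of at least $E(p_0,p)$ on the way out and at least $E(p,p_0)$ on the way back. We take $E$ to be symmetric, which the statement implicitly assumes by writing $2E(p_0,p)$. The approach is an exchange argument on an arbitrary feasible solution, comparing it term by term with $C^{*}_{\text{centralized}}$.

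\textbf{Sufficiency.} Assume \eqref{eq:ns-homogeneous} and fix any feasible solution $\mathcal{S}$. Let $D \subseteq Q$ be the set of qubits that visit some QC in $P'$, and for each $q \in D$ let $p(q)$ be the first such QC.
\begin{enumerate}[label=(\roman*)]
\item Since $c^g_p = c^g$ for all $p$, the gate-execution term of $\mathcal{S}$ equals $\sum_g c^g$, exactly as in the centralized solution.
\item The makespan of $\mathcal{S}$ is at least the critical-path length $T_{\max}^{\text{opt}}$ by \eqref{EQ:precedence:makespan:constraint:1}--\eqref{EQ:precedence:makespan:constraint:2}. The centralized solution attains this because $p_0$ has unlimited capacity.
\item For the leasing term, we charge each qubit to the QCs it occupies. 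Each qubit must occupy storage somewhere in every slot, and it needs execution capacity at the QC where its gates run. We would then argue that $\mathcal{S}$'s leasing cost is at least $\sum_{q \notin D}(c^s_{p_0}+c^e_{p_0}) + \sum_{q\in D}(c^s_{p(q)}+c^e_{p(q)})$.
\item The teleportation term is at least $\sum_{q\in D} 2E(p_0,p(q))$.
\end{enumerate}
Subtracting $C^{*}_{\text{centralized}}$, the difference is at least $\sum_{q\in D}\big[(c^s_{p(q)}+c^e_{p(q)})-(c^s_{p_0}+c^e_{p_0})+2E(p_0,p(q))\big]$. Each summand is nonnegative by \eqref{eq:ns-homogeneous}, so the centralized solution is optimal.

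\textbf{Necessity.} We argue by contrapositive. Suppose some $p^\star \in P'$ gives a strictly negative bracket in \eqref{eq:ns-homogeneous}. We construct a solution that moves one qubit $q$ to $p^\star$ at the start and back at the end, with gates on $q$ executed at $p^\star$ if $A_{p^\star,g}=1$. The aim is to show this strictly beats the centralized solution, since it saves $(c^s_{p_0}+c^e_{p_0})-(c^s_{p^\star}+c^e_{p^\star})$ in leasing and pays $2E(p_0,p^\star)$ in teleportation. This requires choosing $q$ so that it does not share multi-qubit gates with qubits at $p_0$, or otherwise handling such gates by extra teleports. It also requires $\min(s_{p^\star},e_{p^\star})\ge 1$, and the return teleport must not increase $T_{\max}$, which holds if teleportation fits within existing slots.

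\textbf{Main obstacle.} The step we expect to be hardest is the leasing-cost accounting in item (iii) of sufficiency. Leasing is charged as $r^s_p \ge \max_t \sum_q x_{q,p,t}$, a peak over time, not a per-qubit sum. A qubit that is at $p_0$ in some slots and at $p$ in others need not force both $r_{p_0}$ and $r_p$ to grow by one. Condition \eqref{eq:p0-used} handles this: because $p_0$ is the cheapest QC, we can lower-bound total leasing by $|Q|$ units each priced at least $c^s_{p_0}+c^e_{p_0}$. Any extra units at $P'$ beyond this are then paid at a premium of $(c^s_p+c^e_p)-(c^s_{p_0}+c^e_{p_0})\ge 0$. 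We would try to make the charging rigorous by taking a time slot at which the total occupancy of $P'$ is maximal and matching each qubit at $P'$ in that slot with its round-trip teleportation cost. Necessity is the more delicate direction in general; for it we would, if needed, restrict to instances where some qubit is acted on only by gates available at $p^\star$, and state that restriction explicitly.
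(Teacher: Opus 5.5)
Your strategy is the paper's per-qubit exchange argument: the only possible gain from $P'$ is a leasing saving, and it must pay for a round trip $2E(p_0,p)$. However, step (iii) is false as stated, for the peak-leasing reason you flag yourself. Suppose $k$ qubits each visit the same $p\in P'$ in different slots. Then one leased unit at $p$ serves all of them, and the leasing cost can be $|Q|(c^s_{p_0}+c^e_{p_0})+(c^s_p+c^e_p)$. This is below your claimed bound $(|Q|-k)(c^s_{p_0}+c^e_{p_0})+k(c^s_p+c^e_p)$ whenever $(k-1)(c^s_p+c^e_p)>k(c^s_{p_0}+c^e_{p_0})$. So your lower bound $\sum_{q\in D}[\cdots]$ on the cost difference does not hold. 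Step (iv) has a separate flaw: it assumes the qubit returns to $p_0$ from $p(q)$ itself. The repair you sketch at the end is the right one, and it finishes the proof without $D$ or the brackets. Fix any slot $t$ and let $n_{p,t}$ be the number of qubits at $p$ in slot $t$. Charge each located qubit one storage and one execution unit, as the paper's formula for $C^{*}_{\text{centralized}}$ does. Then the leasing cost is at least $\sum_p (c^s_p+c^e_p)\,n_{p,t}\ge |Q|(c^s_{p_0}+c^e_{p_0})$ by \eqref{eq:p0-used}. Gate costs coincide, teleportation costs are nonnegative, and the makespan is at least the critical path, which the centralized solution attains.

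The idea you are missing is that the statement is degenerate. Under \eqref{eq:p0-used}, and since teleportation costs are nonnegative, every bracket in \eqref{eq:ns-homogeneous} is at least $2E(p_0,p)\ge 0$. So \eqref{eq:ns-homogeneous} holds automatically, and the theorem says only that the centralized solution is optimal under these hypotheses. The ``only if'' direction is therefore vacuous: no $p^\star$ with a negative bracket exists. Indeed, the leasing ``saving'' $(c^s_{p_0}+c^e_{p_0})-(c^s_{p^\star}+c^e_{p^\star})$ in your construction is $\le 0$. Your construction is unnecessary, as are the worries about shared multi-qubit gates and about capacity and gate availability at $p^\star$. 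The proposed restriction of the instance class is also unnecessary, and it would only weaken a statement that is already true. The paper's own proof is the informal version of your argument: a qubit gains from moving to $p$ only if the saving exceeds $2E(p_0,p)$, then rearrange. It neither handles peak leasing nor notices the vacuity, so your argument, trimmed to the single-slot bound, is the more rigorous of the two.
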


The condition in \eqref{eq:ns-homogeneous} states that for every QC \(p \in P'\), the additional per-qubit leasing cost at \(p\) compared to \(p_0\), plus the round-trip teleportation cost, must be non-negative. If this holds, teleporting any qubits to \(P'\) cannot reduce total cost, making centralized execution optimal. Conversely, if some QC in \(P'\) has sufficiently lower leasing costs to offset teleportation costs (making the expression negative), then partial distribution becomes beneficial.

\begin{IEEEproof}[Proof of Theorem \ref{thm:ns-homogeneous}]
With homogeneous gate costs (\(c^g_{p} = c^g\) for all \(p,g\)), gate execution costs are identical regardless of where gates execute, so the only potential cost savings from partial distribution can come from reduced per-qubit leasing costs. Condition \eqref{eq:p0-used} ensures that \(p_0\) has the lowest per-qubit leasing cost among all QCs. Therefore, in any cost-minimizing solution, it is never beneficial to store all qubits exclusively at QCs in \(P'\) while leaving \(p_0\) unused; the optimal solution necessarily stores at least one qubit at \(p_0\).

For any qubit \(q\) to benefit from teleporting to some QC \(p \in P'\), the leasing cost savings at \(p\) must exceed the round-trip teleportation cost:
\begin{equation*}
(c^{s}_{p_0} + c^{e}_{p_0}) - (c^{s}_p + c^{e}_p) > 2 E(p_0,p).
\end{equation*}
The factor of $2$ accounts for round-trip teleportation, i.e., for teleporting the qubit from \(p_0\) to \(p\), and eventually returning to \(p_0\) (since \(p_0\) stores at least one qubit, there must be gates executed at \(p_0\), requiring qubits to return).

Rearranging, centralized execution at \(p_0\) is optimal if and only if no QC in \(P'\) offers sufficient cost advantage:
\begin{equation*}
(c^{s}_p + c^{e}_p) - (c^{s}_{p_0} + c^{e}_{p_0}) + 2 E(p_0,p) \geq 0, \, \forall p \in P'.
\end{equation*}
Taking the minimum over all \(p \in P'\) gives condition \eqref{eq:ns-homogeneous}.
\end{IEEEproof}

\subsubsection{Computational Complexity of Verification}
Verification of the sufficient conditions S1-S4 and necessary conditions N1-N3 involves computing cost differences and capacities in $O(|P| \cdot |G|)$ time, computing $C^{*}_{\text{centralized}}$ via topological sort in $O(|G| + |E|)$ time, computing the lower bound $C^{*}_{\text{isolated}}$ in $O(|P| \cdot |G|)$ time, and gate availability matrix checks in $O(|P| \cdot |G|)$ time. The total complexity is $O(|P| \cdot |G| + |E|)$. However, since the JQLQCD problem is NP-complete, this verification is significantly more efficient than determining if a distributed solution is optimal without using the necessary or sufficient conditions and by solving the full ILP. Algorithm \ref{alg:centralized-check} provides a practical decision procedure that leverages these sufficient and necessary conditions to avoid expensive ILP solving when centralized execution is provably optimal.

\begin{algorithm}[H]
\caption{Verification  of Sufficient and Necessary Conditions}
\label{alg:centralized-check}
\begin{algorithmic}[1]
\State \textbf{Input:} Instance with $p_0$ (unlimited QC), $P'$ (limited QCs)
\State Compute $C^{*}_{\text{centralized}}$ via gate scheduling on $p_0$
\State Compute lower bound $C^{*}_{\text{isolated}}$ for execution on $P'$ only
\If{any of S1--S4 holds}
    \Return Optimal centralized solution
\ElsIf{any of N1--N3 fails}
    \Return Optimal centralized solution
\Else
    \Return Solve full ILP
\EndIf
\end{algorithmic}
\end{algorithm}

\subsubsection{Example}
We now provide an example in which the centralized solution is optimal. Suppose 
$p_0$ has $c^{s}_{p_0} = c^{e}_{p_0} = 0$ (free unlimited capacity), 
$P' = \{p_1, p_2\}$ with $c^{s}_{p_i} = c^{e}_{p_i} = 10$, teleportation 
costs $E(p_0, p_i) = 50$ for $i \in \{1,2\}$, and all gate 
costs are identical at $c^{g}_{p} = 1$. We now show that Condition S2 holds. 
First, we verify \eqref{eq:cond-s2}. We have $\Delta C_{\text{gate}} = 0$ 
since all gate costs are identical, and 
\begin{align*}
\Delta C_{\text{lease}} &= \max_{p \in P'} \big[(c^{s}_{p_0} + c^{e}_{p_0}) 
- (c^{s}_p + c^{e}_p)\big]^{+} \\
&= \max(0 - 20, 0)^{+} \\
&= 0,
\end{align*}
since the leasing costs at $P'$ are higher than at $p_0$. So, condition 
\eqref{eq:cond-s2} becomes
\[
50 > 0 + 0 = 0,
\]
which is true. This confirms that hybrid solutions involving teleportation between 
$p_0$ and $P'$ are not cost-effective. We now verify \eqref{eq:cond-s2-isolated}. 
Note that
\begin{align*}
C^{*}_{\text{centralized}} &= 0 \cdot |Q| + 0 \cdot |Q| + |G| \cdot 1 
+ \beta \cdot T^{\text{opt}}_{\max} \\
&= |G| + \beta \cdot T^{\text{opt}}_{\max}, \\
C^{*}_{\text{isolated}} &\geq (c^s_{p_1} + c^e_{p_1}) \cdot \min(|Q|, 
\min(s_{p_1}, e_{p_1})) \\
&\quad + (c^s_{p_2} + c^e_{p_2}) \cdot \min(|Q|, \min(s_{p_2}, e_{p_2})) 
+ |G| \cdot 1 \\
&= 20 \cdot \min(|Q|, \min(s_{p_1}, e_{p_1})) \\
&\quad + 20 \cdot \min(|Q|, \min(s_{p_2}, e_{p_2})) + |G|.
\end{align*}
For $|Q| \geq 1$,
\[
C^{*}_{\text{isolated}} \geq 20 + |G| > |G| + \beta \cdot T^{\text{opt}}_{\max} 
= C^{*}_{\text{centralized}},
\]
where the last inequality holds when $\beta \cdot T^{\text{opt}}_{\max} \leq 20$, 
which is satisfied, e.g., when $\beta = 0$ or when the circuit is shallow enough 
that $T^{\text{opt}}_{\max} \leq \frac{20}{\beta}$. Since both parts of 
Condition S2 are satisfied, the centralized solution is optimal.

\subsection{Case 2: Homogeneous QCs with Zero Teleportation Cost}
\label{subsec:homogeneous-zero-cost}

\textbf{Problem Setting:} All QCs are identical (\(s_p = s\), \(e_p = e\), \(c^{s}_p = c^s\), \(c^{e}_p = c^e\), \(c^{g}_p = c^g\), \(A_{p,g} = 1\) for all \(p,g\)), and \(E(p',p) = 0\) for all \(p' \neq p\).

\textbf{Solution Strategy:} Teleportation is free, so qubits can be relocated 
without cost. The optimal strategy depends on the weight \(\beta\) of the 
makespan term in \eqref{eq:obj}. In Sections \ref{SSSC:beta:eq:0} and 
\ref{SSSC:beta:gt:0}, we consider the cases in which \(\beta = 0\) and 
\(\beta > 0\), respectively.

\subsubsection{When \(\beta = 0\) (Cost Minimization Only)}
\label{SSSC:beta:eq:0}
Since all QCs have identical costs and teleportation is free, minimizing the total 
leasing cost requires minimizing the number of QCs used. The optimal strategy 
is to use the minimum number of QCs necessary to satisfy the capacity 
constraints. The minimum number of QCs needed is determined by both storage 
and execution capacity constraints:
\begin{equation}
k_{\min} = \max\left\{ \left\lceil \frac{|Q|}{s} \right\rceil, 
\left\lceil \frac{\max_{t} |\{q : \text{executing at } t\}|}{e} 
\right\rceil \right\},
\end{equation}
where the first term ensures sufficient storage capacity across all QCs and 
the second term ensures sufficient execution capacity for the peak execution 
demand. An optimal solution is to select any $k_{\min}$ QCs from $P$ (all 
choices are optimal since the QCs are identical), say QCs 
$\{p_1, p_2, \ldots, p_{k_{\min}}\}$, and set $r^{s}_{p} = r^{e}_{p} = 0$ 
for all $p \notin \{p_1, \ldots, p_{k_{\min}}\}$.

For the selected QCs $\{p_1, \ldots, p_{k_{\min}}\}$, the optimal values of 
$r^{s}_{p_i}$ and $r^{e}_{p_i}$ are not fixed in advance but depend on the 
specific assignment of qubits and gates to QCs and their schedule over time. 
In general, any feasible assignment of qubits to the $k_{\min}$ selected QCs 
and gate schedule is optimal when $\beta = 0$, provided it respects the 
capacity constraints \eqref{EQ:resource:usage:constraint1} and 
\eqref{EQ:resource:usage:constraint2}. Given such an assignment and schedule, 
the minimum sufficient leasing quantities are
\begin{align}
r^{s}_{p_i} &= \max_{t \in T} \sum_{q \in Q} x_{q,p_i,t}, 
\quad i \in \{1, \ldots, k_{\min}\}, \\
r^{e}_{p_i} &= \max_{t \in T} \sum_{q \in Q} z_{q,p_i,t}, 
\quad i \in \{1, \ldots, k_{\min}\},
\end{align}
i.e., the storage (respectively, execution) leasing at each QC $p_i$ equals 
the peak number of qubits stored (respectively, actively executing) at $p_i$ 
over all time slots, ensuring no over-leasing occurs. The optimal total cost 
is
\begin{align}
C^{*}_{\beta=0} = \sum_{i=1}^{k_{\min}} \left( c^s \cdot r^{s}_{p_i} 
+ c^e \cdot r^{e}_{p_i} \right) + \sum_{g \in G} c^{g},
\end{align}
where the leasing terms are minimized by choosing the assignment and schedule 
that minimize the peak storage and execution usage at each QC. Since all QCs 
are identical and $c^s, c^e$ are the same for all QCs, the total leasing cost 
satisfies
\begin{align}
\sum_{i=1}^{k_{\min}} \left( c^s \cdot r^{s}_{p_i} + c^e \cdot r^{e}_{p_i} 
\right) &\geq c^s \cdot |Q| \notag \\
& \quad + c^e \cdot \max_{t} 
|\{q : \text{executing at } t\}|,
\end{align}
with equality achievable when qubits and gates are distributed such that the 
peak storage and execution loads are spread as evenly as possible across the 
$k_{\min}$ QCs.

Intuitively, exactly $k_{\min}$ QCs are used for the following reason. Using 
fewer than $k_{\min}$ QCs violates capacity constraints and hence is 
infeasible. Using more than $k_{\min}$ QCs increases the total leasing cost 
without reducing it when $\beta = 0$, since the total storage and execution 
requirements are fixed by the circuit and do not decrease by spreading them 
across more QCs.

We now explain why multiple optimal solutions exist. Since all QCs are 
identical and the teleportation cost is zero, there are $\binom{|P|}{k_{\min}}$ 
optimal choices of which $k_{\min}$ QCs to use. For each such choice, any 
feasible assignment of qubits to the selected QCs and any feasible gate 
schedule achieves the same total gate execution cost $\sum_{g \in G} c^g$, 
and the total leasing cost depends only on the peak storage and execution 
loads induced by the assignment and schedule. Hence, the optimal solution 
is not unique, and any assignment and schedule that minimizes the peak loads 
at each selected QC is optimal.

 \subsubsection{When \(\beta > 0\) (Makespan Matters)}
\label{SSSC:beta:gt:0}
The optimal solution trades off leasing cost against makespan by distributing 
gates across multiple QCs to enable parallel execution. Since all QCs are 
identical and teleportation is free, the optimal number of active QCs $k^*$ is the 
solution to
\begin{align}
k^* &= \underset{k \in \{k_{\min}, \ldots, |P|\}}{\mathrm{argmin}}\, \left\{ 
\sum_{i=1}^{k} \left( c^s \cdot r^{s}_{p_i}(k) + c^e \cdot 
r^{e}_{p_i}(k) \right) \right. \notag \\
& \qquad \qquad \qquad \qquad   + \beta\, T_{\max}(k) \Biggr\},
\end{align}
where $r^{s}_{p_i}(k) = \max_{t \in T} \sum_{q} x_{q,p_i,t}$ and 
$r^{e}_{p_i}(k) = \max_{t \in T} \sum_{q} z_{q,p_i,t}$ are the optimal 
leasing quantities at QC $p_i$ under the best feasible assignment and 
schedule using $k$ QCs, and $T_{\max}(k)$ is the minimum achievable makespan 
using $k$ QCs. Both $r^{s}_{p_i}(k)$, $r^{e}_{p_i}(k)$, and $T_{\max}(k)$ 
depend on the specific gate assignment and qubit placement chosen for $k$ QCs, 
and need not correspond to an equal distribution of qubits or gates across QCs. 
In general, finding the optimal assignment and schedule for a given $k$ requires 
solving a scheduling subproblem, and $T_{\max}(k)$ is determined by both the 
circuit's critical path length and the available parallelism. Since the leasing 
cost and makespan trade off against each other as $k$ varies-- using more QCs 
can reduce makespan but increases total leasing cost-- the optimal $k^*$ is 
found by evaluating the above objective over all feasible $k \in 
\{k_{\min}, \ldots, |P|\}$.

In the general case in which the circuit has limited parallelism or 
heterogeneous gate dependencies, $T_{\max}(k)$ may not decrease with $k$ 
beyond a certain point (e.g., when the critical path dominates), and the 
optimal leasing quantities may be highly asymmetric across QCs due to uneven 
gate distributions. In such cases, the full ILP given in Section 
\ref{SSC:problem:formulation} must be solved to find both the optimal $k^*$ 
and the corresponding optimal assignment and schedule.

\subsubsection{Characterization of All Optimal Solutions}

A solution is optimal if and only if the following conditions hold: The solution satisfies \emph{minimal leasing}, wherein for each QC \(p\), \(r^{s}_{p} = \max_{t} \sum_{q} x_{q,p,t}\) and \(r^{e}_{p} = \max_{t} \sum_{q} z_{q,p,t}\), ensuring that no over-leasing occurs. The gate schedule achieves \emph{optimal makespan} \(T_{\max}^{*}\) for the chosen set of active QCs, representing the minimum possible execution time. Finally, \emph{optimal QC selection} ensures that the number of active QCs \(k\) minimizes the total cost as specified in the formula above.

Since the teleportation cost is zero, the specific assignment of qubits and gates to QCs is irrelevant, as long as capacity constraints are satisfied and makespan is minimized. 

Note that the assumption \(E(p', p) = 0\) we have made in Case 2 means that teleportation has zero \emph{cost}, but does not necessarily imply instantaneous teleportation. Two interpretations are possible: (i) \textbf{Zero cost, zero time} where teleportation is both free and instantaneous, allowing qubits to be teleported between QCs without affecting the makespan; in this case, the optimization focuses purely on minimizing leasing costs, while achieving the best possible makespan through parallelization. (ii) \textbf{Zero cost, non-zero time} assumes that teleportation is free, but takes time \(\tau_{\text{teleport}}(p', p) > 0, \, \forall p', p\), requiring the makespan calculation to account for teleportation delays as \(T_{\max} \geq T_{\max}^{\text{crit}} + \sum_{p',p} n_{\text{teleport}}(p',p) \cdot \tau_{\text{teleport}}(p',p)\), where \(n_{\text{teleport}}(p',p)\) is the number of qubit teleportations from $p'$ to $p$ in the solution. We have adopted interpretation (i): teleportation is instantaneous (\( \tau_{\text{teleport}}(p', p) = 0, \, \forall p', p\)). Under this assumption, there are exponentially many optimal solutions that differ only in which QCs are used and how work is distributed among them, but all achieve the same total cost. Note that if teleportation takes non-zero time, then even with zero cost, the problem becomes more constrained, as the optimizer must minimize the number of teleportations to reduce makespan, even though teleportation cost does not contribute to the objective function.

\subsubsection{Conditions for the Optimality of Load Balancing}
In general, load balancing may not be optimal due to the following reason. 
Load balancing across all \(|P|\) QCs minimizes the makespan, but 
\emph{increases the leasing cost} because more QCs must lease resources. 
When \(\beta\) is small (cost-dominant regime), concentrating the load on 
fewer QCs is cheaper.

A sufficient condition for load balancing across all \(|P|\) QCs to be 
optimal is that for all \(k \in \{1, \ldots, |P|-1\}\),
\begin{align}
\label{eq:load-balance-sufficient}
&\sum_{i=1}^{|P|} \left(c^s \cdot r^s_{p_i}(|P|) + c^e \cdot r^e_{p_i}(|P|)\right) 
+ \beta \cdot T_{\max}(|P|) \notag \\
&< \sum_{i=1}^{k} \left(c^s \cdot r^s_{p_i}(k) + c^e \cdot r^e_{p_i}(k)\right) 
+ \beta \cdot T_{\max}(k),
\end{align}
where $r^s_{p_i}(k)$ and $r^e_{p_i}(k)$ are the optimal peak storage and 
execution leasing quantities at QC $p_i$ under the best feasible assignment 
and schedule using $k$ QCs, and $T_{\max}(k)$ is the minimum achievable 
makespan using $k$ QCs. Both quantities depend on the specific circuit 
structure, gate dependencies, and qubit assignments, and must in general be 
evaluated by solving the scheduling subproblem for each $k$. This condition 
must be verified for each $k$, requiring $O(|P|)$ comparisons.

We now provide a cost formula for the special case in which the circuit is 
perfectly parallelizable, meaning its gates can be partitioned into $k$ 
independent groups of equal total duration executable simultaneously on $k$ 
QCs with no inter-QC dependencies. In this case, the cost of using $k$ QCs 
is given by
\begin{align}
\text{Cost}(k) &= \sum_{i=1}^{k}\left(c^s \cdot r^s_{p_i}(k) + 
c^e \cdot r^e_{p_i}(k)\right) \notag \\
&\quad + \sum_{g \in G} c^{g} + 
\beta \cdot \max\left(T_{\max}^{\text{crit}}, \frac{W_{\text{total}}}{k}\right),
\label{EQ:load:balancing:opt:cost:function}
\end{align}
where \(W_{\text{total}} = \sum_{g \in G} \tau_g\) is the total work and 
\(T_{\max}^{\text{crit}}\) is the critical path length. The $\max(\cdot)$ 
term reflects the fact that $T_{\max}(k)$ is bounded below by both the 
critical path length (due to sequential dependencies that cannot be 
parallelized) and $\frac{W_{\text{total}}}{k}$ (the average work per QC even 
with perfect load balancing). For circuits that do not satisfy the perfect 
parallelizability assumption, \eqref{EQ:load:balancing:opt:cost:function} 
provides only a lower bound on the true cost, and the full ILP given in 
Section~\ref{SSC:problem:formulation} must be solved to find the optimal 
solution.

\subsection{Case 3: Chain Topology with Sequential Gates}
\label{subsec:chain-sequential}

\subsubsection{Single-Qubit Sequential Gates} 
\label{SSSC:single:qubit:sequential:gates}
\hfill\\
\indent 
\textbf{Problem Setting:} The circuit is a linear chain of single-qubit gates 
\(g_1, g_2, \ldots, g_{|G|}\) on one qubit \(q\), with precedence constraints 
\(g_i \prec g_{i+1}\) for all $i \in \{1,\ldots, |G|-1\}$, and gate durations 
$\tau_g = 1$ for all $g \in G$. The QCs form an arbitrary network topology.

Note that this special case, while theoretically tractable, represents only a 
very limited class of quantum transformations (arbitrary single-qubit unitaries) 
and hence does not capture the complexity of realistic quantum circuits.

\textbf{Solution:} This problem reduces to finding a minimum-cost path in a 
time-expanded graph \(\mathcal{G}\), where each node \((p, i)\) represents the 
qubit being at QC \(p\) after executing gate \(g_i\), and each edge represents 
either staying at the current QC to execute the next gate or teleporting to another 
QC before executing it. The cost of a stay edge from \((p, i)\) to \((p, i+1)\) 
is \(c^{g_{i+1}}_p\) (gate execution cost at $p$). The cost of a teleport edge from 
\((p, i)\) to \((p', i+1)\) is \(E(p,p') + c^{g_{i+1}}_{p'}\) 
(teleportation cost plus gate execution cost at $p'$), plus the one-time leasing cost 
$c^s_{p'} + c^e_{p'}$ if $p'$ has not been used before.

\begin{theorem}
The shortest path in \(\mathcal{G}\) corresponds to an optimal solution to the 
special case of the JQLQCD problem in which the quantum circuit is a single-qubit 
sequential circuit with unit-duration gates.
\end{theorem}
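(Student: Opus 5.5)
The plan is to establish a cost-preserving correspondence between feasible solutions of the single-qubit sequential instance and source-to-sink paths in $\mathcal{G}$. I would add a source node $s$ with an edge to each $(p,1)$ of cost $c^s_p+c^e_p+c^{g_1}_p$ (QC $p$ being the first used), and a sink with zero-cost edges from every $(p,|G|)$. First I would remove the makespan term. Since the gates form a chain with $\tau_g=1$, constraint \eqref{EQ:precedence:makespan:constraint:1} forces $t_{g_{i+1}}\ge t_{g_i}+1$. By the teleportation-time assumption of the Remark in Section~\ref{sec:system-model}, a teleportation completes within its slot, so the schedule $t_{g_i}=i-1$ is always attainable. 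Hence in any optimal solution $T_{\max}=|G|$ regardless of placement, and $\beta T_{\max}$ is a constant that can be added to the path length.

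\textbf{Solution $\Rightarrow$ path (lower bound).} Take any feasible solution and let $p_i$ be the QC with $u_{g_i,p_i}=1$. By \eqref{EQ:operands:present}, the qubit is at $p_i$ at time $t_{g_i}$. Let $\pi$ be the path $s,(p_1,1),\dots,(p_{|G|},|G|)$. I would bound each cost component of the solution below by the corresponding part of the length of $\pi$:
\begin{itemize}
\item Gate costs coincide exactly.
\item For each $i$ with $p_i\neq p_{i+1}$, the location must change between $t_{g_i}$ and $t_{g_{i+1}}$. By \eqref{EQ:movement:consistency:constraint:2}, this requires a sequence of teleportations from $p_i$ to $p_{i+1}$ of total cost at least $d(p_i,p_{i+1})$, the shortest-path distance under $E$. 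I would therefore define teleport edges with $d$ (the metric closure of $E$), or equivalently assume $E$ satisfies the triangle inequality. On an arbitrary topology this step is needed, because direct $E(p,p')$ may exceed a multi-hop route.
\item Every QC visited has $x_{q,p,t}=1$ for some $t$, and executes a gate so $z_{q,p,t}=1$ can be required there. Hence \eqref{EQ:resource:usage:constraint1} and \eqref{EQ:resource:usage:constraint2} give $r^s_p,r^e_p\ge 1$, so each distinct QC used incurs at least $c^s_p+c^e_p$ once.
\end{itemize}
Thus the solution's cost is at least the length of $\pi$ with one-time leasing charges.

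\textbf{Path $\Rightarrow$ solution (achievability).} Given a path, set $t_{g_i}=i-1$ and $u_{g_i,p_i}=1$. Keep the qubit at $p_i$ during slot $i-1$, realise each hop by the teleportations of a shortest route at the start of slot $i$, and set $r^s_p=r^e_p=1$ on visited QCs and $0$ elsewhere. Every constraint is checked directly; capacities are fine since $s_p,e_p\ge1$, and nodes $(p,i)$ with $A_{p,g_i}=0$ are deleted. The resulting cost equals the path length plus $\beta|G|$. The two directions together give that a minimum-length path yields an optimal solution.

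\textbf{Main obstacle.} The ``one-time leasing cost if $p'$ has not been used before'' makes edge costs path-dependent, so $\mathcal{G}$ as stated is not an ordinary weighted graph. I would handle this by defining the ``shortest path'' as the minimum over paths of this history-dependent length, which is exactly what both directions above use. The two directions then prove the theorem with that definition. To make it algorithmic, I would augment each node to $(p,i,S)$ with $S$ the set of QCs already leased. This is exact and polynomial for bounded $|P|$. Alternatively, I would identify conditions under which revisiting a previously left QC is never beneficial, so that charging the lease on every entry is equivalent. I expect this point to need the most care.
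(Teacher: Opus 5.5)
This is the same argument as the paper's: identify a feasible solution with the path of its gate locations, and match its cost to the path length with leasing charged once per distinct QC. Your version is more careful than the paper's, which silently drops the constant $\beta T_{\max}=\beta|G|$, tacitly relies on the triangle inequality for $E$ that you flag (without it, relaying through an already-leased QC over an extra slot can beat the edge $E(p_i,p_{i+1})$ when $\beta$ is small), and only asserts that Dijkstra works ``when edge costs properly account for one-time leasing charges.'' Your history-dependent path length, computed via the $(p,i,S)$ augmentation, is the right way to make this precise, and the concern is real: take $E\equiv 0$, unit leasing costs, zero gate costs, QCs as sets, gates as elements, and $A_{p,g}$ encoding membership; then the optimum is exactly a minimum set cover. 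One attribution needs correcting. Like the paper, your lower bound uses two conventions the ILP does not literally impose: that a teleport into $p_{i+1}$ departs from the qubit's current location (constraint \eqref{EQ:movement:consistency:constraint:2} only requires $w^t_{q,p',p_{i+1}}=1$ for \emph{some} $p'$), and that executing a gate occupies execution capacity (no constraint ties $u_{g,p}$ to any $z_{q,p,t}$). State these as modeling assumptions rather than deriving them from the constraints.
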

\begin{IEEEproof}
Any feasible solution specifies a sequence of QC locations 
\((p_0, p_1, \ldots, p_{|G|})\) where \(p_i\) is the QC where gate \(g_i\) 
executes, which naturally corresponds to a path 
\((p_0, 0) \to (p_1, 1) \to \cdots \to (p_{|G|}, |G|)\) in \(\mathcal{G}\). 
The total cost of a solution is
\begin{equation}
\sum_{p \in P_{\text{used}}} (c^{s}_p + c^{e}_p) + \sum_{i=1}^{|G|} 
c^{g_i}_{p_i} + \sum_{i=1}^{|G|} \mathbb{1}_{\{p_i \neq p_{i-1}\}} \cdot 
E(p_{i-1}, p_i),
\end{equation}
where \(P_{\text{used}} = \{p_i : i \in \{0, 1, \ldots, |G|\}\}\) is the set 
of distinct QCs used, with leasing costs paid once per QC. Since \(\mathcal{G}\) 
is a DAG with non-negative edge costs, Dijkstra's algorithm (or DP) finds the 
shortest path, which corresponds to the minimum-cost solution when edge costs 
properly account for one-time leasing charges as described in Algorithm 
\ref{alg:singlequbit-path}. Any other solution corresponds to a longer path 
with higher cost.
\end{IEEEproof}

\begin{algorithm}[hbt]
\caption{Minimum-Cost Path for Sequential Single-Qubit Gate Execution}
\label{alg:singlequbit-path}
\begin{algorithmic}[1]
\State \textbf{Input:} Set of QCs \(P\), gates \(G = \{g_1, \ldots, g_{|G|}\}\), 
costs \(c^s_p, c^e_p, c^g_{p}, E(p,p')\)
\State \textbf{Output:} Minimum-cost execution path
\State Construct the DAG \(\mathcal{G} = (\mathcal{V}, \mathcal{E})\):
\State \quad Nodes: \(\mathcal{V} = \{(p, i) : p \in P,\ i \in \{0, 1, \ldots, 
|G|\}\}\)
\State \quad \quad where \((p, i)\) means the qubit is at QC \(p\) after 
executing gate \(g_i\) (\(i=0\) is the initial state)
\State Initialize edges \(\mathcal{E} = \emptyset\), leasing costs 
$L(p) = c^s_p + c^e_p$ for each $p \in P$
\For{\(i = 0\) to \(|G|-1\)}
    \For{each \((p, i) \in \mathcal{V}\)}
        \State \textbf{Stay edge:} Add \((p, i) \to (p, i+1)\) with cost 
        \(c^{g_{i+1}}_{p}\)
        \For{each \(p' \in P,\ p' \neq p\)}
            \State \textbf{Teleport edge:} Add \((p, i) \to (p', i+1)\) with cost 
            \(E(p,p') + c^{g_{i+1}}_{p'}\)
        \EndFor
    \EndFor
\EndFor
\State Apply Dijkstra's algorithm or DP on \(\mathcal{G}\), initializing each 
starting node \((p_{\text{init}}, 0)\) with cost \(c^s_{p_{\text{init}}} + 
c^e_{p_{\text{init}}}\), and adding the one-time leasing cost $c^s_{p'} + 
c^e_{p'}$ the first time any QC $p'$ is visited along a path
\State \Return Minimum-cost path and its total cost
\end{algorithmic}
\end{algorithm}

The nodes of \(\mathcal{G}\) are pairs \((p, i)\) with \(p \in P\) and 
\(i \in \{0, \ldots, |G|\}\), giving \(O(|P| \cdot |G|)\) nodes. Each node 
has at most \(|P|\) outgoing edges, giving \(O(|P|^2 \cdot |G|)\) edges in 
total. The one-time leasing costs are tracked along each path during the 
shortest-path computation. Using Dijkstra's algorithm with a Fibonacci heap, 
the complexity is \(O(|P|^2 \cdot |G| \cdot \log(|P| \cdot |G|))\), or 
\(O(|P|^2 \cdot |G|)\) using DP.

\subsubsection{Multi-Qubit Sequential Gates}
\label{SSSC:multi:qubit:sequential:gates}

\hfill\\
\indent 
\textbf{Problem Setting:} The circuit is a linear chain of gates 
\(g_1, g_2, \ldots, g_{|G|}\) with precedence constraints \(g_i \prec g_{i+1}\) 
for all $i \in \{1,\ldots, |G|-1\}$ and gate durations $\tau_g = 1$ for all 
$g \in G$. Each gate \(g_i\) operates on a subset of qubits 
\(O_{g_i} \subseteq Q\) with \(|O_{g_i}| \geq 1\). QCs form an arbitrary 
network topology. This is a generalization of the scenario considered in Section 
\ref{SSSC:single:qubit:sequential:gates} and encompasses realistic quantum 
circuits such as sequential quantum Fourier transforms (QFTs), layered variational quantum circuits, CNOT 
ladders, and quantum error correction (QEC) syndrome extraction circuits \cite{nielsen2010quantum}.

\textbf{Solution:} The problem still reduces to shortest path finding, but 
the state space must track the locations of all qubits involved in each gate. 
Algorithm \ref{alg:multiqubit-path} extends Algorithm \ref{alg:singlequbit-path} 
to handle multiple operand qubits per gate.

\begin{algorithm}
\caption{Minimum-Cost Path for Multi-Qubit Sequential Gate Execution}
\label{alg:multiqubit-path}
\begin{algorithmic}[1]
\State \textbf{Input:} Gates \(G = \{g_1, \ldots, g_{|G|}\}\), QCs \(P\), 
qubit operand sets \(O_{g_i}\), costs \(c^s_p, c^e_p, c^g_{p},  
E(p,p')\)
\State \textbf{Output:} Minimum-cost execution path
\State Construct DAG \(\mathcal{G} = (\mathcal{V}, \mathcal{E})\):
\State \quad Nodes: \(\mathcal{V} = \{(\sigma, i) : \sigma \in 
\text{Config}(O_{g_i}),\ i \in \{0, 1, \ldots, |G|\}\}\)
\State \quad \quad where \(\text{Config}(O_{g_i}) = \{\sigma : O_{g_i} \to P\}\) 
is the set of all mappings from qubits in \(O_{g_i}\) to QCs, with 
\(|\text{Config}(O_{g_i})| = |P|^{|O_{g_i}|}\), and \((\sigma, i)\) represents 
the qubit configuration after executing gate \(g_i\)
\State Initialize edges \(\mathcal{E} = \emptyset\)
\For{\(i = 0\) to \(|G|-1\)}
    \For{each state \((\sigma, i) \in \mathcal{V}\)}
        \For{each \(p \in P\) with \(|O_{g_{i+1}}| \leq \min(s_p, e_p)\) and 
        \(A_{p,g_{i+1}} = 1\)}
            \State Define \(\sigma': O_{g_{i+1}} \to P\) by \(\sigma'(q) = p\) 
            for all \(q \in O_{g_{i+1}}\)
            \State Compute teleportation cost: 
            \(\text{TeleportCost}(\sigma, \sigma') = \sum_{q \in O_{g_{i+1}}} 
            \mathbb{1}_{\{\sigma(q) \neq p\}} \cdot 
             E(\sigma(q), p)\)
            \State Compute incremental leasing cost: For each QC $p$, let 
            $n_{\text{prev}}(p)$ be the number of qubits leased at $p$ before 
            stage $i+1$ and $n_{\text{new}}(p)$ be the number of qubits in 
            $O_{g_{i+1}}$ assigned to $p$ under $\sigma'$; the incremental 
            leasing cost is:
            \(\text{LeaseCost} = \sum_{p \in P} 
            \max(0,\ n_{\text{new}}(p) - n_{\text{prev}}(p)) \cdot 
            (c^s_p + c^e_p)\)
            \State Compute edge cost: \(c = \text{LeaseCost} + 
            c^{g_{i+1}}_{p} + \text{TeleportCost}(\sigma, \sigma')\)
            \State Add edge \((\sigma, i) \to (\sigma', i+1)\) with cost 
            \(c\) to \(\mathcal{E}\)
        \EndFor
    \EndFor
\EndFor
\State Apply DP to find minimum-cost path from initial configuration 
\((\sigma_0, 0)\) with initial leasing cost 
\(\sum_{p \in P} n_{\text{init}}(p) \cdot (c^s_p + c^e_p)\), where 
$n_{\text{init}}(p)$ is the number of qubits in $O_{g_0}$ initially at $p$, 
to any final configuration \((\sigma_f, |G|)\)
\State \Return Minimum-cost path
\end{algorithmic}
\end{algorithm}

The nodes of \(\mathcal{G}\) are pairs \((\sigma, i)\) where 
\(\sigma \in \text{Config}(O_{g_i})\) and \(i \in \{0,\ldots,|G|\}\), giving 
\(O(|P|^{k_{\max}} \cdot |G|)\) nodes, where \(k_{\max} = \max_i |O_{g_i}|\). 
Each node has at most \(|P|\) outgoing edges (one per target QC for the next 
gate), giving \(O(|P|^{k_{\max}+1} \cdot |G|)\) edges. Using DP, the time 
complexity is \(O(|P|^{k_{\max}+1} \cdot |G|)\) and space complexity is 
\(O(|P|^{k_{\max}} \cdot |G|)\). For circuits with at most two-qubit gates 
(\(k_{\max} \leq 2\)), this gives \(O(|P|^3 \cdot |G|)\) time, which is 
polynomial and tractable for moderate-sized networks.

\begin{theorem}
The shortest path in \(\mathcal{G}\) corresponds to an optimal solution of 
the special case of the JQLQCD problem in which there is a sequential 
multi-qubit circuit with unit-duration gates.
\end{theorem}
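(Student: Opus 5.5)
The plan mirrors the proof of the single-qubit theorem: build a cost-preserving correspondence between feasible solutions of the sequential multi-qubit instance and source-to-sink paths in the DAG $\mathcal{G}$ of Algorithm~\ref{alg:multiqubit-path}. Then a shortest path gives an optimal solution. Because the gates form a chain $g_1\prec\cdots\prec g_{|G|}$ with $\tau_g=1$, we may take $t_{g_i}=i-1$ (any feasible schedule can be left-shifted to this one without increasing cost). This fixes $T_{\max}=|G|$ for every solution, so the makespan term $\beta\,T_{\max}$ is a constant and drops out of the comparison. What remains is leasing, gate-execution and teleportation cost.

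\textbf{Step 1 (solutions to paths).} Take any feasible solution and let $p_i$ be the QC with $u_{g_i,p_i}=1$. Constraint \eqref{EQ:operands:present} forces every $q\in O_{g_i}$ to sit at $p_i$ at time $t_{g_i}$. Hence the restriction $\sigma_i$ of the placement to $O_{g_i}$ is the constant map to $p_i$, which is exactly the form of $\sigma'$ generated in Algorithm~\ref{alg:multiqubit-path}. Feasibility of $p_i$ (i.e.\ $|O_{g_i}|\le\min(s_{p_i},e_{p_i})$ and $A_{p_i,g_i}=1$) follows from \eqref{EQ:resource:cap:constraint:1}--\eqref{EQ:resource:usage:constraint2}, \eqref{EQ:execution:requires:presence} and \eqref{EQ:gate:assignment:constraint:2}. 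So $(\sigma_0,0)\to(\sigma_1,1)\to\cdots\to(\sigma_{|G|},|G|)$ is a path in $\mathcal{G}$. Summing edge costs gives $\sum_i c^{g_i}_{p_i}$ exactly. Teleportation along the path counts one teleport $E(\sigma_{i-1}(q),p_i)$ per operand that is not already at $p_i$. By \eqref{EQ:movement:consistency:constraint:2} the solution pays at least this much, since any relocation requires a $w$-variable set to $1$ and routing via intermediate QCs only adds nonnegative cost. Similarly, the incremental lease term $\max(0,n_{\text{new}}(p)-n_{\text{prev}}(p))$ accumulates to the peak usage at $p$, which \eqref{EQ:resource:usage:constraint1}--\eqref{EQ:resource:usage:constraint2} bound by $r^s_p$ and $r^e_p$. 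The path cost is therefore at most the solution cost.

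\textbf{Step 2 (paths to solutions).} Conversely, from any path we construct a feasible solution. Execute $g_i$ at the target QC $p$ of edge $i$ at time $i-1$. Teleport each operand directly from its previous location to $p$ in slot $i-1$, and leave all other qubits in place. Lease $r^s_p=r^e_p$ equal to the accumulated lease counts. All constraints can be checked directly, and the cost equals the path cost plus the constant $\beta|G|$. Combining the two steps, the minimum path cost equals the optimal JQLQCD value, and the DP of Algorithm~\ref{alg:multiqubit-path} on the DAG (edge costs are nonnegative) returns it.

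\textbf{Main obstacle.} The hard part is the state-space adequacy in Step~1. The node $(\sigma,i)$ only records the locations of the qubits in $O_{g_i}$. Yet the cost of the next edge depends on where the qubits of $O_{g_{i+1}}\setminus O_{g_i}$ currently are, and the lease increments depend on $n_{\text{prev}}(p)$, which is global history. I would first try to repair this by interpreting $\sigma(q)$ for $q\notin O_{g_i}$ as the qubit's last recorded location, carried implicitly along the path. I would then prove by induction on $i$ that an optimal solution never moves a qubit except immediately before a gate that uses it, via an exchange argument that postponing a teleport never increases cost. If the lease history cannot be made Markovian in this way, the fallback is to enlarge the state to the full placement $Q\to P$ together with the current per-QC lease levels. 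This keeps the correspondence exact, though at the price of the polynomial bound.
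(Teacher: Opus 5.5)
Your outline (feasible solutions $\leftrightarrow$ paths, total cost $=$ sum of edge costs, DP on a DAG with nonnegative weights) is the paper's argument. The paper states these assertions without further justification. The step you call the main obstacle is a genuine gap that you leave open, and no DP indexed by the nodes of $\mathcal{G}$ can close it.

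The weight of the edge $(\sigma,i)\to(\sigma',i+1)$ needs two things the node does not store:
\begin{itemize}
\item $\sigma(q)$ for $q\in O_{g_{i+1}}\setminus O_{g_i}$;
\item the lease levels $n_{\text{prev}}(p)$.
\end{itemize}
So the weights are path-dependent, and optimal substructure fails. Your repair (a) makes the cost of a \emph{whole} path well defined. However, the DP of Algorithm~\ref{alg:multiqubit-path} keeps one prefix per node, and two prefixes reaching the same $(\sigma,i)$ can leave some $q\notin O_{g_i}$ at different QCs. Your exchange argument controls \emph{when} qubits move, not what the node remembers.

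Concretely, take $P=\{x,y\}$, zero leasing costs, ample capacities, $E(x,y)=E(y,x)=50$, and both qubits initially at $x$ (the given $\sigma_0$). Let $g_1\prec g_2\prec g_3$ act on $\{a\},\{b\},\{a\}$, with $c^{g_1}_x=10$, $c^{g_1}_y=5$, $c^{g_2}_x=c^{g_2}_y=0$, $c^{g_3}_x=100$, $c^{g_3}_y=0$. Since $10<55$ at stage $1$, both stage-$2$ nodes inherit the prefix with $a$ at $x$, and the DP returns $60$. Yet the path that sends $a$ to $y$ before $g_1$ and keeps it there costs $55$. Your fallback (b), a state consisting of the full placement $Q\to P$ plus per-QC lease levels, is the right fix. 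But it is a different graph with exponentially many states, so it proves a different statement, not the theorem about $\mathcal{G}$.

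Step~2 fails independently. By \eqref{EQ:qubit:location:uniqueness:constraint}, every qubit, idle or not, occupies storage in every slot. Meanwhile, $\mathcal{G}$ charges leasing only for operands and checks capacity only through $|O_{g_{i+1}}|\le\min(s_p,e_p)$. If all other qubits are left in place, two consecutive gates on disjoint pairs at a QC with $s_p=e_p=2$ put four qubits at $p$. More generally, setting $r^s_p$ to the accumulated operand count violates \eqref{EQ:resource:usage:constraint1} as soon as earlier or idle qubits share $p$ with the current operands. So your constructed solution is either infeasible or must pay for storage and teleports that the path never charged. The claimed equality ``cost $=$ path cost $+\beta|G|$'' therefore does not hold.

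Step~1 yields at best the lower bound (minimum path cost) $\le$ (optimum), and even that needs hypotheses you did not state:
\begin{itemize}
\item ``Routing via intermediate QCs only adds nonnegative cost'' is the triangle inequality for $E$, which the model does not assume. Your left-shift to $t_{g_i}=i-1$ uses it too, since collapsing idle slots forces direct teleports.
\item \eqref{EQ:movement:consistency:constraint:2} as written lets the activated $w^t_{q,p',p}$ have any source $p'$, not the qubit's previous QC.
\item Nothing forces $z_{q,p,t_g}=1$ for operands. So \eqref{EQ:resource:usage:constraint2} and \eqref{EQ:execution:requires:presence} give neither $|O_{g_i}|\le e_{p_i}$ nor a bound on the $c^e_p$ part of the path's lease charge.
\end{itemize}
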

\begin{IEEEproof}
Any feasible solution specifies a sequence of qubit configurations 
\((\sigma_0, \sigma_1, \ldots, \sigma_{|G|})\), which corresponds to a unique 
path in \(\mathcal{G}\). The total cost equals the sum of edge costs along 
the path, including incremental leasing costs accumulated as new qubits are 
assigned to QCs at each stage. Since \(\mathcal{G}\) is a DAG with 
non-negative edge costs, DP finds the optimal path.
\end{IEEEproof}

Some practical optimization strategies are as follows. Locality exploitation 
prunes configurations where qubits are unnecessarily dispersed if the circuit 
has qubit locality, meaning qubits in consecutive gates overlap significantly. 
Greedy initialization uses a greedy heuristic to identify a good initial path, 
then performs local search around this solution. Capacity-aware pruning 
eliminates configurations that violate capacity constraints early to reduce 
the state space.

\subsection{Case 4: Independent Subcircuits with Partitioned Resources}
\label{subsec:independent-subcircuits}
\textbf{Problem Setting:} The circuit decomposes into \(k\) independent 
subcircuits \(\{C_1, C_2, \ldots, C_k\}\) with no shared qubits. Different 
subcircuits \(C_1, \ldots, C_k\) operate on disjoint qubit sets 
\(Q_1, \ldots, Q_k\), respectively.

This problem represents a significant generalization beyond the \(k=1\) case, 
and we defer the full treatment of the \(k > 1\) case until future work. For 
completeness, we provide a decomposition-based approach, which is described in 
Algorithm \ref{alg:circuit_partitioning}. The decomposition step, which 
identifies the connected components of the qubit interaction graph 
$\mathcal{H} = (Q, E_{\mathcal{H}})$-- where $Q$ is the set of qubits and 
$E_{\mathcal{H}}$ is the set of edges with $(q, q') \in E_{\mathcal{H}}$ if 
there exists a gate in $G$ that operates on both qubits $q$ and $q'$-- has 
complexity $O(|Q| + |E_{\mathcal{H}}|)$ using depth-first search 
\cite{kleinberg2006algorithm}. Here, $|Q|$ is the number of qubits and 
$|E_{\mathcal{H}}| \leq |G|$ is the number of edges in the qubit interaction 
graph. The total complexity of Algorithm \ref{alg:circuit_partitioning} is 
$O(|Q| + |E_{\mathcal{H}}|)$ for the decomposition step, plus the sum of 
complexities for solving the JQLQCD problem for each subcircuit. This approach 
provides a feasible solution, but it may not be optimal if subcircuits can 
share QC resources efficiently.

\begin{algorithm}[h]
\caption{Decomposition-Based Approach for Circuit Partitioning and Assignment}
\label{alg:circuit_partitioning}
\begin{algorithmic}[1]
\State \textbf{Input:} Gate set $G$, qubit set $Q$, set of QCs \(P\)
\State \textbf{Output:} Assignment of circuit components to QCs
\State Construct the qubit interaction graph $\mathcal{H} = (Q, E_{\mathcal{H}})$, 
where $(q, q') \in E_{\mathcal{H}}$ if some gate in $G$ operates on both $q$ 
and $q'$
\State Identify connected components \(\{\tilde{C}_1, \tilde{C}_2, \ldots, 
\tilde{C}_m\}\) of $\mathcal{H}$ using depth-first search
\State \textbf{Note:} \(m\) is the number of connected components found (may 
differ from \(k\) in Section \ref{subsec:independent-subcircuits})
\State Assign each component \(\tilde{C}_i\) to QCs using a bin-packing 
heuristic \cite{johnson1974fast}, e.g., first-fit decreasing by resource demand
\For{each component \(\tilde{C}_i\)}
    \State Let \(P_i\) be the set of QCs assigned to component \(\tilde{C}_i\)
    \If{\(|P_i| = 1\)}
        \State Solve the subproblem for \(\tilde{C}_i\) using our method for 
        Case 1 (Section \ref{subsec:single-qc})
    \Else
        \State Solve the JQLQCD problem for circuit \(\tilde{C}_i\) and set 
        of QCs $P_i$
    \EndIf
\EndFor
\State \Return Combined solution from all subproblems
\end{algorithmic}
\end{algorithm}

\subsection{Case 5: Infinite Resources with Only Makespan Minimization}
\label{subsec:infinite-resources}

\textbf{Problem Setting:} For all \(p\) and $g$, \(s_p = e_p = \infty\), 
\(c^{s}_p = c^{e}_p = c^{g}_{p} = 0\), and \(\beta > 0\). Additionally, 
we assume that teleportation costs are zero, i.e., 
$E(p',p) = 0$ for all $p', p \in P$, and that qubit teleportation 
is instantaneous, i.e., $\tau_{\text{teleport}}(p',p) = 0$ for all $p', p \in P$. Under these 
assumptions, qubits can be freely teleported between QCs without incurring 
any cost or time overhead, and hence the objective function in \eqref{eq:obj} 
reduces to $\beta \cdot T_{\max}$, which is proportional to the makespan.

\textbf{Solution:} Since all resources are free, infinite, and teleportation is 
instantaneous, the only constraint on gate scheduling is the circuit's 
precedence structure. The problem therefore reduces to the classical problem 
of multiprocessor scheduling with precedence constraints, 
\(P|\text{prec}|C_{\max}\) \cite{Ullman1975, GareyJohnson1979, Lenstra1977}, 
described in Section \ref{SSC:NP:completenss:of:JQLQCD}.

The circuit's execution dependencies can be represented by a DAG in which 
there is a node corresponding to each gate \(g \in G\), and a directed edge 
\(g_1 \to g_2\) exists if gate \(g_1\) must complete before \(g_2\) can 
begin, i.e., \(g_1 \prec g_2\). This occurs when \(g_2\) operates on qubits 
modified by \(g_1\). The graph is acyclic because gates execute in temporal 
order without circular dependencies.

The optimal makespan is determined by the \emph{critical path} in the circuit 
DAG, which is the longest path from any source gate to any sink gate, 
accounting for gate durations:
\begin{equation}
\label{EQ:T:opt:max}
T_{\max}^{\text{opt}} = \max_{\text{paths } \pi \text{ in DAG}} \sum_{g \in \pi} 
\tau_g.
\end{equation}

Algorithm \ref{alg:critical-path} computes the critical path length using DP 
on the DAG and produces a gate assignment that achieves the optimal makespan. 
Since resources are infinite and free, and teleportation is instantaneous, gates 
can be distributed across as many QCs as needed to maximize parallelism: 
independent gates (those with no precedence relation) can be assigned to 
different QCs and executed simultaneously, while dependent gates must respect 
the ordering imposed by the precedence constraints. 
The complexity of Algorithm 
\ref{alg:critical-path} is \(O(|G| + |E|)\), where \(|E|\) is the number of 
precedence edges in the circuit DAG.

\begin{algorithm}
\caption{Critical Path Scheduling for Gate Assignment}
\label{alg:critical-path}
\begin{algorithmic}[1]
\State \textbf{Input:} Gates \(G\) with precedence constraints, durations 
\(\tau_g\), set of QCs \(P\)
\State \textbf{Output:} Gate assignment to QCs with optimal makespan
\State \textbf{Step 1: Compute critical path length}
\State Perform topological sort on gates \(G\)
\For{each gate \(g\) in topological order}
    \State Compute earliest start time:
    \State \quad \(L(g) = \max_{g' : g' \prec g} \big(L(g') + \tau_{g'}\big)\),
    \State \quad where \(L(g)\) is the earliest start time of gate \(g\)
\EndFor
\State Compute optimal makespan 
$T_{\max}^{\text{opt}} = \max_{g \in G} (L(g) + \tau_g)$
\State \textbf{Step 2: Assign gates to QCs}
\State Assign independent gates (those with no precedence relation between 
them) to distinct QCs to enable parallel execution, respecting precedence 
constraints; apply list scheduling \cite{Graham1969, Coffman1976} to 
produce a valid schedule achieving $T_{\max}^{\text{opt}}$
\State \Return Gate assignment and schedule
\end{algorithmic}
\end{algorithm}

For \(P|\text{prec}|C_{\max}\), finding the optimal makespan is strongly 
NP-hard in general \cite{Ullman1975}. However, the critical path provides a 
lower bound, and various polynomial-time algorithms achieve constant-factor 
approximations \cite{GrahamLawlerLenstraRinnooyKan1979OptimizationAA}. When 
\(|P| \geq |Q|\) (enough QCs to hold all qubits simultaneously) and teleportation 
is instantaneous, the critical path lower bound $T_{\max}^{\text{opt}}$ in 
\eqref{EQ:T:opt:max} is achievable, making the problem optimally solvable in 
polynomial time. This bound is tight when the circuit has sufficient 
parallelism and \(|P|\) is large enough.

\subsection{Case 6: Tree-Structured Circuit with Arbitrary QC Network}
\label{SSC:tree:structured:circuit}

\textbf{Problem Setting:} The circuit has a \emph{tree structure}, meaning 
that the gate dependency graph forms a tree. Since each node in a tree (except 
the root) has exactly one parent and each node has at most one child, a 
tree-structured gate dependency graph must in fact be a \emph{linear chain}: 
$g_1 \to g_2 \to \cdots \to g_{|G|}$, where each gate $g_i$ (except the last) 
has exactly one successor $g_{i+1}$, and each gate $g_i$ (except the first) 
has exactly one predecessor $g_{i-1}$. Note that a gate can still operate on 
multiple qubits (e.g., a two-qubit CNOT gate), but the dependency ordering 
among gates must form this linear chain structure.

We make the following assumptions. The leasing costs $c^s_p$ and $c^e_p$, 
gate execution costs $c^g_p$,  and teleportation 
costs $E(p',p)$ are arbitrary non-negative values. Qubit teleportation is 
instantaneous, i.e., $\tau_{\text{teleport}}(p',p) = 0$ for all $p', p \in P$, so that teleportation does not contribute to 
the makespan. All QCs have infinite storage and execution capacity: 
$s_p = e_p = \infty$ for all $p \in P$. The makespan weight is $\beta = 0$, 
so the objective function \eqref{eq:obj} reduces to minimizing only the 
leasing cost, gate execution cost, and teleportation cost. Under these assumptions, 
since all QCs have infinite capacity, the optimal leasing quantities are 
$r^s_p = \max_{t \in T} \sum_q x_{q,p,t}$ and $r^e_p = \max_{t \in T} \sum_q 
z_{q,p,t}$, i.e., the peak number of qubits stored and actively executing at 
each QC. 

The QC network can have an \emph{arbitrary topology}-- QCs can be connected 
in any graph structure (mesh, star, fully connected, etc.).

Note that Case 6 is equivalent to Case 3 (Section~\ref{subsec:chain-sequential}) 
under the additional assumption of $\beta = 0$ and instantaneous teleportation, 
since the tree structure reduces to a linear chain. The DP approach of 
Algorithm~\ref{alg:tree-dp} provides an alternative formulation that makes 
the optimal substructure explicit.

Algorithm \ref{alg:tree-dp} describes a DP approach that processes gates in 
post-order traversal (which coincides with reverse sequential order for a 
linear chain) to compute optimal costs bottom-up.

\begin{algorithm}
\caption{DP for Linear-Chain Circuit Optimization}
\label{alg:tree-dp}
\begin{algorithmic}[1]
\State \textbf{Input:} Circuit with linear-chain gate dependency 
$g_1 \to g_2 \to \cdots \to g_{|G|}$, set of QCs $P$ with leasing costs 
$c^s_p, c^e_p$, gate costs $c^g_p$, teleportation costs $E(p',p)$ 
for all $p, p' \in P$
\State \textbf{Output:} Optimal cost $C^*$ and gate assignment to QCs
\State \textbf{Initialization:}
\State Initialize $\text{Cost}(g, p)$ table for all gates $g$ and QCs $p$
\State \textbf{DP (processed in reverse order $g_{|G|}, g_{|G|-1}, \ldots, 
g_1$):}
\For{each gate $g$ in reverse sequential order}
    \For{each $p \in P$}
        \If{$g = g_{|G|}$ (last gate, no successor)}
            \State $\text{Cost}(g, p) \gets c^g_{p} + c^s_p + c^e_p$
        \Else
            \State Let $g'$ be the unique successor of $g$
            \State $\text{Cost}(g, p) \gets c^g_{p} + c^s_p + c^e_p + 
            \min_{p'' \in P} \left[ \text{Cost}(g', p'') + 
            E(p, p'') \right]$
        \EndIf
    \EndFor
\EndFor
\State \textbf{Compute optimal cost:}
\State $C^* \gets \min_{p \in P} \text{Cost}(g_1, p)$
\State \textbf{Reconstruction:}
\State Backtrack through the DP table to determine:
\State \quad - Which QC executes each gate
\State \quad - Which qubits are teleported between which QCs
\State \Return $C^*$ and optimal assignment
\end{algorithmic}
\end{algorithm}

\begin{theorem}
For circuits with linear-chain gate dependencies, infinite QC capacities, 
instantaneous qubit teleportation, $\beta = 0$, and an arbitrary QC network 
topology, Algorithm \ref{alg:tree-dp} computes the optimal solution.
\end{theorem}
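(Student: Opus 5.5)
The plan is to argue optimal substructure on the linear chain. Under the stated assumptions the objective reduces to a sum of per-stage costs. With $\beta=0$ the makespan term vanishes. With infinite capacity, constraints \eqref{EQ:resource:cap:constraint:1}--\eqref{EQ:resource:cap:constraint:2} never bind, so the leasing variables can be set to their minimal values $r^s_p=\max_t\sum_q x_{q,p,t}$ and $r^e_p=\max_t\sum_q z_{q,p,t}$. Finally, because teleportation is instantaneous, the timing of relocations is irrelevant. A feasible solution is therefore described, up to cost, by the sequence of execution locations $(p_1,\ldots,p_{|G|})$, where $p_i$ is the QC with $u_{g_i,p_i}=1$, together with the teleportations needed to bring the operands to $p_i$ at $t_{g_i}$.

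\textbf{Step 1 (cost model).} I will first show that, for a fixed sequence $(p_1,\ldots,p_{|G|})$, the cheapest completion of the solution costs exactly the cost the recursion accumulates along that sequence:
\[
\sum_{i=1}^{|G|}\big(c^{g_i}_{p_i}+c^s_{p_i}+c^e_{p_i}\big)+\sum_{i=1}^{|G|-1}E(p_i,p_{i+1}).
\]
This requires adopting the accounting implicit in Algorithm~\ref{alg:tree-dp}, namely one unit of leasing charged at the executing QC per stage and one teleportation charge per change of location. It also requires checking that any feasible schedule realizing the sequence costs at least this much. Qubits not involved in the current gate can be left wherever they were last used, and since $E\ge0$ every location change incurs at least the corresponding $E$ term.

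\textbf{Step 2 (Bellman equation).} Next I will prove by backward induction on $i$ that $\mathrm{Cost}(g_i,p)$ equals the minimum over all suffix sequences with $p_i=p$ of the suffix cost
\[
\sum_{j\ge i}\big(c^{g_j}_{p_j}+c^s_{p_j}+c^e_{p_j}\big)+\sum_{j\ge i}E(p_j,p_{j+1}).
\]
The base case $i=|G|$ is immediate from the initialization line of the algorithm. For the inductive step, the suffix cost splits into the stage-$i$ term, plus $E(p,p_{i+1})$, plus the cost of the suffix starting at $i+1$. Minimizing over $p_{i+1}$ and applying the inductive hypothesis gives exactly the recursion in the algorithm. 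Taking $\min_p \mathrm{Cost}(g_1,p)$ then yields the global optimum of the reduced objective, and backtracking through the table recovers a sequence attaining it. Feasibility of the reconstructed solution is checked directly: set $t_{g_i}=t_{g_{i-1}}+\tau_{g_{i-1}}$ so that \eqref{EQ:precedence:makespan:constraint:1} holds, teleport operands instantaneously so that \eqref{EQ:operands:present} and \eqref{EQ:movement:consistency:constraint:2} hold, and note that the capacity constraints hold trivially.

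\textbf{Main obstacle.} The hard part is Step 1, the claim that the objective decomposes additively over consecutive stages. For multi-qubit gates, the true teleportation cost depends on where each operand qubit was last located, which need not be $p_{i-1}$. The true leasing cost is also a peak over time, not a per-stage sum. My first attempt will restrict the statement to the cost model the DP encodes, treating each stage's operands as co-located at $p_i$, as in Case~3. Within that model I will show the per-stage charges dominate the ILP costs term by term and are attained by the reconstructed schedule. If this restriction is unavailable, the fallback is to enlarge the DP state to the configuration $\sigma$ of Algorithm~\ref{alg:multiqubit-path} and reduce to the earlier Case~3 theorem.
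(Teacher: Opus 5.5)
Your Step 2 is exactly the paper's proof. The paper uses backward induction along the chain, with base case $\mathrm{Cost}(g_{|G|},p)=c^{g}_{p}+c^s_p+c^e_p$. Its inductive step writes the cost of any solution as $C(\mathcal{S})=c^g_p+c^s_p+c^e_p+C(\mathcal{S}_{g'})+E(p,p'')$ and invokes $C(\mathcal{S}_{g'})\ge\mathrm{Cost}(g',p'')$. The paper never proves your Step 1; it simply asserts that additive decomposition.

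You are right that this is the crux, but Step 1 cannot be proved relative to \eqref{eq:obj}, because it is false there. Take one qubit, two gates $g_1\prec g_2$ acting on it, and two QCs $a,b$ with $c^s_a+c^e_a=5$, $c^{g_1}_a=c^{g_2}_a=0$, $c^s_b=c^e_b=0$, $c^{g_1}_b=c^{g_2}_b=4$, $E(a,b)=E(b,a)=100$ and $E(p,p)=0$. The recursion charges $10$ to the sequence $(a,a)$ and $8$ to $(b,b)$, so Algorithm~\ref{alg:tree-dp} outputs $(b,b)$ at cost $8$. In the ILP, however, keeping the qubit at $a$ throughout needs only $r^s_a=r^e_a=1$ and costs $5$, because leasing is a peak paid once per QC, not a charge per gate.

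Your remark that every change $p_i\neq p_{i+1}$ incurs at least $E(p_i,p_{i+1})$ also fails whenever the operands of $g_{i+1}$ already sit at $p_{i+1}$, e.g.\ when consecutive gates act on different qubits. In the opposite direction, a two-qubit gate at $p$ forces $r^s_p\ge 2$, and moving both operands costs $2E$ in \eqref{eq:obj}. The recursion instead charges $c^s_p+c^e_p$ and $E(p,p'')$ only once. So the DP charge and the ILP cost of a location sequence are incomparable in general.

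Neither of your repairs closes this. Showing that the per-stage charges dominate the ILP costs and are attained by the reconstructed schedule points the wrong way. An upper bound on the ILP cost of every sequence says nothing about sequences the ILP prices below their DP charge, which is exactly what happens to $(a,a)$ above. You would need every realization to cost at least its DP charge, and the example refutes that.

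The fallback to the configuration DP of Algorithm~\ref{alg:multiqubit-path} would at best certify a different algorithm, not Algorithm~\ref{alg:tree-dp}. Moreover, the Case~3 theorem you would reduce to rests on the same asserted additivity of edge costs.

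So your argument is complete only under the reading the paper tacitly adopts: a solution's cost is defined by the recursion's own accounting, with one lease charge at the executing QC per gate and one $E$ per change of location. Then Step 1 is a definition rather than a lemma, and your proof coincides with the paper's. As a claim about minimizing \eqref{eq:obj}, the statement itself fails on the example above, so no version of Step 1 can rescue it.
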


\begin{IEEEproof}
We prove the result by induction on the gates processed in reverse sequential 
order.

\textit{Base case:} For the last gate $g_{|G|}$ with no successor, the 
minimum cost of executing $g_{|G|}$ at QC $p$ is
\begin{equation}
\text{Cost}(g_{|G|}, p) = c^g_{p} + c^s_p + c^e_p,
\end{equation}
which is trivially optimal since there are no subsequent gates.

\textit{Inductive step:} Assume that $\text{Cost}(g', p'')$ has been computed 
optimally for the unique successor $g'$ of gate $g$ and all QCs $p'' \in P$. 
We claim that
\begin{align}
\text{Cost}(g, p) &= c^g_{p} + c^s_p + c^e_p \notag \\
& \quad + \min_{p'' \in P} 
\left[\text{Cost}(g', p'') + \text{TeleportCost}(p, p'')\right]
\end{align}
computes the optimal cost for executing gate $g$ at QC $p$ and all subsequent 
gates, where $\text{TeleportCost}(p, p'') = E(p, p'')$.

Consider any feasible solution $\mathcal{S}$ for executing $g$ and all its 
successors. It must specify the QC $p$ where $g$ executes and the QC $p''$ 
where $g'$ executes. The total cost is:
\begin{equation}
C(\mathcal{S}) = c^g_p + c^s_p + c^e_p + C(\mathcal{S}_{g'}) + 
\text{TeleportCost}(p, p''),
\end{equation}
where $\mathcal{S}_{g'}$ is the subsolution for $g'$ and all its successors. 
By the induction hypothesis, $C(\mathcal{S}_{g'}) \geq \text{Cost}(g', p'')$. 
Therefore,
\begin{align*}
C(\mathcal{S}) &\geq c^g_p + c^s_p + c^e_p + \text{Cost}(g', p'') + 
\text{TeleportCost}(p, p'') \\
&\geq c^g_p + c^s_p + c^e_p \\
& \quad + \min_{p'' \in P}\left[\text{Cost}(g', p'') + 
\text{TeleportCost}(p, p'')\right] \\
&= \text{Cost}(g, p).
\end{align*}
Equality holds when $\mathcal{S}_{g'}$ is optimal and $p''$ is chosen 
optimally. The global optimal cost is $C^* = \min_{p \in P} 
\text{Cost}(g_1, p)$.
\end{IEEEproof}

We now analyze the time and space complexity of Algorithm \ref{alg:tree-dp}.
\begin{proposition}
Under the assumption that each gate operates on at most $k_{\max} \leq 2$ 
qubits, Algorithm \ref{alg:tree-dp} has time complexity $O(|G| \cdot |P|^2)$ 
and space complexity $O(|G| \cdot |P|)$.
\end{proposition}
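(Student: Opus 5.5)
The plan is a direct operation count on Algorithm~\ref{alg:tree-dp}, split into the DP phase and the reconstruction phase. The DP table $\text{Cost}(g,p)$ is indexed by the $|G|$ gates and the $|P|$ QCs. So the core of the argument is to bound the work per table entry and multiply by the number of entries. The hypothesis $k_{\max} \le 2$ enters only to ensure that every per-entry quantity that depends on the operand set $O_g$ is $O(1)$. These quantities are the gate cost lookup $c^g_p$ and the teleportation term for the at most two operand qubits moving from $p$ to $p''$.

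\textbf{Time.} First, I will note that initialization of the table takes $O(|G|\cdot|P|)$ time. Next, I will consider a fixed entry $(g,p)$.
\begin{itemize}
\item If $g=g_{|G|}$, the entry is computed in $O(1)$ time as $c^g_p+c^s_p+c^e_p$.
\item Otherwise, the algorithm evaluates $\min_{p''\in P}[\text{Cost}(g',p'')+E(p,p'')]$. This consists of $|P|$ candidates, each costing $O(1)$ time. The reason is that $\text{Cost}(g',\cdot)$ has already been filled, since gates are processed in reverse sequential order. Also, $E(p,p'')$ is a table lookup, possibly multiplied by the number of operands, which is at most $k_{\max}\le 2$.
\end{itemize}
Each entry therefore costs $O(|P|)$ time. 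Summing over the $|G|\cdot|P|$ entries gives $O(|G|\cdot|P|^2)$. Identifying the unique successor $g'$ of $g$ costs $O(1)$ per gate, because the dependency graph is a linear chain.

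The final step $C^*=\min_p \text{Cost}(g_1,p)$ costs $O(|P|)$ time. For backtracking, I will either store argmin pointers during the DP, or recompute the minimizing $p''$ with $|P|$ comparisons per gate. In both cases, reconstruction costs at most $O(|G|\cdot|P|)$ time. Emitting the teleportation decisions costs $O(k_{\max})=O(1)$ per gate. Adding up all phases gives total time $O(|G|\cdot|P|^2)$.

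\textbf{Space.} The table holds $|G|\cdot|P|$ numbers. The optional pointer table has the same size. The input cost arrays $c^g_p$ take $O(|G|\cdot|P|)$ space. The matrix $E$ takes $O(|P|^2)$ space, but it is input rather than working storage. If it must be counted, the bound becomes $O(|G|\cdot|P|+|P|^2)$, which I will remark coincides with $O(|G|\cdot|P|)$ whenever $|G|\ge|P|$. The reconstructed assignment uses $O(|G|)$ space. The working space is therefore $O(|G|\cdot|P|)$.

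\textbf{Main obstacle.} The only delicate point is justifying that each inner-loop candidate is $O(1)$. This requires two facts. First, the teleportation cost between consecutive gates depends only on the pair $(p,p'')$ and on at most $k_{\max}\le 2$ operand qubits. Second, no per-qubit location state needs to be carried, unlike the $|P|^{k_{\max}}$ configurations of Algorithm~\ref{alg:multiqubit-path}. I will state explicitly that under infinite capacities and $\beta=0$, the DP state is just the executing QC, so the table has only $|P|$ columns. This is what removes the exponential dependence on $k_{\max}$.
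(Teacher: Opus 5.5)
Your proposal is correct and follows the same direct operation count as the paper: the $\text{Cost}(g,p)$ table has $|G|\cdot|P|$ entries, giving $O(|G|\cdot|P|)$ space, and each entry needs an $O(|P|)$ minimization over $p''$, giving $O(|G|\cdot|P|^2)$ time. Your extra accounting for initialization, the final minimum over $p$, and backtracking (each $O(|G|\cdot|P|)$) is consistent with the paper's shorter proof and a bit more complete.
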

\begin{IEEEproof}
The DP table $\text{Cost}(g, p)$ has $|G| \cdot |P|$ entries, giving space 
complexity $O(|G| \cdot |P|)$. For each gate $g$ and each QC $p$, computing 
$\text{Cost}(g, p)$ requires minimizing over all $p'' \in P$, taking $O(|P|)$ 
time. Summing over all gates and QCs gives total time complexity 
$O(|G| \cdot |P|^2)$.
\end{IEEEproof}

\section{Greedy Algorithm}
\label{SC:greedy:algorithm}
The special cases in Section \ref{sec:special-cases} cover only a narrow class
of practical scenarios. Real-world quantum circuits-- including QFTs, variational quantum eigensolvers (VQEs), and
QEC circuits \cite{nielsen2010quantum}-- exhibit DAG structures where gates have multiple predecessors and
successors, violating the tree property required by Case 6 (Section
\ref{SSC:tree:structured:circuit}). For such general circuits, exact solution of
the ILP formulation of the JQLQCD problem (Section \ref{SSC:problem:formulation})
is computationally prohibitive for large instances, since the problem is
NP-complete (see Section \ref{sec:complexity}). Hence, in this section, we
present a greedy heuristic for solving the general JQLQCD problem.

\subsection{Algorithm Design}
\subsubsection{QC Scoring Function}
The key idea is to iteratively select QCs based on a composite score that
balances leasing costs and communication overhead, and then allocate qubits
until capacity is exhausted. For each QC \(p \in P\), we define a
\emph{selection score}:
\begin{equation}
\begin{split}
\text{Score}(p) = \,&\alpha_{\text{lease}} \cdot
\frac{c^{s}_p + c^{e}_p}{\max_{p' \in P}(c^{s}_{p'} + c^{e}_{p'})} \\
+\,&\alpha_{\text{comm}} \cdot
\frac{\text{AvgCommCost}(p)}{\max_{p' \in P}\text{AvgCommCost}(p')},
\end{split}
\label{eq:qc-score}
\end{equation}
where \(\alpha_{\text{lease}}, \alpha_{\text{comm}} \ge 0\) are weights with
\(\alpha_{\text{lease}} + \alpha_{\text{comm}} = 1\), and
\(\text{AvgCommCost}(p)\) measures the average communication cost from
\(p\) to other QCs:
\begin{equation}
\label{eq:avg-comm-cost}
\text{AvgCommCost}(p) = \frac{1}{|P|-1}
\sum_{p' \in P \setminus \{p\}} E(p,p').
\end{equation}
Lower scores indicate more desirable QCs. The weights allow tuning based on
the problem requirements: if circuits require frequent qubit teleportation, then
high \(\alpha_{\text{comm}}\) is used to prioritize well-connected QCs;
otherwise, high \(\alpha_{\text{lease}}\) is used to prioritize cheap QCs.
Algorithm \ref{alg:greedy-qc-selection} describes this greedy selection
strategy; it iteratively chooses QCs in increasing order of score and assigns
qubits to QCs based on the gate execution costs and affinity, as we now explain.

\begin{algorithm}[H]
\caption{Greedy QC Selection and Qubit Allocation}
\label{alg:greedy-qc-selection}
\begin{algorithmic}[1]
\State \textbf{Input:} Circuit with qubits \(Q\), gates \(G\), QCs \(P\),
capacities \(s_p, e_p\), costs
\State \textbf{Output:} Initial qubit-to-QC assignment
\State Initialize: \(P_{\text{avail}} \leftarrow P\),
\(Q_{\text{unassigned}} \leftarrow Q\)
\State Compute \(\text{Score}(p)\) for all \(p \in P\) using \eqref{eq:qc-score}
\State Sort QCs in ascending order of score: \(p_1, p_2, \ldots, p_{|P|}\)
\For{each QC \(p_i\) in sorted order}
    \State \(n_{\text{assign}} \leftarrow \min(s_{p_i}, e_{p_i},
    |Q_{\text{unassigned}}|)\)
    \State Select \(n_{\text{assign}}\) qubits from \(Q_{\text{unassigned}}\)
    one at a time greedily, i.e., in increasing order of
    $\text{QubitScore}(q, p_i)$ given in \eqref{eq:qubit-score},
    recomputing scores after each selection since
    $\text{Affinity}(q, p_i)$ changes as qubits are assigned
    \State Assign selected qubits to \(p_i\)
    \State \(Q_{\text{unassigned}} \leftarrow Q_{\text{unassigned}}
    \setminus \{\text{selected qubits}\}\)
    \If{\(Q_{\text{unassigned}} = \emptyset\)}
        \State \textbf{break}
    \EndIf
\EndFor
\State \textbf{return} qubit-to-QC assignment
\end{algorithmic}
\end{algorithm}

\subsubsection{Qubit Selection Strategy}
While assigning qubits to a selected QC \(p\), we use a secondary greedy
criterion as follows. For each unassigned qubit \(q\), we compute
\begin{equation}
\label{eq:qubit-score}
\text{QubitScore}(q, p) = \sum_{g : q \in O_g}
\frac{c^{g}_{p}}{\min_{p' \in P} c^{g}_{p'}}
- \lambda \cdot \text{Affinity}(q, p),
\end{equation}
where \(\text{Affinity}(q, p)\) equals the fraction of gate operands involving
\(q\) that are already assigned to \(p\):
\begin{equation}
\label{eq:affinity}
\text{Affinity}(q, p) = \sum_{g : q \in O_g} \frac{|O_g \cap Q_p|}{|O_g|},
\end{equation}
with \(Q_p\) being the set of qubits already assigned to \(p\) and
\(\lambda \ge 0\) a weight parameter. Lower qubit scores are preferred.
The first term encourages assignment of qubits to a QC \(p\) where their
gates execute cheaply. The second term (subtracted) rewards co-location of
qubits that interact frequently: a higher \(\text{Affinity}(q,p)\) lowers
the score, making qubit \(q\) more attractive to assign to \(p\), thereby
reducing inter-QC communication. After each qubit is assigned to \(p_i\),
the \(\text{Affinity}(q, p_i)\) values change for the remaining unassigned
qubits, so \(\text{QubitScore}(q, p_i)\) must be recomputed before each
subsequent selection.

\subsection{Complexity Analysis}
Computing the communication costs $\text{AvgCommCost}(p)$ and scores
$\text{Score}(p)$ for all QCs $p \in P$ requires \(O(|P|^2)\) time, and
sorting QCs takes \(O(|P| \log |P|)\) time. For the qubit assignment step,
consider the processing of a single QC $p_i$. Up to
$n_{\text{assign}} \leq |Q|$ qubits are selected one at a time. Before each
selection, $\text{QubitScore}(q, p_i)$ must be recomputed for each remaining
unassigned qubit $q \in Q_{\text{unassigned}}$, since
$\text{Affinity}(q, p_i)$ changes after each qubit is assigned to $p_i$.
Recomputing $\text{QubitScore}(q, p_i)$ for a single qubit $q$ requires
$O(|G|)$ time (summing over all gates involving $q$). Since there are at most
$|Q|$ unassigned qubits and up to $|Q|$ selections per QC, the qubit score
evaluation for a single QC $p_i$ takes $O(|Q|^2 \cdot |G|)$ time in the
worst case. Summing over all $|P|$ QCs, the total time for qubit score
evaluation is $O(|P| \cdot |Q|^2 \cdot |G|)$. Hence, the overall time
complexity of the greedy algorithm is
\begin{equation}
O\!\left(|P|^{2} + |P| \cdot |Q|^{2} \cdot |G|\right). \label{eq:greedy-complexity}
\end{equation}

\subsection{Refinement via Local Search}
The greedy assignment obtained using Algorithm \ref{alg:greedy-qc-selection}
can be refined using local search, which is described in Algorithm
\ref{alg:local-search}.

\begin{algorithm}[hbt]
\caption{Local Search Refinement}
\label{alg:local-search}
\begin{algorithmic}[1]
\State \textbf{Input:} Assignment obtained using greedy algorithm
(Algorithm \ref{alg:greedy-qc-selection})
\State \textbf{Output:} Refined assignment
\State Compute total cost \(C_{\text{current}}\) of initial assignment
\Repeat
    \State \(C_{\text{best}} \leftarrow C_{\text{current}}\)
    \For{each qubit \(q\) and QC pair \((p, p')\) where \(q\) is
    assigned to \(p\)}
        \If{\(p'\) has available capacity and \(A_{p',g} = 1\) for all
        gates \(g\) with \(q \in O_g\)}
            \State Tentatively reassign \(q\) from \(p\) to \(p'\)
            \State Compute new cost \(C_{\text{new}}\)
            \If{\(C_{\text{new}} < C_{\text{best}}\)}
                \State \(C_{\text{best}} \leftarrow C_{\text{new}}\)
                \State Record this move as best improvement
            \EndIf
            \State Undo tentative reassignment
        \EndIf
    \EndFor
    \If{best improvement found}
        \State Apply best move
        \State \(C_{\text{current}} \leftarrow C_{\text{best}}\)
    \EndIf
\Until{no improving move found or iteration limit reached}
\State \textbf{return} refined assignment
\end{algorithmic}
\end{algorithm}

\section{Numerical Results}
\label{sec:num}
In this section, we evaluate the performance of the proposed greedy algorithm 
(Algorithm~\ref{alg:greedy-qc-selection}) through extensive numerical 
experiments. We compare the greedy approach with simulated annealing (SA) 
\cite{kleinberg2006algorithm}, a metaheuristic that can find near-optimal 
solutions to the ILP formulation, and polynomial-time algorithms that compute 
optimal solutions for special cases (Section \ref{sec:special-cases}). Our 
evaluation focuses on two key metrics: solution quality measured by total cost 
as defined by the objective function (see \eqref{eq:obj}), and computational 
efficiency measured by execution time in seconds.

\subsection{Experimental Setup}

\begin{table}[!ht]
\caption{The table provides the values/ distribution of different parameters 
used to obtain the numerical results.}
\label{table:instance_params}
\centering
\scriptsize
\begin{tabular}{|l|l|}
\hline
\rowcolor[HTML]{EFEFEF}
\textbf{Parameter} & \textbf{Values/ Distribution} \\ \hline
Number of qubits & $|Q| \in \{5, 10, 15, 20, 25, 30\}$ \\ \hline
Number of gates & $|G| \in \{10, 25, 50, 100, 150, 200\}$ \\ \hline
Gate types & Single-qubit ($60 \, \%$), two-qubit ($40 \, \%$) \\ \hline
Circuit topology & Random DAG, avg. degree 2.5 \\ \hline
Gate durations & Uniform$[1, 5]$ time units \\ \hline
Number of QCs & $|P| \in \{2, 3, 4, 5, 6, 8, 10\}$ \\ \hline
Storage capacity & $s_p \sim$ Uniform$\{5,\ldots, 15\}$ qubits \\ \hline
Execution capacity & $e_p \sim$ Uniform$\{3, \ldots, 8\}$ qubits \\ \hline
Network topology & Random graph, edge prob. 0.4 \\ \hline
Teleportation costs & $E(p', p) \sim$ Uniform$(10, 30)$ \\
& for connected QCs \\ \hline
Leasing costs & $c^s_p, c^e_p \sim$ Uniform$(1, 5)$ \\ \hline
Gate exec. costs & $c^g_{p} \sim$ Uniform$(0.5, 2.0)$ \\ \hline
Gate availability & $A_{p,g} = 1$ w.p. 0.8 ($80 \, \%$ avg.) \\ \hline
Makespan weight & $\beta \in \{0, 0.1, 0.5, 1.0, 2.0, 5.0\}$ \\ \hline
\end{tabular}
\end{table}

We generated diverse instances of the JQLQCD problem, with the characteristics 
shown in Table \ref{table:instance_params}, designed to represent realistic 
quantum computing scenarios. We used quantum circuits with varying numbers of 
qubits and gates, with gate types distributed as $60 \, \%$ single-qubit and 
$40 \, \%$ two-qubit gates, organized in a random DAG topology with average 
degree $2.5$, and gate durations uniformly distributed between $1$ and $5$ 
time units. The table also specifies various parameters of the QC network, 
including the number of QCs, storage and execution capacities per QC, network 
topology, which is a random graph with edge probability $0.4$, and various 
cost distributions for teleportation, leasing, and gate execution. 
The gate availability is set such that each gate can execute on approximately 
$80 \, \%$ of the QCs. We vary the makespan weight $\beta$ to study the 
trade-off between cost and makespan. For each parameter configuration, we 
report the results obtained by averaging over $20$ random instances.

\subsection{Algorithms}
We evaluated and compared the performance of the following algorithms.

\subsubsection{Greedy Algorithm} It was implemented as described in Algorithm 
\ref{alg:greedy-qc-selection} with parameters $\alpha_{\text{lease}} = 0.5$ 
and $\alpha_{\text{comm}} = 0.5$ (balanced weighting), $\lambda = 2.0$ 
(moderate affinity weight), and $100$ local search iterations 
(Algorithm~\ref{alg:local-search}).

\subsubsection{Simulated Annealing \cite{kleinberg2006algorithm}} The parameter
values used were as follows: initial temperature $T_0$ set to $5 \, \%$ of the
cost of the initial solution, geometric cooling schedule
$T_{k+1} = 0.95 \cdot T_k$, $50$ iterations per temperature, termination at
temperature $T < 0.01$ or $10,000$ total iterations, and neighborhood moves
consisting of random reassignment of one qubit to a different QC, the swapping
of two qubits, or the re-placement of a single gate at a randomly chosen
feasible QC. SA is initialized from the refined greedy solution. An adaptive
$T_0$ is used because a fixed initial temperature far above the cost scale of an
instance causes nearly every candidate move to be accepted, which reduces the
search to a random walk. The gate re-placement move is included because
$u_{g,p}$ is an independent decision variable of the ILP; with qubit moves
alone, SA cannot trade gate execution cost against makespan, which is precisely
the trade-off that Case 2 turns on.

\subsubsection{Optimal Algorithms} For the special cases studied in Section 
\ref{sec:special-cases}, we used topological sort plus list scheduling for 
Case 1 (Section \ref{subsec:single-qc}), optimal $k$ selection via exhaustive 
search over $k \in \{1, \ldots, |P|\}$ for Case 2 (Section 
\ref{subsec:homogeneous-zero-cost}), and Dijkstra's algorithm on the 
time-expanded graph for Case 3 (Section \ref{subsec:chain-sequential}).

All algorithms were implemented in Python 3.9 and executed on a machine with
an Intel Core i5-1335U CPU (10 cores, up to 4.6 GHz) and 16 GB RAM, running Linux.

\subsection{Results for General Problem Instances}

\subsubsection{Performance vs. Circuit Size}
Fig. \ref{fig:cost_vs_gates} shows the total cost as a function of the number of
gates $|G|$, with $|Q| = 20$ qubits, $|P| = 5$ QCs, and $\beta = 1.0$. Both
algorithms exhibit approximately linear cost growth with circuit size. The two
algorithms are nearly indistinguishable for small circuits ($|G| \leq 50$), and
the greedy algorithm's cost exceeds that of SA by roughly $10-15 \, \%$ for the
largest circuits tested. This widening is expected: as $|G|$ grows, the number
of gate-placement decisions grows with it, and the additional decisions give SA
more opportunities to improve upon the greedy solution from which it starts.
The total cost increases with $|G|$ for two primary reasons. First, the gate
execution cost component $\sum_{g \in G} \sum_{p \in P} c^g_p u_{g,p}$ in
\eqref{eq:obj} grows linearly with $|G|$. Second, more gates require additional
qubit teleportations to satisfy the operand co-location requirement
\eqref{EQ:operands:present}, increasing the teleportation cost term. The
makespan term $\beta T_{\max}$ also grows for circuits with longer critical
paths.

\begin{figure}[htbp]
\centering
\includegraphics[width=0.4\textwidth]{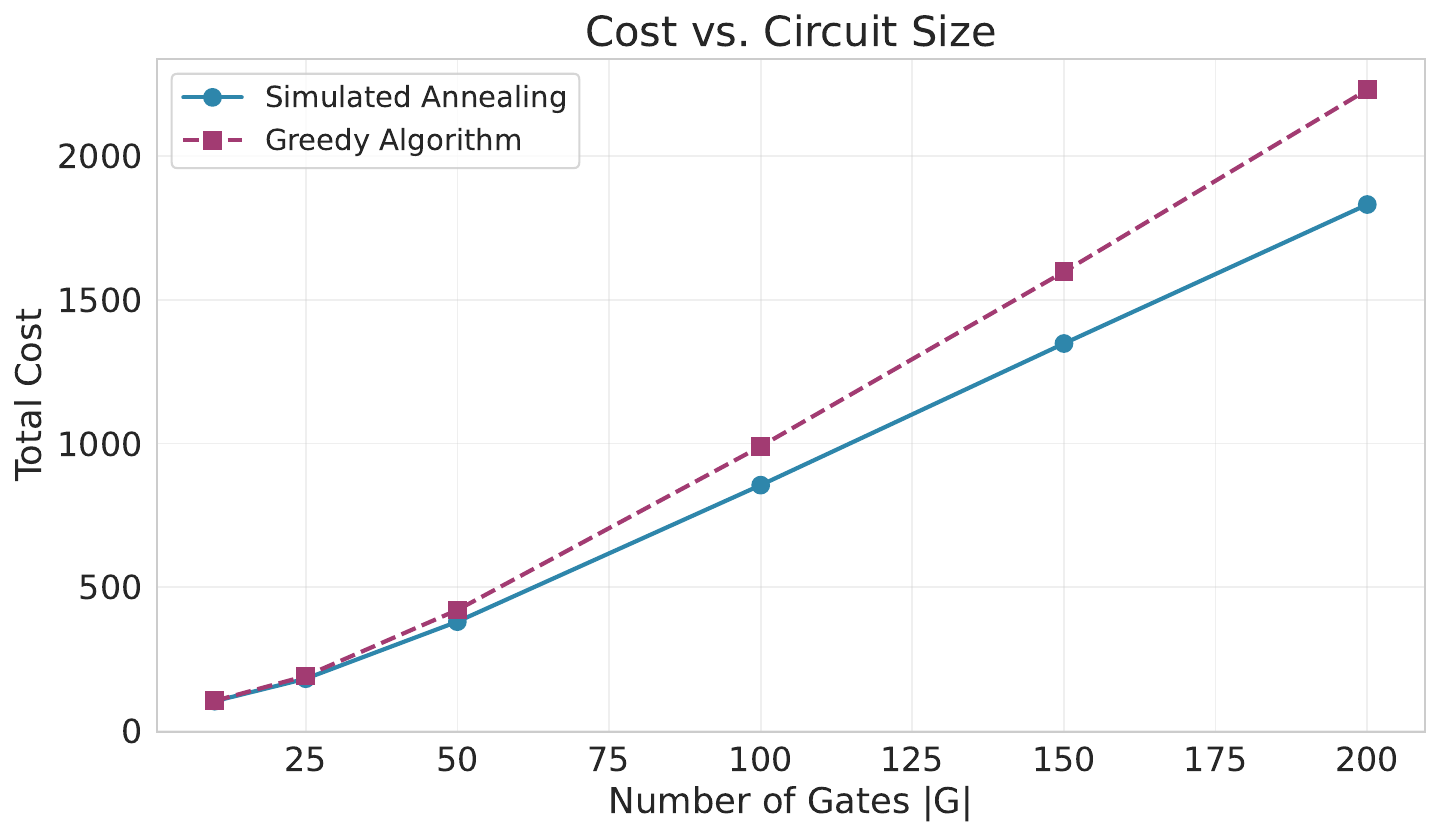}
\caption{The figure shows the total cost versus number of gates for the greedy 
algorithm and SA.}
\label{fig:cost_vs_gates}
\end{figure}
Fig. \ref{fig:time_vs_gates} compares the execution times for the same problem
instances. The greedy algorithm is consistently $10-20\times$ faster than SA
across all circuit sizes. For the largest instances ($|G| = 200$), the greedy
algorithm completes in under $1$ second, while SA requires approximately
$10$ seconds.

\begin{figure}[htbp]
\centering
\includegraphics[width=0.4\textwidth]{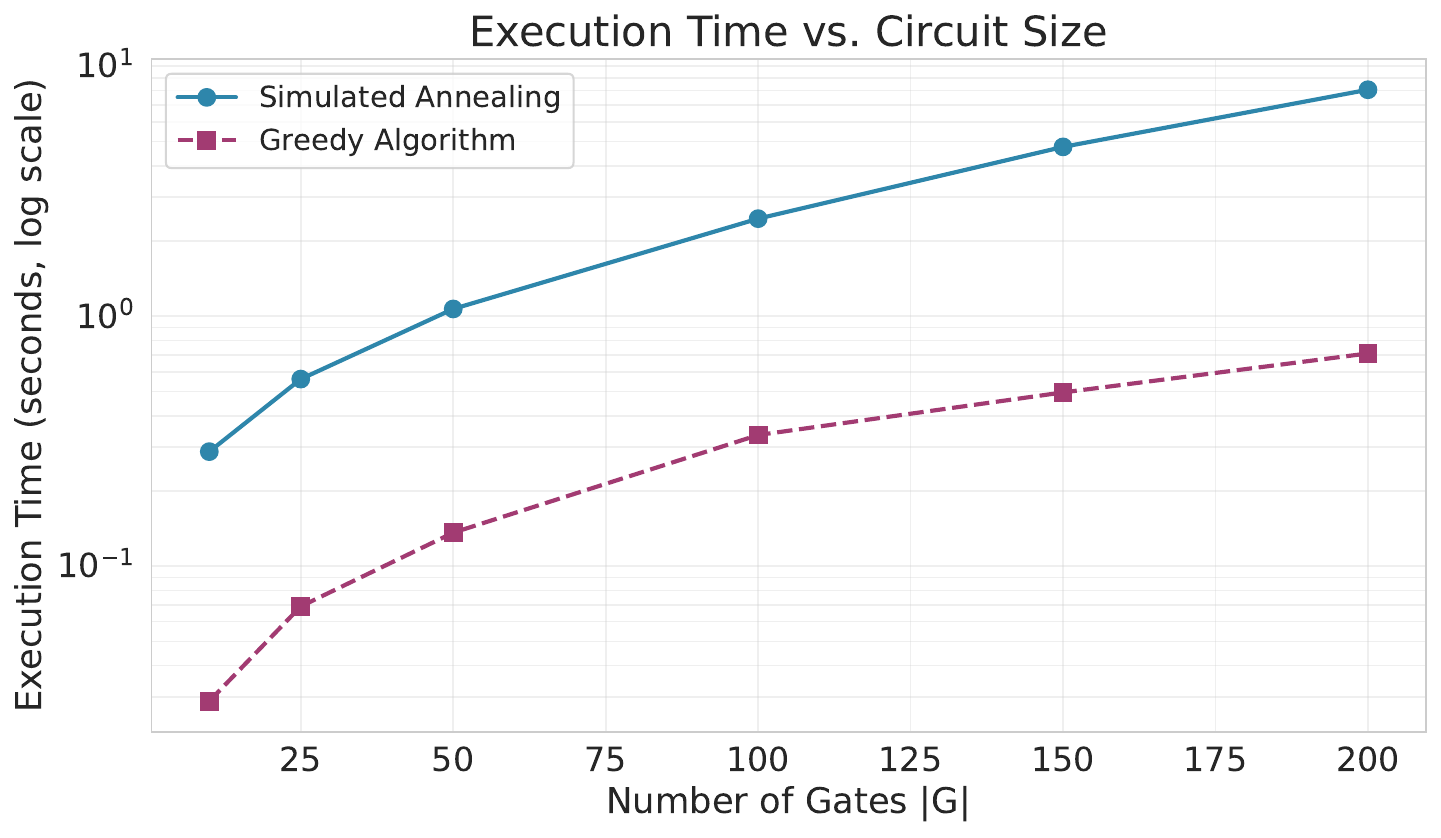}
\caption{The figure shows the execution time versus number of gates for the 
greedy algorithm and SA.}
\label{fig:time_vs_gates}
\end{figure}

\subsubsection{Performance vs. Number of QCs}
Fig. \ref{fig:cost_vs_qcs} shows how the solution quality varies with the number
of available QCs, with $|Q| = 15$ qubits, $|G| = 75$ gates, and $\beta = 1.0$.
Beyond $|P| = 3$, the cost decreases in the number of QCs for both algorithms,
falling by roughly $20 \, \%$ between $|P| = 3$ and $|P| = 10$: additional QCs
provide more opportunities for parallel gate execution and hence a lower
makespan. The non-monotonicity at $|P| = 3$ arises because the marginal QC is
initially used to relieve a capacity bottleneck rather than to expose
parallelism, so it adds leasing and teleportation cost before it begins to
reduce the makespan. The greedy algorithm's gap to SA narrows from about
$10 \, \%$ at small $|P|$ to under $10 \, \%$ at $|P| = 10$.

\begin{figure}[htbp]
\centering
\includegraphics[width=0.4\textwidth]{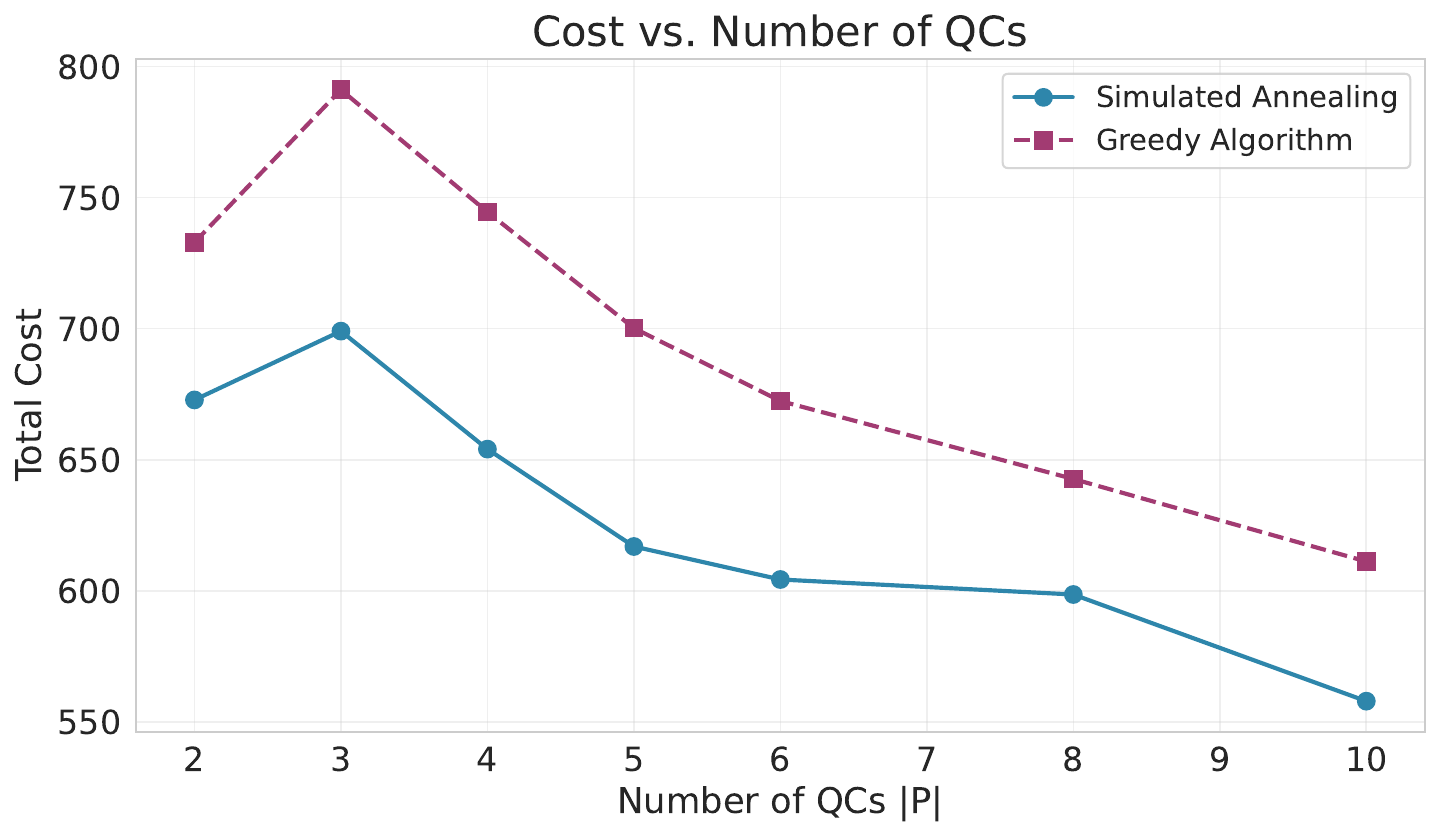}
\caption{The figure shows the total cost versus number of QCs for the greedy 
algorithm and SA.}
\label{fig:cost_vs_qcs}
\end{figure}

Fig. \ref{fig:time_vs_qcs} compares the execution times. SA's execution time
grows steadily with $|P|$, since enlarging the QC set enlarges its search space,
while the greedy algorithm's execution time remains below $0.2$ seconds
throughout. The resulting speedup ranges from about $40\times$ at $|P| = 2$ to
about $8\times$ at $|P| = 10$.

\begin{figure}[htbp]
\centering
\includegraphics[width=0.4\textwidth]{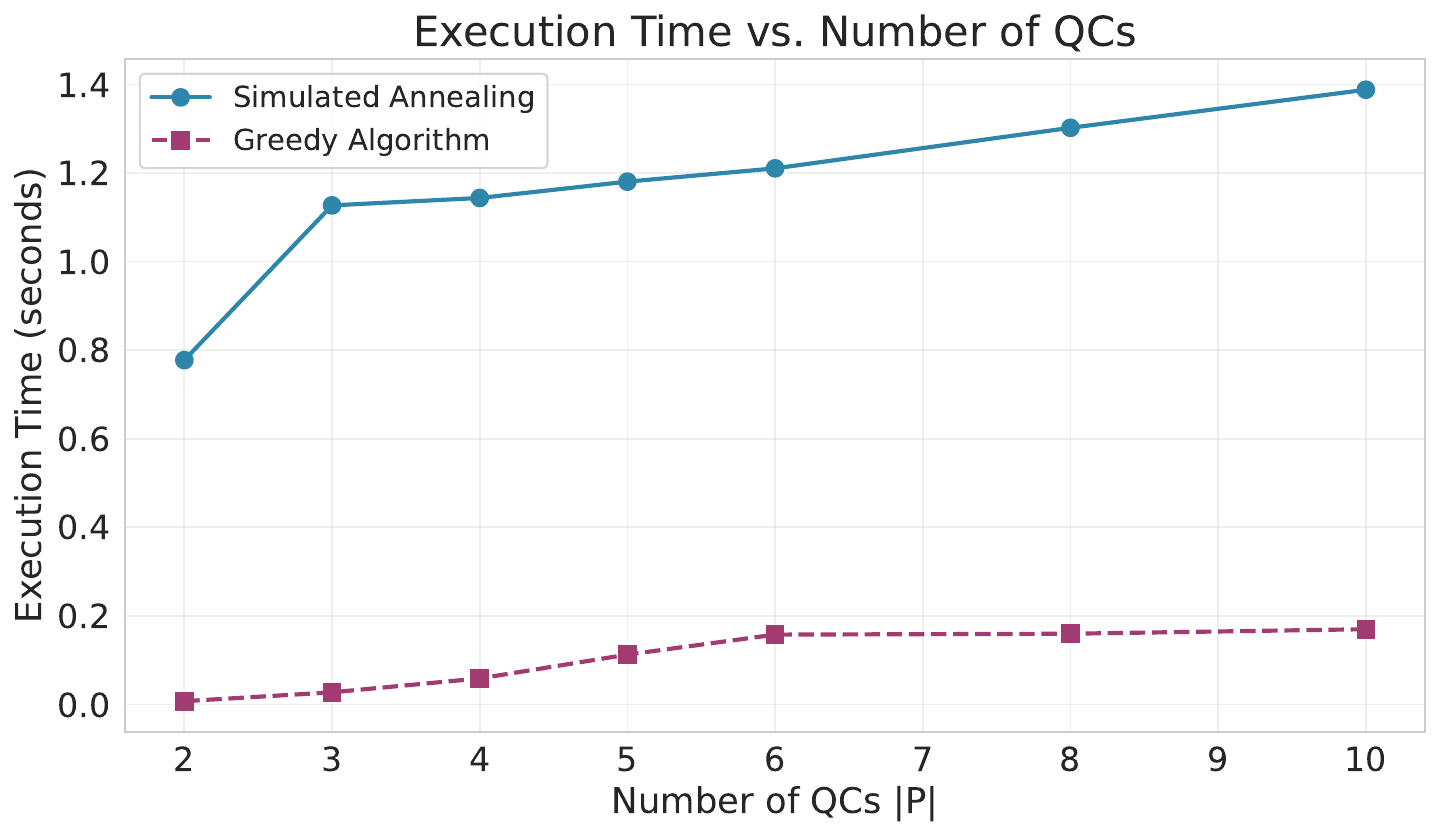}
\caption{The figure shows the execution time versus number of QCs for the 
greedy algorithm and SA.}
\label{fig:time_vs_qcs}
\end{figure}

\subsubsection{Performance vs. Makespan Weight $\beta$}
Fig. \ref{fig:cost_vs_beta} shows the total cost versus $\beta$, with
$|Q| = 15$, $|G| = 75$, and $|P| = 5$. Both costs grow roughly linearly in
$\beta$, since the makespan enters \eqref{eq:obj} linearly and both algorithms
respond by using more QCs and leasing more qubits to shorten the makespan, which
in turn raises the leasing and teleportation costs. The greedy algorithm stays
within about $10 \, \%$ of SA at $\beta = 0$, widening to roughly $8-10 \, \%$
at $\beta = 5$; the greedy algorithm is therefore most competitive in the
cost-dominated regime.

\begin{figure}[htbp]
\centering
\includegraphics[width=0.4\textwidth]{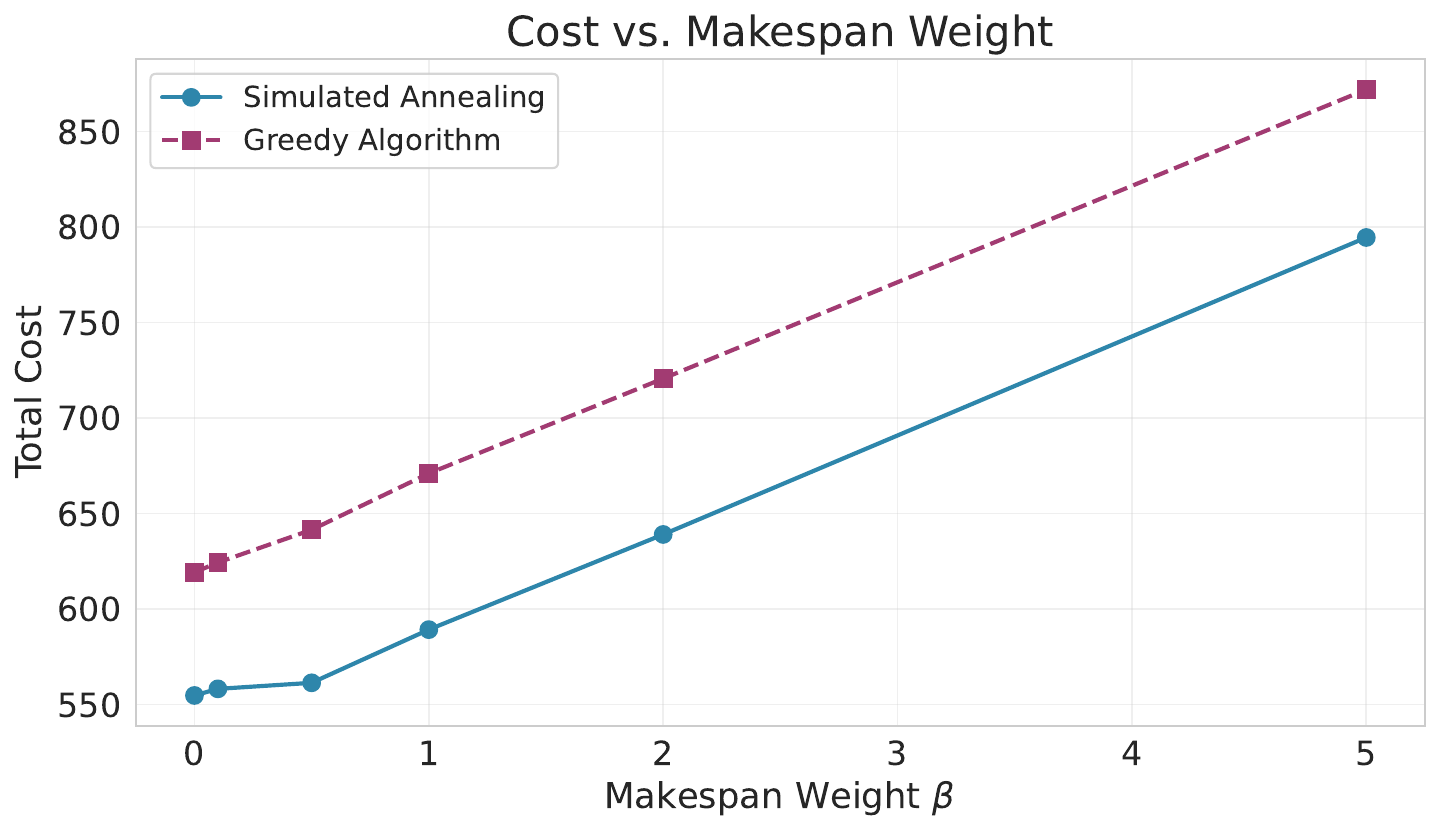}
\caption{The figure shows the total cost versus makespan weight $\beta$ for 
the greedy algorithm and SA.}
\label{fig:cost_vs_beta}
\end{figure}

Fig. \ref{fig:time_vs_beta} shows the execution times for the same instances.
Both algorithms maintain nearly constant execution times regardless of $\beta$,
as the running time is governed by the problem size rather than by the weighting
of the objective. The speedup ratio remains near $8\times$ across all values of
$\beta$.

\begin{figure}[htbp]
\centering
\includegraphics[width=0.4\textwidth]{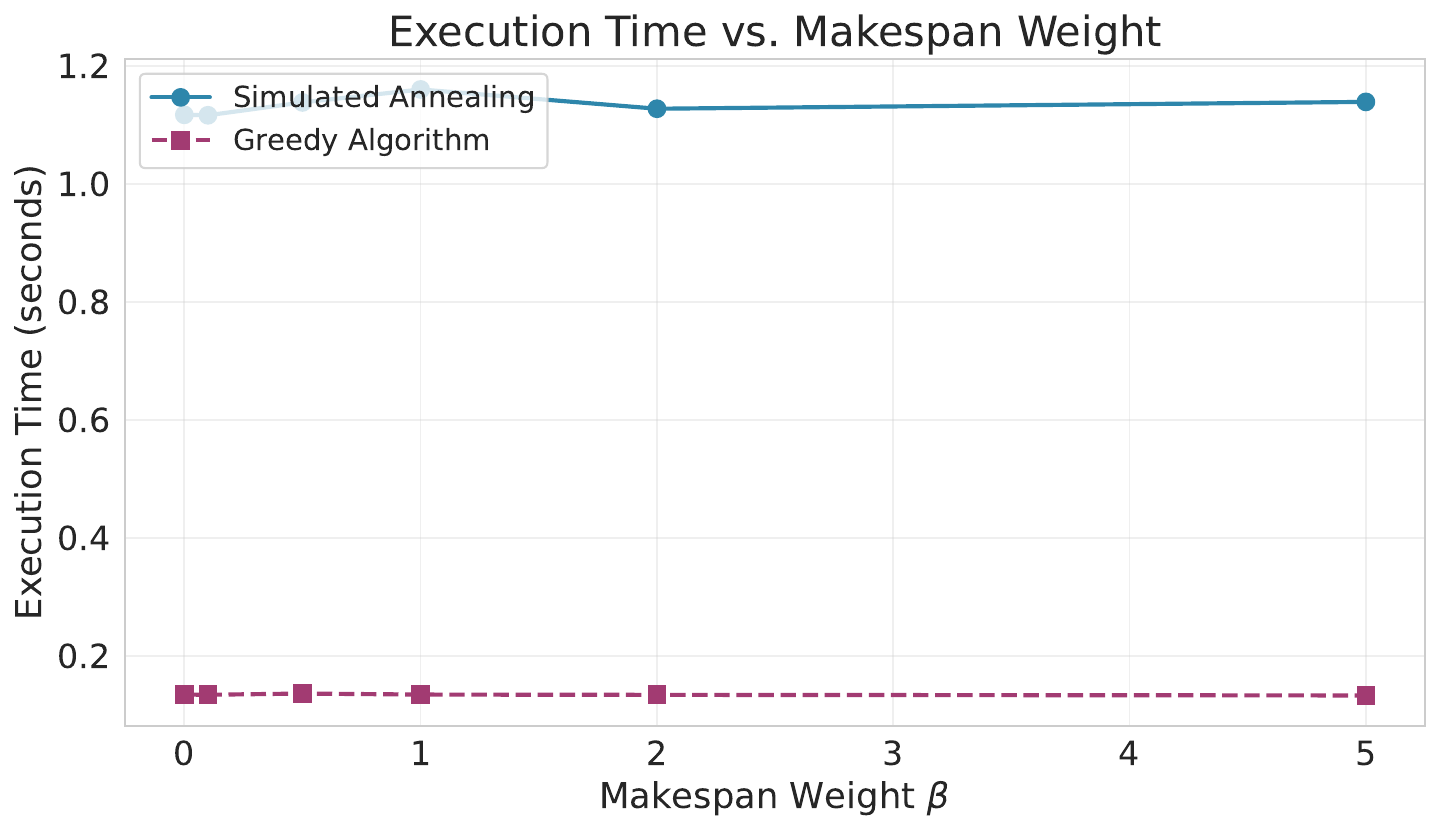}
\caption{The figure shows the execution time versus makespan weight $\beta$ 
for the greedy algorithm and SA.}
\label{fig:time_vs_beta}
\end{figure}

\subsection{Results for the Special Cases of Section \ref{sec:special-cases}}

\subsubsection{Case 1 (Section \ref{subsec:single-qc})}
We compare the greedy algorithm, SA, and the optimal algorithm (topological sort
plus list scheduling on $p_0$) for instances with one unlimited-capacity QC and
$3$ limited-capacity QCs. The instances are drawn so that Condition S2 of
Section \ref{SSSC:sufficient:conditions:for:optimality:of:centralized:solution}
holds -- $p_0$ has the cheapest leasing and gate execution costs, and
teleportation out of $p_0$ is prohibitive relative to the achievable savings --
so centralized execution at $p_0$ is provably optimal and the optimal curve is a
true lower bound. Fig. \ref{fig:case1_cost} shows that all three algorithms
achieve essentially the same cost across all circuit sizes, growing linearly
from about $60$ at $|G| = 25$ to about $145$ at $|G| = 100$. This is the
expected behavior in this regime: because teleportation out of $p_0$ cannot pay
for itself, every qubit is best kept at $p_0$, and both heuristics discover this
centralized solution. The practical implication is that when the sufficient
conditions of
Section \ref{SSSC:sufficient:conditions:for:optimality:of:centralized:solution}
hold, the greedy algorithm attains optimality rather than merely approaching it.
Fig. \ref{fig:case1_time} shows that the optimal algorithm is two to three
orders of magnitude faster than the two heuristics, since it constructs the
centralized solution directly instead of searching; the greedy algorithm is in
turn roughly an order of magnitude faster than SA.

\begin{figure}[htbp]
\centering
\includegraphics[width=0.4\textwidth]{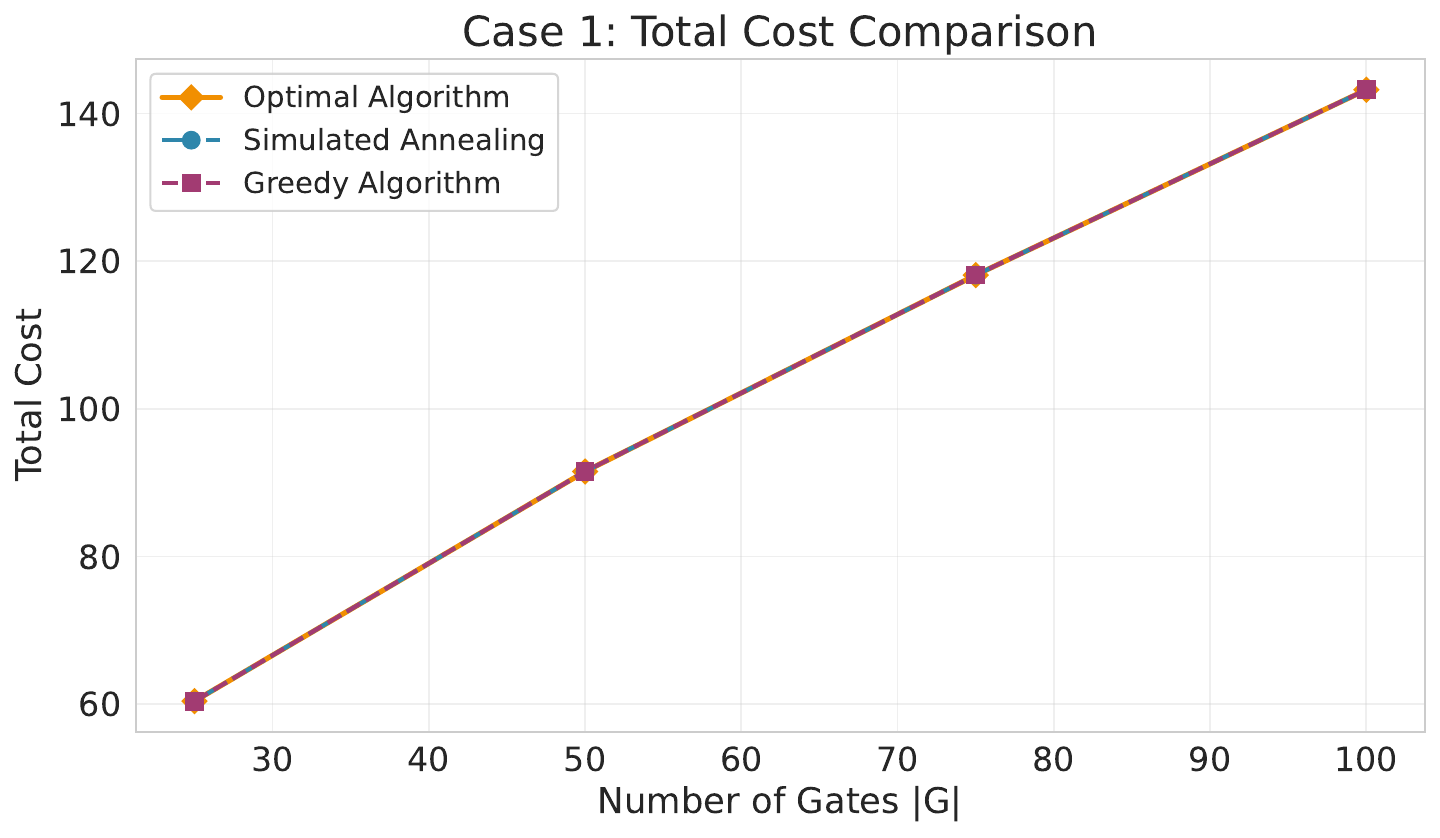}
\caption{The figure compares the total costs under the optimal, greedy, and 
SA algorithms for Case 1.}
\label{fig:case1_cost}
\end{figure}

\begin{figure}[htbp]
\centering
\includegraphics[width=0.4\textwidth]{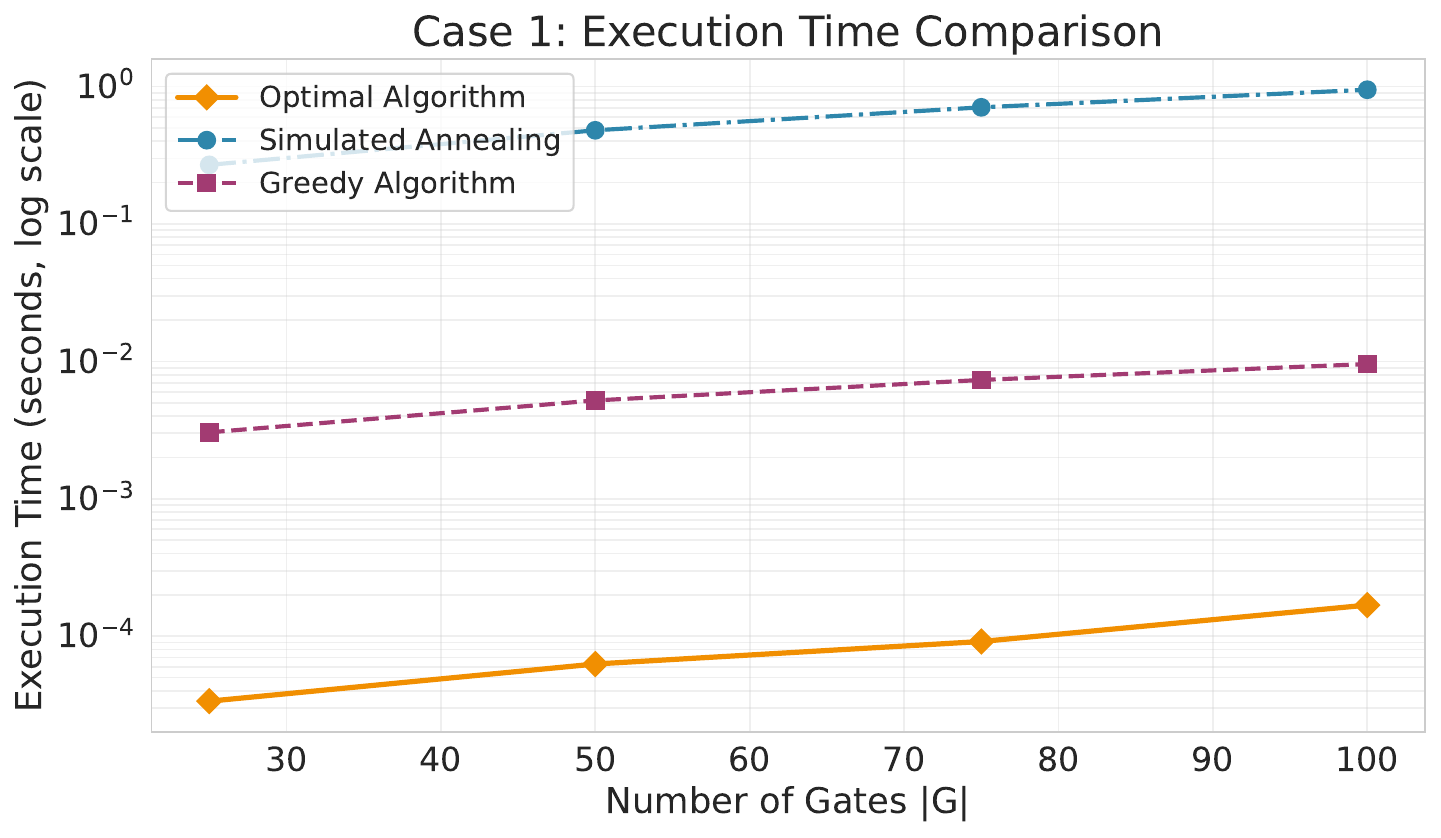}
\caption{The figure compares the execution times of the optimal, greedy, and 
SA algorithms for Case 1.}
\label{fig:case1_time}
\end{figure}

\subsubsection{Case 2 (Section \ref{subsec:homogeneous-zero-cost})}
For this special case, we set all QCs to have identical parameters: $s = 10$,
$e = 5$, $c^s = c^e = 2$, $c^g = 1$, $E(p', p) = 0$, and $A_{p,g} = 1$ for all
$p, p', g$, with $|P| = 5$, $|Q| = 15$ qubits, and $|G| = 50$ gates.
Fig. \ref{fig:case2_cost} compares the total costs for different values of
$\beta$. All three costs increase in $\beta$ as the algorithms prioritize
parallelization. The gap between the heuristics and the optimal solution is
largest at $\beta = 0$, where both heuristics are about $20-25 \, \%$ above
optimal, and it narrows as $\beta$ grows, to roughly $8 \, \%$ for the greedy
algorithm and under $5 \, \%$ for SA at $\beta = 2$.

This ordering deserves comment, since it is the opposite of what one might
expect. At $\beta = 0$ the objective rewards concentrating the circuit on as few
QCs as possible, which minimizes the peak execution occupancy $r^e_p$ that is
charged in \eqref{eq:obj}; the optimal algorithm achieves this by construction
through its exhaustive search over $k$, whereas both heuristics place each gate
where it is locally cheapest and thereby spread the circuit more widely than is
warranted. As $\beta$ grows, the makespan term rewards exactly the spreading
that the heuristics perform anyway, so their solutions move toward the optimal
one. Fig. \ref{fig:case2_time} shows that the optimal algorithm is an order of
magnitude faster than both heuristics, since for identical QCs it need only
evaluate $|P|$ candidate values of $k$.
\begin{figure}[htbp]
\centering
\includegraphics[width=0.4\textwidth]{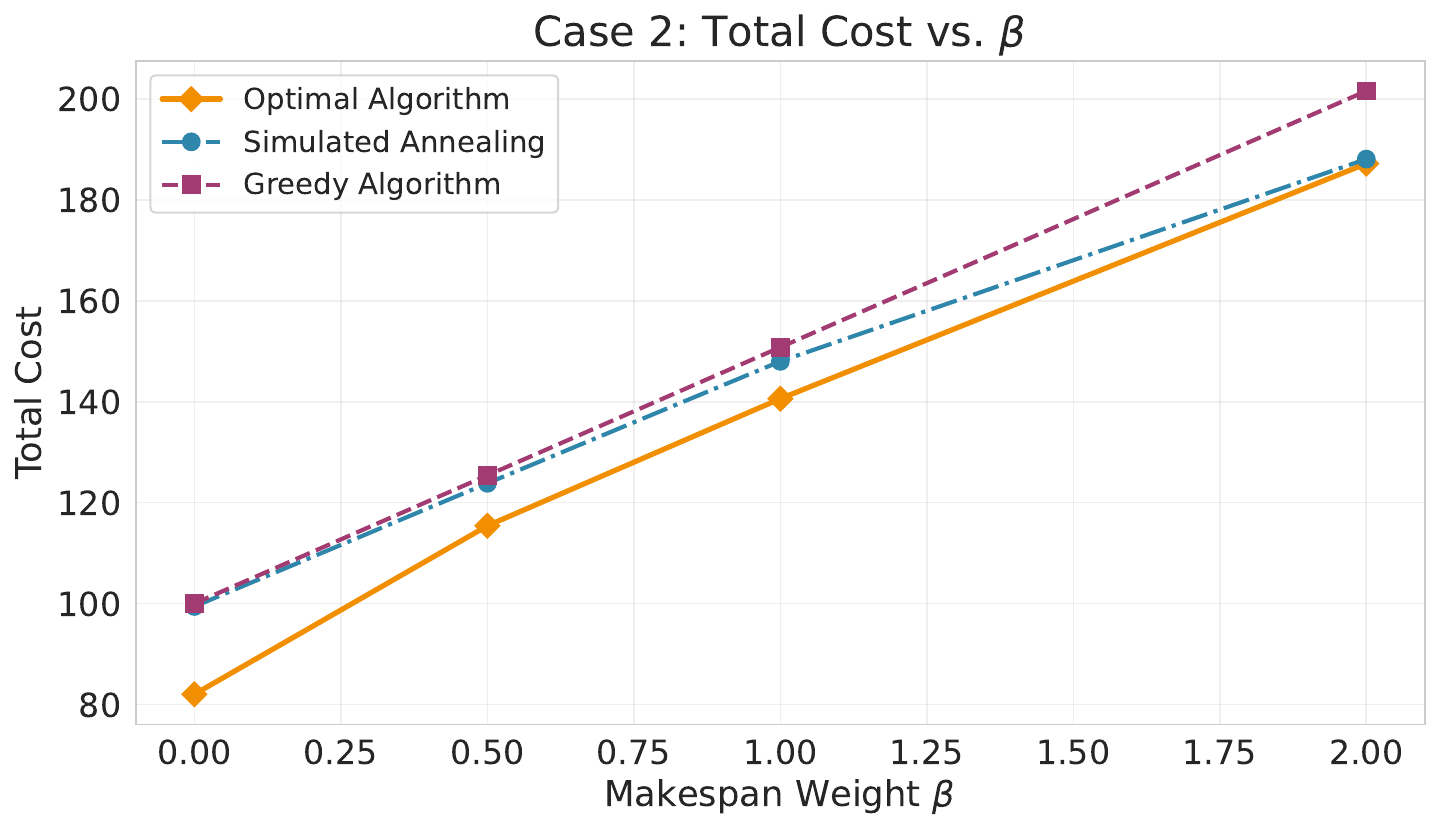}
\caption{The figure compares the total costs under the optimal, greedy, and 
SA algorithms for Case 2.}
\label{fig:case2_cost}
\end{figure}

\begin{figure}[htbp]
\centering
\includegraphics[width=0.4\textwidth]{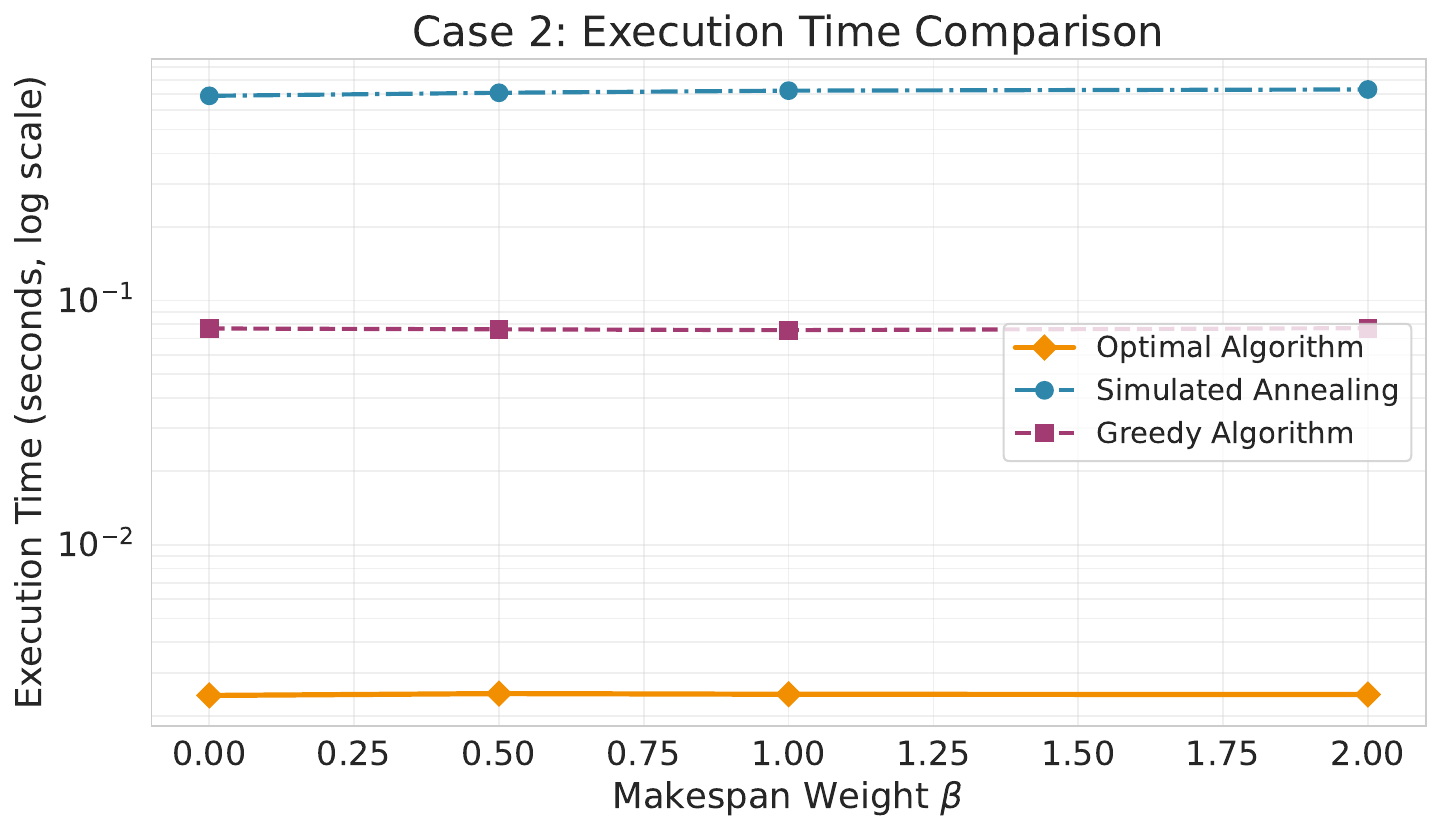}
\caption{The figure compares the execution times of the optimal, greedy, and 
SA algorithms for Case 2.}
\label{fig:case2_time}
\end{figure}

\subsubsection{Case 3 (Section \ref{subsec:chain-sequential})}
We generate sequential circuits with $|G|$ single-qubit gates on one qubit,
forming a linear chain, on a network of $|P| = 5$ QCs with heterogeneous costs.
Figs. \ref{fig:case3_cost} and \ref{fig:case3_time} compare the three
algorithms. As in Case 1, the three cost curves coincide across all circuit
sizes, growing linearly from about $40$ at $|G| = 20$ to about $145$ at
$|G| = 120$: the greedy algorithm attains the optimum found by Dijkstra's
algorithm on the time-expanded graph. The reason is structural. For a
single-qubit chain, the qubit occupies exactly one QC at any time and each gate
is placed where the sum of its execution cost and the teleportation cost from
the current QC is smallest, which is precisely the edge cost minimized along the
shortest path of Algorithm \ref{alg:singlequbit-path}. The greedy criterion is
therefore not an approximation of the optimal rule in this case, but coincides
with it. Fig. \ref{fig:case3_time} shows that the greedy algorithm is the
fastest of the three and its running time is nearly independent of $|G|$, while
the optimal algorithm's running time grows with the size of the time-expanded
graph and overtakes the greedy algorithm beyond $|G| \approx 40$. SA is the
slowest by two to three orders of magnitude.

\begin{figure}[htbp]
\centering
\includegraphics[width=0.4\textwidth]{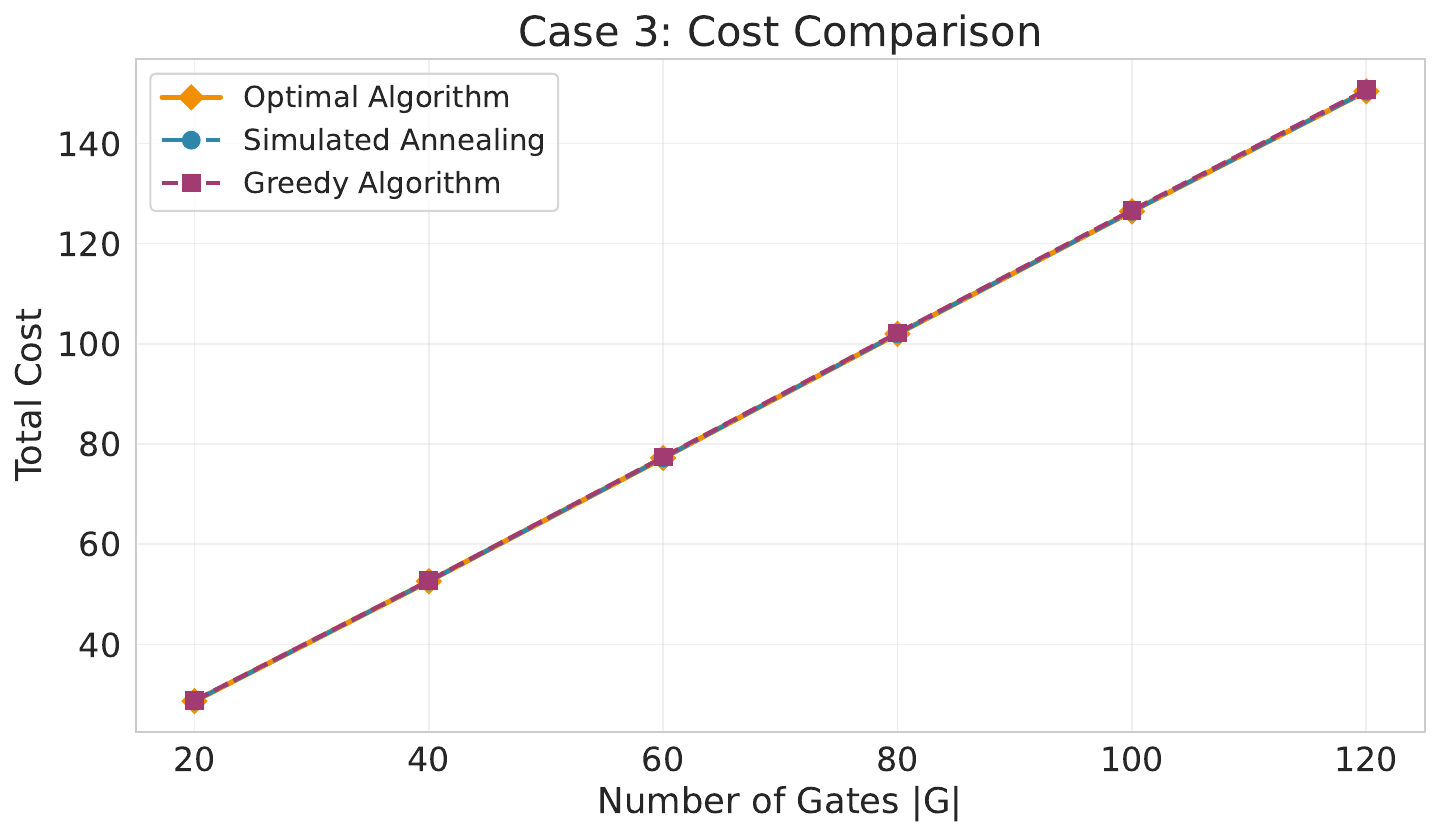}
\caption{The figure compares the total costs under the optimal, greedy, and 
SA algorithms for Case 3.}
\label{fig:case3_cost}
\end{figure}

\begin{figure}[htbp]
\centering
\includegraphics[width=0.4\textwidth]{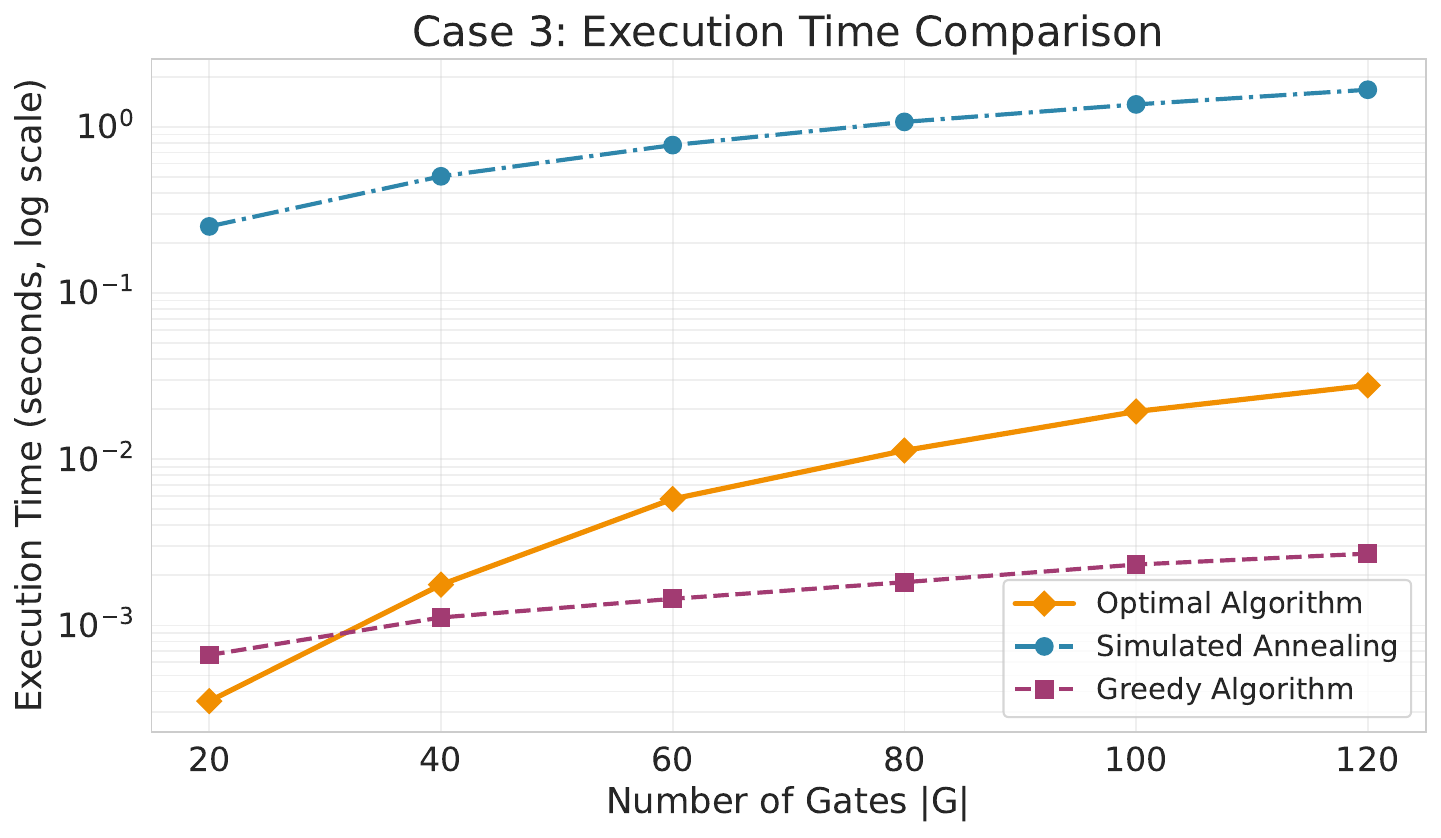}
\caption{The figure compares the execution times of the optimal, greedy, and 
SA algorithms for Case 3.}
\label{fig:case3_time}
\end{figure}

\subsection{Scalability Analysis}

To evaluate the scalability to large problem instances, we generated circuits 
with up to $|Q| = 30$ qubits, $|G| = 500$ gates, and $|P| = 8$ QCs. 
Fig. \ref{fig:scalability_time} shows the execution times versus the number 
of gates, $|G|$, on a log-log plot. The greedy algorithm's execution time 
grows approximately as $O(|G|^{1.15})$, close to theoretical complexity, 
while SA's execution time grows as $O(|G|^{1.9})$. For the largest instance 
($|Q| = 30$, $|G| = 500$, $|P| = 8$), the greedy algorithm completes in 
under $10$ seconds, while SA requires significantly longer. Memory usage 
remains manageable for the greedy algorithm even for the largest instances.

\begin{figure}[htbp]
\centering
\includegraphics[width=0.4\textwidth]{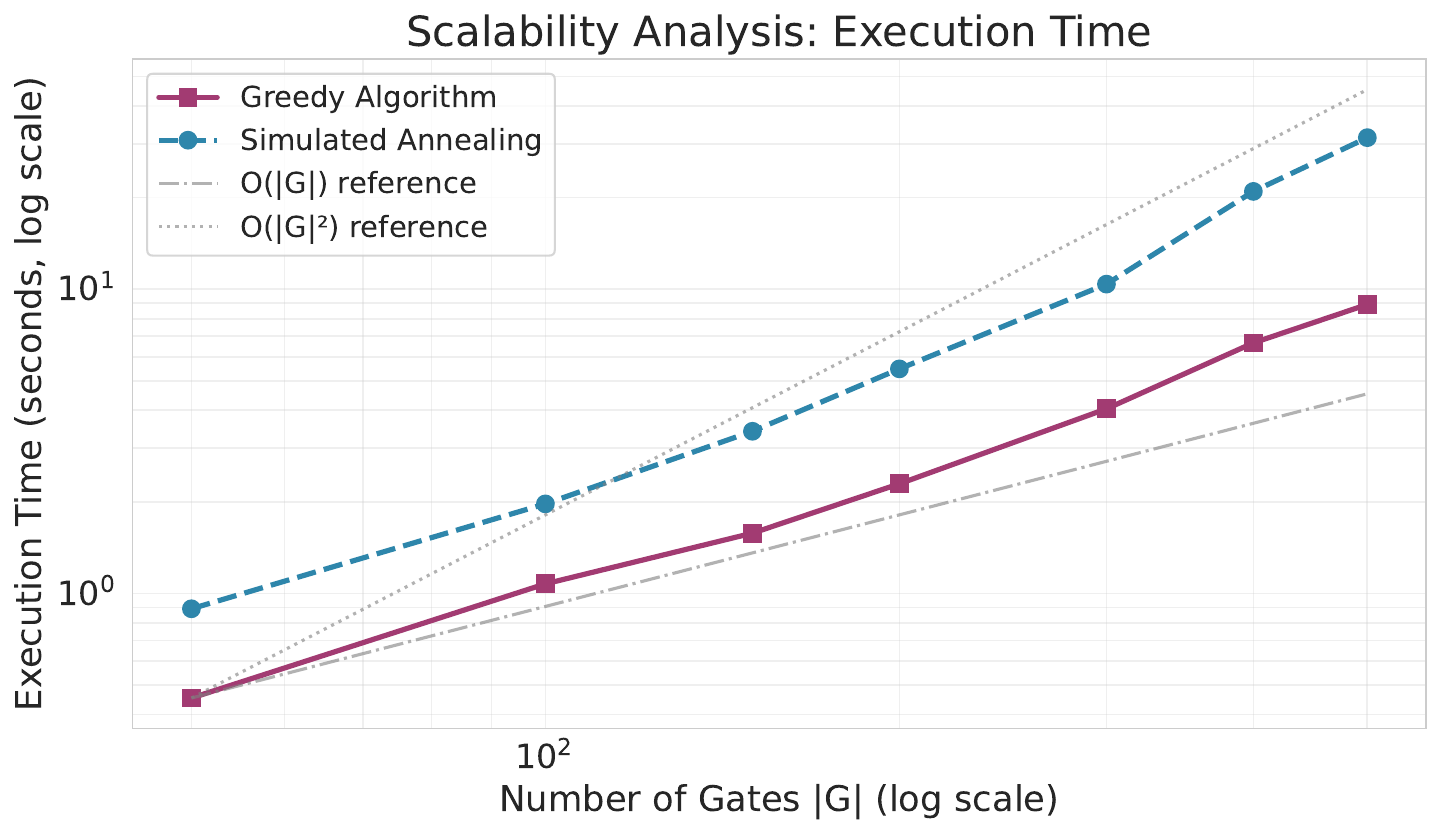}
\caption{The figure compares the scalability of the greedy and SA algorithms 
by showing their execution times on a log-log scale.}
\label{fig:scalability_time}
\end{figure}

Fig. \ref{fig:scalability_cost} shows the corresponding total costs. Both grow
approximately linearly with $|G|$. The greedy algorithm's cost exceeds SA's by
about $12 \, \%$ at $|G| = 100$, widening to roughly $19 \, \%$ at $|G| = 500$.
The gap grows slowly and sub-linearly, so the greedy algorithm's solution
quality degrades only gradually as instances scale, while its computational
advantage is retained.

\begin{figure}[htbp]
\centering
\includegraphics[width=0.4\textwidth]{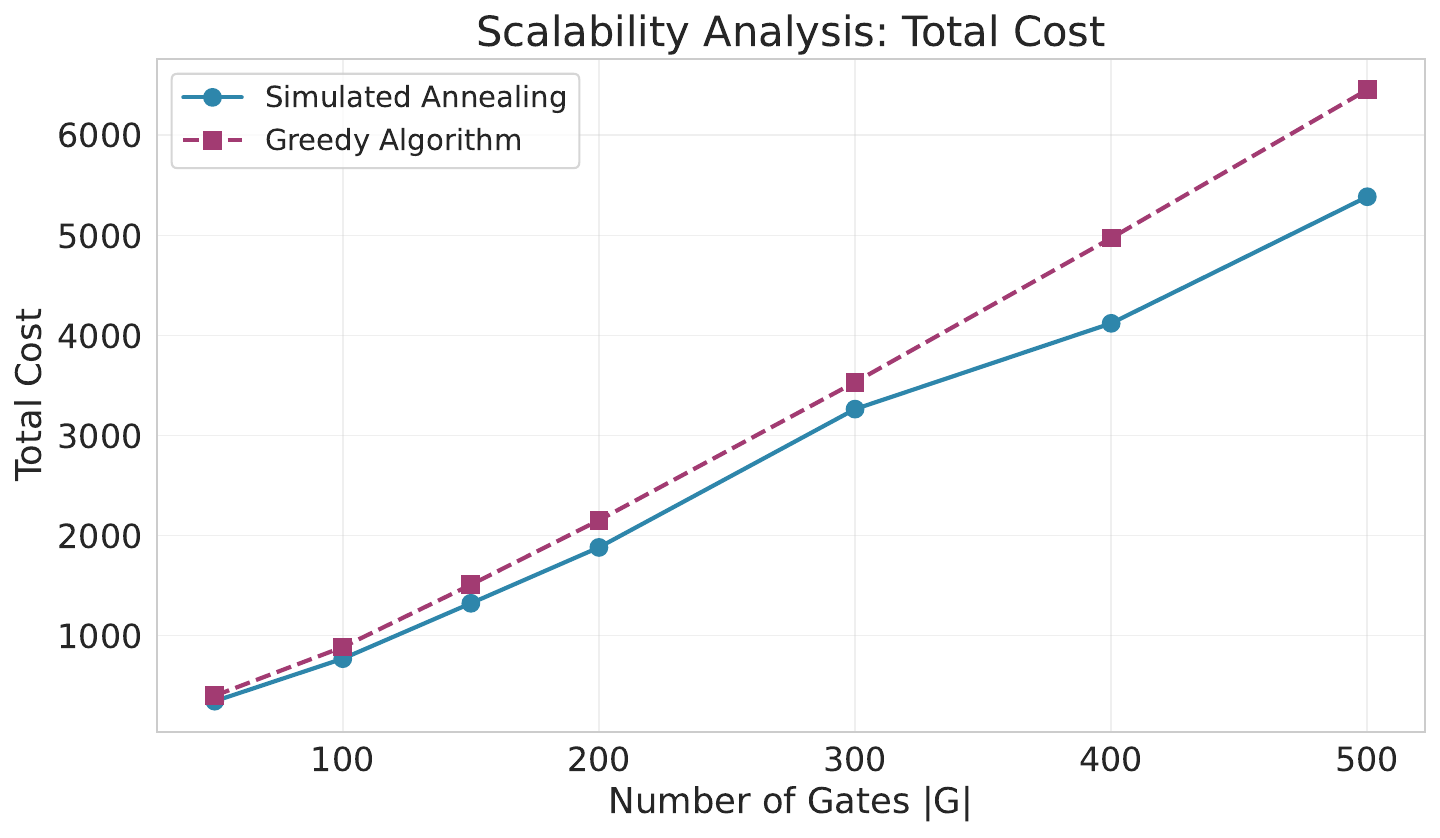}
\caption{The figure shows the total costs under the greedy and SA algorithms.}
\label{fig:scalability_cost}
\end{figure}

\subsection{Parameter Sensitivity Analysis}
We vary the weight parameters $\alpha_{\text{lease}}$ and
$\alpha_{\text{comm}}$ in the QC scoring function \eqref{eq:qc-score} to study
their impact on solution quality for three circuit types, with $|Q| = 20$,
$|G| = 100$, $|P| = 5$, and $\beta = 1.0$. The costs reported here are those of
the greedy QC selection and qubit allocation of
Algorithm \ref{alg:greedy-qc-selection} without the local search refinement of
Algorithm \ref{alg:local-search}, so that the figure isolates the effect of the
scoring weights; each curve is normalized within an instance before averaging.

Fig. \ref{fig:alpha_sensitivity} shows that all three circuit types favor a
low-to-moderate leasing weight, and that the cost rises monotonically once
$\alpha_{\text{lease}}$ exceeds about $0.5$. This is a consequence of the
instance family of Table \ref{table:instance_params}: teleportation costs are
drawn from Uniform$(10, 30)$ while leasing costs are drawn from
Uniform$(1, 5)$, so communication dominates the objective and a QC ranking that
ignores connectivity is penalized. The circuit types are separated by how
sharply they are penalized. The single-qubit dominated circuit ($60 \, \%$
single-qubit gates) is the most tolerant of a high leasing weight, attaining its
minimum near $\alpha_{\text{lease}} \approx 0.25$ and degrading by about
$10 \, \%$ at $\alpha_{\text{lease}} = 1$; with few two-qubit gates there is
little inter-qubit communication to mismanage. The two-qubit dominated circuit
($70 \, \%$ two-qubit gates) is the most sensitive, lying above the other two
curves over most of the range and degrading by roughly $10 \, \%$ from its
minimum, since every two-qubit gate whose operands are separated incurs a
teleportation. The balanced circuit lies between the two. Overall the cost
varies by about $10-13 \, \%$ across the parameter range, so the choice of
$\alpha_{\text{lease}}$ is worth tuning, and a value in $[0, 0.5]$ is a safe
default for communication-dominated instances.
\begin{figure}[htbp]
\centering
\includegraphics[width=0.4\textwidth]{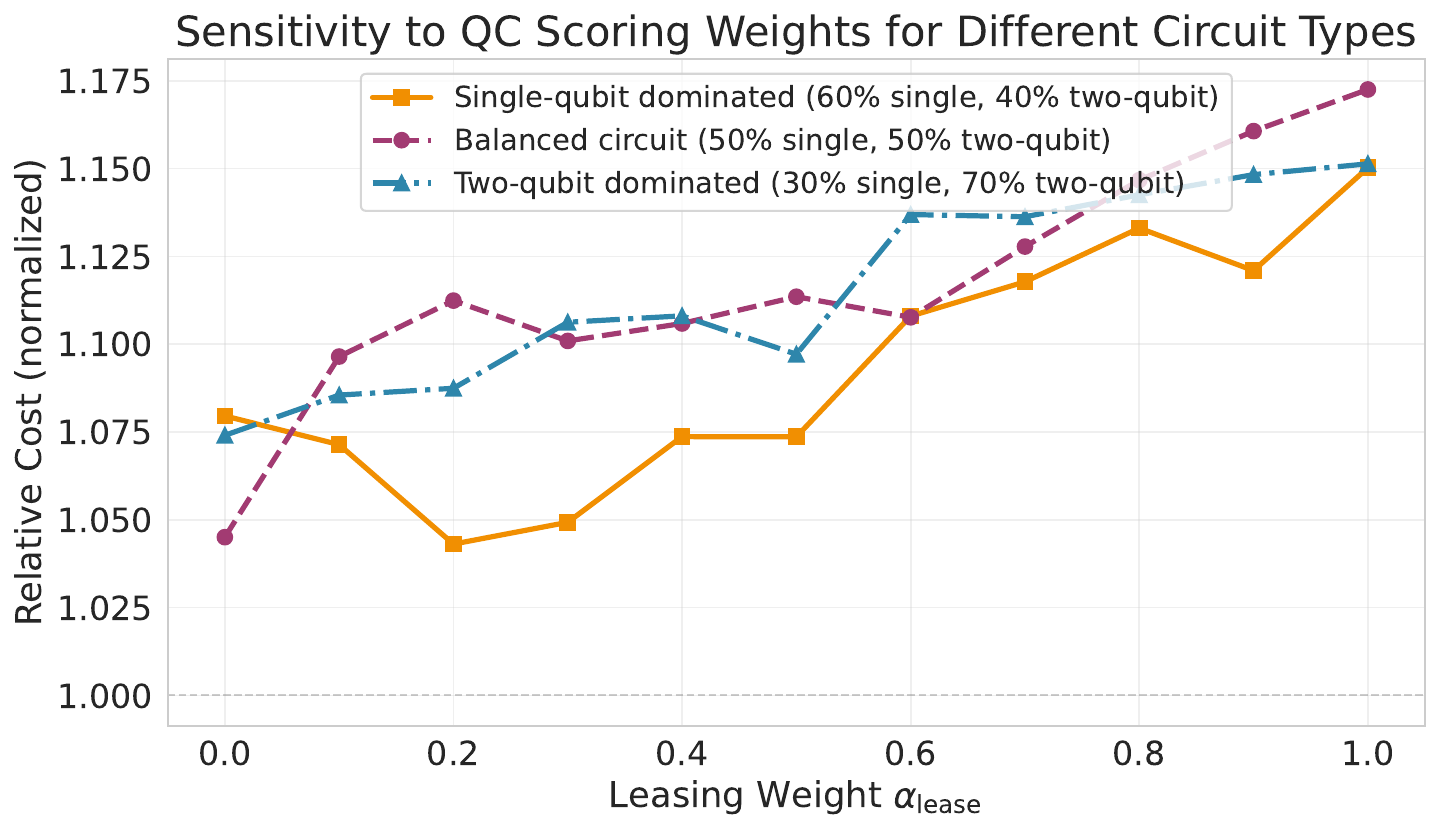}
\caption{The figure shows the sensitivity to QC scoring weights for different 
circuit types.}
\label{fig:alpha_sensitivity}
\end{figure}

\subsection{Summary of Numerical Results}
Table~\ref{tab:summary} summarizes the key performance metrics across all 
experiments. On general problem instances, the greedy algorithm produces 
solutions within $15 \, \%$ of SA, with the gap narrowing to a few percent for 
small circuits and for large numbers of QCs. In the special cases for which 
polynomial-time optimal algorithms are available, its behavior depends on the 
structure of the case: for Case 1 (unlimited-capacity QC) and Case 3 
(sequential circuits) it attains the optimal solution, whereas for Case 2 
(homogeneous QCs with zero teleportation cost) it is $8-25 \, \%$ above 
optimal. The two cases in which it is optimal are those in which the greedy 
placement rule coincides with the optimal rule-- centralized execution when 
teleportation out of $p_0$ cannot pay for itself, and cheapest-next-hop 
placement along a single-qubit chain-- while Case 2 is the case in which the 
objective rewards concentrating the circuit on few QCs, which a locally greedy 
gate placement does not do. The greedy algorithm is $8-40\times$ faster than 
SA, with the largest speedups on the instances with fewest QCs, and its 
execution time tracks the $O(|G|)$ reference closely, scaling to circuits with 
$500$ gates in seconds. Solution quality degrades only gradually with problem 
size: the gap to SA grows from about $12 \, \%$ at $|G| = 100$ to about 
$19 \, \%$ at $|G| = 500$. The choice of the scoring weights matters more than 
the default suggests-- cost varies by $10-13 \, \%$ across the range of 
$\alpha_{\text{lease}}$, and for the communication-dominated instances of 
Table~\ref{table:instance_params} a value in $[0, 0.5]$ is preferable to the 
balanced setting. Thus, for real-time distributed quantum computing scenarios, 
which require fast decisions, the greedy algorithm offers a good balance of 
solution quality and computational efficiency, and is optimal in structurally 
favorable cases.

\begin{table}[!ht]
\caption{The table shows a summary of the algorithm performance across all 
experiments.}
\label{tab:summary}
\centering
\scriptsize
\begin{tabular}{|l|r|r|r|}
\hline
\rowcolor[HTML]{EFEFEF}
\textbf{Metric} & \textbf{Greedy} & \textbf{SA} & \textbf{Optimal} \\ \hline
\multicolumn{4}{|c|}{\cellcolor[HTML]{D6EAF8}\textbf{General Instances}} \\ \hline
Avg. gap to SA & 2-15\% & --- & --- \\ \hline
Avg. exec. time & $< 1$ s & $\approx 10$ s & --- \\ \hline
Speedup over SA & $8-40\times$ & $1\times$ & --- \\ \hline
\multicolumn{4}{|c|}{\cellcolor[HTML]{D6EAF8}\textbf{Case 1: Single Unlimited QC}} \\ \hline
Avg. gap to opt. & $< 2$\% & $< 2$\% & 0\% \\ \hline
Avg. exec. time & $\approx 0.1$ s & $\approx 1$ s & $< 0.01$ s \\ \hline
\multicolumn{4}{|c|}{\cellcolor[HTML]{D6EAF8}\textbf{Case 2: Homogeneous, Zero Teleportation}} \\ \hline
Avg. gap to opt. & 8-25\% & 3-25\% & 0\% \\ \hline
Avg. exec. time & $\approx 0.02$ s & $\approx 0.1$ s & $\ll 0.01$ s \\ \hline
\multicolumn{4}{|c|}{\cellcolor[HTML]{D6EAF8}\textbf{Case 3: Sequential Gates}} \\ \hline
Avg. gap to opt. & $< 2$\% & $< 2$\% & 0\% \\ \hline
Avg. exec. time & $\approx 0.001$ s & $\approx 0.5$ s & $\approx 0.01$ s \\ \hline
\multicolumn{4}{|c|}{\cellcolor[HTML]{D6EAF8}\textbf{Large-Scale Instances}} \\ \hline
Max size tested & 30Q, 500G, 8P & 30Q, 500G, 8P & --- \\ \hline
\end{tabular}
\end{table}

\section{Conclusions and Future Work}
\label{sec:conclusion}
We presented a comprehensive ILP formulation for the JQLQCD problem and showed 
that it is NP-complete. Also, we identified several special cases in which the 
problem can be optimally solved in closed form or via polynomial-time 
algorithms, including those in which (A) there is an unlimited capacity QC in 
a heterogeneous network, (B) homogeneous QCs with zero teleportation cost, (C) 
chain topology with sequential gates, (D) independent subcircuits with 
partitioned resources, (E) infinite resources with makespan minimization only, 
and (F) a tree-structured circuit with an arbitrary QC network. We proposed a 
greedy algorithm with local search refinement for solving general instances of 
the JQLQCD problem. Using extensive numerical computations, we demonstrated 
that our proposed greedy algorithm achieves solutions within $15 \, \%$ of SA 
on general problem instances, while being $8-40\times$ faster, and that it 
attains the optimal solution for cases (A) and (C), in which the greedy 
placement rule coincides with the optimal one. For case (B), in which the 
objective rewards concentrating the circuit on few QCs, it is $8-25 \, \%$ 
above optimal. These results show that the greedy algorithm is practical for 
large-scale instances.

Some promising directions for future research are to develop approximation 
algorithms with provable guarantees for the general JQLQCD problem, 
investigate parameterized complexity for bounded structural parameters, and 
enhance the ILP with tight relaxations. Our numerical results suggest a more 
specific question along these lines: characterizing the structural conditions 
under which the greedy placement rule is exactly optimal, as it proved to be 
for cases (A) and (C), would delineate the instances on which no search is 
needed at all. A second question raised by our results is how to correct the 
tendency of a locally greedy gate placement to spread a circuit across more 
QCs than the objective warrants, which is the source of the remaining gap in 
case (B). Another open problem is to study extended formulations of the 
JQLQCD problem that incorporate fidelity tracking, dynamic circuit execution, 
multi-objective optimization, and stochastic programming. Another direction 
for future research is to design specialized strategies for solving the 
problem in the cases in which the quantum circuit to be run by the agent is a 
quantum variational algorithm circuit, QEC circuit, circuit for quantum 
simulation, etc. An important open problem is to experimentally validate the 
results of this paper via a real quantum network testbed. Finally, an 
interesting avenue for future research is to explore heterogeneous qubit 
technologies, error and fault tolerance, energy minimization, and dynamic 
pricing mechanisms for the JQLQCD problem.





\label{sec:conc}
\bibliographystyle{IEEEtran}
\bibliography{references}

\end{document}